\newlength{\bibitemsep}\setlength{\bibitemsep}{.2\baselineskip plus .05\baselineskip minus .05\baselineskip}
\newlength{\bibparskip}\setlength{\bibparskip}{0pt}
\let\oldthebibliography\thebibliography
\renewcommand\thebibliography[1]{%
  \oldthebibliography{#1}%
  \setlength{\parskip}{\bibitemsep}%
  \setlength{\itemsep}{\bibparskip}%
}
\definecolor{darkblue}{rgb}{0.0,0.0,0.5}
\newcommand{\N}{\mathbb{N}}
\newcommand{\C}{\mathbb{C}}
\newcommand{\Z}{\mathbb{Z}}
\newcommand{\R}{\mathbb{R}}
\newcommand{\ri}{\mathrm{i}}
\newcommand{\rd}{\mathrm{d}}
\newcommand{\re}{\mathrm{e}}
\newcommand{\mass}{\mathrm{m}}
\newcommand{\Span}{\mathrm{span}}
\newcommand{\Ran}{\mathrm{Ran}\hspace{0.3mm}}
\newcommand{\Ker}{\mathrm{Ker}\hspace{0.3mm}}
\newcommand{\ret}{\mathrm{ret}}
\newcommand{\adv}{\mathrm{adv}}
\newcommand{\rout}{\mathrm{out}}
\newcommand{\rin}{\mathrm{in}}
\newcommand{\fin}{\mathrm{fin}}
\newcommand{\ras}{\mathrm{as}}
\newcommand{\fr}{\mathrm{fr}}
\newcommand{\rmod}{\mathrm{mod}}
\newcommand{\rD}{\mathrm{D}}
\newcommand{\rint}{\mathrm{int}}
\newcommand{\const}{\mathrm{const}}
\DeclareMathOperator{\sgn}{sgn}
\DeclareMathOperator{\Texp}{Texp}
\DeclareMathOperator{\aTexp}{\overline{T}exp}
\DeclareMathOperator{\T}{T}
\DeclareMathOperator{\aT}{\overline{T}}
\DeclareMathOperator{\sd}{sd}
\DeclareMathOperator*{\slim}{s-lim}
\DeclareMathOperator*{\wlim}{w-lim}
\DeclareMathOperator{\PV}{P}
\newcommand{\cL}{\mathcal{L}}
\newcommand{\cS}{\mathcal{S}}
\newcommand{\cD}{\mathcal{D}}
\newcommand{\cH}{\mathcal{H}}
\newcommand{\Fa}{\mathcal{F}}
\newcommand{\id}{\mathbbm{1}}
\renewcommand{\Re}{\mathrm{Re}\hspace{0.5mm}}
\renewcommand{\Im}{\mathrm{Im}\hspace{0.5mm}}
\newcommand{\supp}{\mathrm{supp}}
\newcommand{\zerop}{0}
\newcommand{\normord}[1]{:\mathrel{#1}:}
\newcommand{\mP}[1]{\frac{\rd^4 #1}{(2\pi)^4}}
\newcommand{\mH}[2]{\rd\mu_{#1}(#2)}
\newcommand{\mHm}[1]{\rd\mu_{\mass}(#1)}
\newcommand{\mHO}[1]{\rd\mu_{0}(#1)}
\newcommand{\Hm}{H_\mass}
\newcommand{\HO}{H_0}
\newcommand{\F}[1]{\tilde{#1}}
\newcommand{\ssl}{{\scriptscriptstyle <}}
\newcommand{\ssg}{{\scriptscriptstyle >}}
\newcommand{\feq}{\,\substack{\textrm{{\tiny(formally)}} \\=}\,}
\newtheorem{thm}{Theorem}
\newtheorem{dfn}[thm]{Definition}
\newtheorem{lem}[thm]{Lemma}
\newtheorem{cnj}[thm]{Conjecture}
\numberwithin{equation}{section}
\numberwithin{thm}{section}
\title{Infrared problem in perturbative quantum field theory}
\author{
Pawe{\l} Duch
\\
Max-Planck Institute for Mathematics in the Sciences\\
Inselstr. 22, 04103 Leipzig, Germany\\
Institut f\"ur Theoretische Physik, Universit\"at Leipzig\\
Br\"uderstr.\ 16, 04103 Leipzig, Germany\\
pawel.duch@mis.mpg.de
}
\date{\today}
\begin{document}

\maketitle

\begin{abstract}
We propose a mathematically rigorous construction of the scattering matrix and the interacting fields in models of relativistic perturbative quantum field theory with massless fields and long-range interactions. We consider quantum electrodynamics and a certain model of interacting scalar fields in which the standard definition of the scattering matrix is not applicable because of the infrared problem. We modify the Bogoliubov construction using the ideas of Dollard, Kulish and Faddeev. Our modified scattering matrix and modified interacting fields are constructed with the use of the adiabatic limit which is expected to exist in arbitrary order of perturbation theory. In the paper we prove this assertion in the case of the first- and the second-order corrections to the modified scattering matrix and the first-order corrections to the modified interacting fields. We study the physical properties of our construction. We conclude that the electrons and positrons are always surrounded by irremovable clouds of photons. Moreover, the physical energy-momentum operators do not coincide with the standard ones and their joint spectrum does not contain the mass hyperboloid.
\end{abstract}

\newpage
\setcounter{tocdepth}{2}
\tableofcontents
\newpage


\section{Introduction}

Quantum electrodynamics (QED) is one of the best tested theory known in physics. The success of QED is to large extent based on very accurate theoretical predictions of the anomalous magnetic moments of leptons and transition frequencies between various energy levels in light hydrogenlike atoms. In contrast to other sectors of the standard model the scattering experiments do not constitute the most stringent tests of QED and, as a matter of fact, the state of the art theoretical description of the scattering processes in QED is still far from satisfactory. One of the long-standing fundamental open problems in QED is the definition of the scattering operator. Because of the long-range character of interactions mediated by massless photons the standard method of constructing this operator is plagued by infrared (IR) divergences~\cite{bloch1937note}. In view of these problems, one usually abandons the definition of the scattering matrix and computes only the so-called inclusive cross sections~\cite{yennie1961infrared,weinberg1965infrared}. Although, this pragmatic approach is sufficient for most applications, it is not fully satisfactory. The physical interpretation of the inclusive cross sections is obscure since their construction relies on the use of charged states with sharp momenta which do not exist in the theory without the IR regulator. Moreover, the inclusive cross sections provide only partial information about the scattering process. There are processes that cannot be described without using the scattering matrix. An example of such a process is a shift of a trajectory of a charged particle moving in a weak low-frequency electromagnetic field~\cite{staruszkiewicz1981gauge,herdegen2012infrared}. It would be also challenging to study a quantum analogs of the memory effects predicted in classical electrodynamics~\cite{bieri2013electromagnetic,garfinkle2017memory,strominger2017lectures} without having a full scattering theory in QED at hand. The inclusive cross sections are completely insensitive to the IR degrees of freedom. Yet such degrees of freedom can be of physical importance. Indeed, Hawking, Perry and Strominger~\cite{hawking2016soft} have argued recently that it is possible to solve the black hole information paradox using the IR degrees of freedom of the gravitational field. The significance of the soft degrees of freedom has been also recently observed by a number of authors who studied the relation between the Weinberg soft photon theorem~\cite{weinberg1965infrared} and the so-called large gauge transformations~\cite{gabai2016large,campiglia2015asymptotic} (a more complete list of references can be found in the lecture notes \cite{strominger2017lectures}). The recent revival of interest in the IR degrees of freedom calls for a construction of the scattering matrix in QED in which these degrees of freedom are properly taken into account. 

A strategy for the construction of a IR-finite scattering matrix in QED was suggested long time ago by Kulish and Faddeev~\cite{kulish1970asymptotic}. Their construction is based on the Dollard method~\cite{dollard1964asymptotic} which was originally formulated for the Coulomb scattering in quantum mechanics. The basic idea is to define a modified scattering matrix by comparing the full dynamics of the system with some simple but nontrivial reference dynamics, called the Dollard dynamics. The Kulish-Faddeev construction was reformulated and further developed e.g. in~\cite{papanicolaou1976infrared,jauch1976theory} and more recently in \cite{gomez2016asymptotic,kapec2017infrared,hirai2019dressed}. However, the method of Kulish and Faddeev has never been put on firm mathematical grounds and tested at least in low orders of perturbation theory. In fact, up to our knowledge, a rigorous construction of the scattering matrix in relativistic QED which does not suffer from the ultraviolet (UV) or IR problem has never been given. In the present paper, we attempt to formulate such a construction using the ideas of Kulish, Faddeev, Dollard and insights obtained more recently by Morchio and Strocchi~\cite{morchio2016infrared,morchio2016dynamics} in their analysis of a toy model of QED. Our modified scattering matrix is constructed with the use of the adiabatic limit as in the method developed by Bogoliubov~\cite{bogoliubov1959introduction}. Since QED most likely does not make sense outside perturbation theory~\cite{landau1955point} we formulate our proposal in the perturbative setting. We stress that QED is a well-defined model of quantum field theory (QFT) whose perturbative construction is under full control. What is not understood is the large-time behavior of its dynamics. The main model of our interest is QED. However, we consider also a certain model of interacting scalar fields which we call the scalar model. In principle our method should be also applicable to other models with long-range interactions in the four-dimensional Minkowski space which, like QED and the scalar model, have the interaction vertices containing exactly one massless and two massive fields.

In the paper we propose a method of constructing the following objects in QED and the scalar model:
\begin{itemize}
 \item the Hilbert spaces of asymptotic outgoing and incoming states $\hat{\cH}_{\rout/\rin}(\varrho)$ parameterized by sector measures $\varrho$ (a sector measure is a signed measure on the mass hyperboloid $\Hm$ consisting of four-momenta $p$ such that $p^2=\mass^2$ and $p^0>0$),
 \item a strongly-continuous representation of the translation group $\hat U (a)=\re^{\ri\hat P\cdot a}$ acting in $\hat{\cH}_{\rin/\rout}(\varrho)$, with generators $\hat P^\mu$ satisfying the relativistic spectrum condition,
 \item the translationally-invariant scattering matrix $\hat S:\,\hat{\cH}_{\rin}(\varrho)\to \hat{\cH}_{\rout}(\varrho)$ (the inclusive cross section constructed with the use of $\hat S$ formally coincides with the one obtained using the standard procedure), 
 \item the translationally-covariant retarded and advanced interacting fields $\hat C_{\ret/\adv}(x):\,\hat{\cH}_{\rin/\rout}(\varrho)\to \hat{\cH}_{\rin/\rout}(\varrho)$, where $C$ is a (gauge invariant) polynomial in the basic fields and their derivatives (the interacting fields are local, satisfy the field equations and their Wightman and Green functions coincide with the standard ones).
\end{itemize}
We investigate various properties of the above objects. Below we summarize the results in the case of QED and comment on the case of the scalar model. The measure $\varrho$ is related to the long range tail of the electromagnetic field (the flux of the electric field) 
\begin{equation}\label{eq:long_rang_tail_intro}
 \lim_{r\to\infty} r^2\,\hat F^{\mu\nu}_{\ret/\adv}(x+rn) = -e \int_{H_\mass}\rd\varrho(p)\,
 f^{\mu\nu}(p/\mass;n),
\end{equation}
where $e$ is the elementary charge, $n$ is a normalized spatial four-vector and $f^{\mu\nu}(v;\cdot)$ is the Coulomb field of a unit point-charge moving with constant velocity $v$. Since the long range tail is a classical observable~\cite{buchholz1986gauss} the spaces $\hat{\cH}_{\rin/\rout}(\varrho)$ with different $\varrho$ correspond to different superselection sectors. By the Gauss law $\varrho(\Hm)$ is equal to the total electric charge\footnote{In the paper, by the electric charge we always mean the difference between the number of electrons and positrons.}. Thus, we demand that $\varrho(\Hm)$ is an integer (there is no such constraint in the scalar model). The spaces $\hat{\cH}_{\rin/\rout}(0)$ are called the vacuum sector. They contain a unique vector $\hat\Omega$ such that $\hat P^\mu\hat\Omega=0$, called the vacuum. The photon creation and annihilation operators are denoted by\footnote{$A^\#$ stands for the operator $A$ or its hermitian conjugate. $\hat a^\#(k)$ are (unbounded) operators when smeared with Schwartz test functions.}
\begin{equation}
 \hat a^\#(k):\,\hat{\cH}_{\rin/\rout}(\varrho)
 \to
 \hat{\cH}_{\rin/\rout}(\varrho).
\end{equation}
They satisfy the standard commutation relations, have the energy-momentum transfer on the lightcone and are obtained as the incoming/outgoing LSZ limits of the retarded/advanced interacting electromagnetic field (the massless scalar field in the case of the scalar model). They add or subtract one photon with four-momentum $k$. We also define creators and annihilators of the electrons
\begin{equation}
 \hat b^\#(\eta,\varrho_1;p):\,\hat{\cH}_{\rin/\rout}(\varrho) 
 \to
 \hat{\cH}_{\rin/\rout}(\varrho\pm\varrho_1)
\end{equation}
satisfying the standard (anti-)commutation relations. They depend on the choice a real-valued profile $\eta\in\cS(\R^4)$, $\int\rd^4 x\,\eta(x)=1$, and some measure $\varrho_1$ on $\Hm$, $\varrho_1(\Hm)=1$ (in the case of the scalar model all particles are chargeless and we can set $\varrho_1=0$). The operators $\hat b^\#(\eta,\varrho_1;p)$ do not have the energy-momentum transfer on the mass hyperboloid and do not commute with $\hat a^\#(k)$. They add or subtract one massive particle together with a coherent cloud of photons surrounding it and depending on $\eta$ and $\varrho_1$. For example, for $\hat\Omega\in\cH_{\rin/\rout}(0)$ the state $\hat b^*(\eta,\varrho_1;p)\hat\Omega \in \hat{\cH}_{\rin/\rout}(\varrho_1)$ is an (improper) eigenvector of the photon annihilation operator $\hat a(k)$ with eigenvalue 
\begin{equation}
 e\,\F{\eta}(k)\,\varepsilon(k)\cdot\left(\frac{p}{p\cdot k} - \int_{\Hm} \rd\varrho_1(p')\, \frac{p'}{p'\cdot k}\right),
\end{equation}
where $\F{\eta}$ is the Fourier transform of $\eta$ and $\varepsilon_\mu(k)$ is the polarization vector of the photon annihilated by $\hat a(k)$. There is no mass hyperboloid in the joint spectrum of the energy-momentum operators $\hat P^\mu$ regardless of the choice of the sector.

It is convenient to define the following Hilbert spaces containing sectors with all possible eigenvalues of the electric charge operator $Q$
\begin{equation}
 \hat\cH_{\rout/\rin}(\hat\varrho) := 
 \bigoplus_{q\in\Z} \hat\cH_{\rout/\rin}(\varrho_0+q\varrho_1),
\end{equation}
where $\hat\varrho=\varrho_0+Q\varrho_1$ and $\varrho_0$ and $\varrho_1$ are fixed measures on $\Hm$ such that $\varrho_0(\Hm)=0$ and $\varrho_1(\Hm)=1$ (in the scalar model $Q=0$, $\varrho_0$ need not satisfy the condition $\varrho_0(\Hm)=0$ and $\hat\cH_{\rout/\rin}(\hat\varrho)=\hat\cH_{\rout/\rin}(\varrho_0)$). Since it is difficult to discriminate experimentally sectors with the same electric charge but different long-range tail of the electromagnetic field for most practical application one can restrict attention to the spaces $\hat\cH_{\rout/\rin}(\hat\varrho)$ with one fixed $\hat\varrho$. Our standard choice in the case of the scalar model is $\hat\varrho=0$, which corresponds to the vacuum sector. In the case of QED the standard choice is $\hat\varrho = Q \varrho_{\mass \mathrm{v}}$, where $\mathrm{v}$ is some four-velocity and $\varrho_p$ is the Dirac measure at $p\in\Hm$. The corresponding spaces $\cH_{\rout/\rin}(\hat\varrho)$ contain the super-selection sectors in which the flux of the electric field is independent of the spatial direction in the reference frame of the observer moving with four-velocity~$\mathrm{v}$.

The asymptotic spaces $\hat{\cH}_{\rin/\rout}(\hat\varrho)$ are isomorphic to the standard Fock space $\cH$. However, the isomorphism is non-canonical and depends on the choice of a profile $\eta$. In particular, the physical photon creators and annihilators are not identified with the standard creation and annihilation operators of photons acting in the Fock space. We exploit this isomorphism in our construction. We first define a family of (modified) scattering operators $S_\rmod(\eta)$ and interacting fields $C_{\ret/\adv}(\eta;x)$ in the standard Fock space $\cH$. The operators $S_\rmod(\eta)$ and $C_{\ret/\adv}(\eta;x)$ depend on $\eta$ but satisfy certain compatibility condition. Using this compatibility condition we prove that the corresponding operators $\hat S$ and $\hat C_{\ret/\adv}(x)$ acting in $\hat{\cH}_{\rin/\rout}(\hat\varrho)$ are independent of choice a profile $\eta$ used to identify $\cH$ and $\hat{\cH}_{\rin/\rout}(\hat\varrho)$ (we keep the measure $\hat\varrho$ determining the superselection sector fixed).

Our modified scattering matrix corresponds roughly to the following expression
\begin{equation}\label{eq:S_mod_formal_intro}
 S_\rmod\!\feq\!\!\lim_{\substack{t_1\to-\infty\\t_2\to+\infty}}\! 
 \aTexp\left(\ri \int_0^{t_2}\! \rd t\, H_\rD(t)\right) \re^{-\ri (t_2-t_1) H} \aTexp\left(\ri \int_{t_1}^0\! \rd t\, H_\rD(t)\right),
\end{equation}
where $H$ is the full Hamiltonian of the system and $H_\rD(t)$ is an appropriately chosen time-dependent Dollard Hamiltonian. The above formula makes sense in quantum mechanics where it can be used to define the scattering matrix for systems of particles influenced by long-range potentials such as the Coulomb potential~\cite{dollard1964asymptotic,derezinski1997scattering}. In order to give meaning to expression~\eqref{eq:S_mod_formal_intro} in perturbative QFT we use the Bogoliubov method~\cite{bogoliubov1959introduction}. We first construct the modified scattering matrix with the adiabatic cutoff
\begin{equation}\label{eq:intro_S_mod_g}
 S_\rmod(\eta,g)=R(\eta,g)S^\ras_{\rout}(\eta,g) S(g) S^\ras_{\rin}(\eta,g)R(\eta,g)^{-1}.
\end{equation}
It is defined in the standard Fock space (in the case of QED we use the BRST framework) and depends on the switching function $g\in\cS(\R^4)$ and the profile $\eta$. The switching function plays the role of the IR regulator and is also called the adiabatic cutoff. The Bogoliubov operator $S(g)$ is formally defined by
\begin{equation}\label{eq:intro_S}
 S(g) := \Texp\left(\ri e \int\rd^4 x \, g(x)\mathcal{L}(x)  \right),
\end{equation}
where $\Texp$ is the time-ordered exponential and the RHS of the above equation is interpreted as a formal power series in the coupling parameter $e$. The interaction vertex $\cL$ is expressed in terms of the free local fields defined in the Fock space. These fields play an auxiliary role in the construction and do not coincide with the asymptotic LSZ fields of the interacting retarded/advanced fields that we construct. The Bogoliubov operator is the scattering operator in an unphysical theory in which the coupling constant $e$ is replaced with the function $e\,g(x)$. In order to solve the familiar UV problem in the construction of this operator we use the Epstein-Glaser method~\cite{epstein1973role,brunetti2000microlocal,hollands2002existence} (the use of the Epstein-Glaser method is not essential in our construction). The operators $S^\ras_{\rout/\rin}(g)$ are called the Dollard modifiers and up to finite self-energy renormalization are given by
\begin{equation}\label{eq:D_modifiers_Texp}
 S^\ras_{\rout/\rin}(g)=\aTexp\left( -\ri e \int \rd^4 x ~ g(x) \mathcal{L}_{\rout/\rin}(x) \right), 
\end{equation}
where $\mathcal{L}_{\rout}(x)$ and $\mathcal{L}_{\rin}(x)$ are the asymptotic outgoing and incoming vertices and $\aTexp$ is the anti-time-ordered exponential. The asymptotic vertices describe the emission or absorption of a photon by an electron or positron whose momentum is unchanged in this process. They are given by some non-local functionals of fields which can be expressed in a simple way in terms of the creation and annihilation operators. Finally, the operators $R(\eta,g)$ and $R(\eta,g)^{-1}$, which depend implicitly on the sector measure $\hat\varrho$, are used to implement a coherent transformation to sectors with desired long-range tail of the electromagnetic field (the massless scalar field in the case of the scalar model). There is no UV problem in the construction of the operators $S^\ras_{\rout/\rin}(g)$ and $R(\eta,g)$.

The physical scattering matrix $S_\rmod(\eta)$ is defined by the following adiabatic limit
\begin{equation}\label{eq:intro_strong_S_mod_limit}
 (\Psi|S_\rmod(\eta)\Psi') :=  \lim_{\epsilon\searrow 0}\, (\Psi|S_\rmod(\eta,g_\epsilon)\Psi'),
\end{equation}
where the one-parameter family of the switching functions $g_\epsilon$, $\epsilon\in(0,1)$, is defined by setting $g_\epsilon(x)=g(\epsilon x)$ with $g\in\cS(\R^4)$ such that $g(0)=1$. The states $\Psi$ and $\Psi'$ have wave functions that vanish when momenta of the electrons or positrons are sufficiently close to one another. The modified interacting fields $C_{\ret/\adv}(\eta)$ can be constructed in a similar way. It is expected that the adiabatic limit~\eqref{eq:intro_strong_S_mod_limit} exists in arbitrary order of perturbative expansion in the coupling parameter $e$ in both QED and the scalar model. In the paper we prove this claim up to the second order of perturbation theory. It is plausible that one could show the existence of the adiabatic limit~\eqref{eq:intro_strong_S_mod_limit} in all orders of perturbation theory by adapting the method of \cite{duch2018strong,epstein1976adiabatic}, which was used to construct the scattering matrix in purely massive models. Note that on the formal level the existence of the limit~\eqref{eq:intro_strong_S_mod_limit} is equivalent to the factorization of the IR divergences. Such a factorization was established at the physics level of rigor in~\cite{yennie1961infrared,weinberg1965infrared} and is usually taken for granted in the physics  literature~\cite{weinberg1995quantum,peskin1995introduction}. Thus, the existence of the adiabatic limit our modified scattering matrix is well-motivated on physical grounds. Apart from a lack of the rigorous proof of this fact our analysis of the IR structure of QED and the scalar model is complete and complies with the results obtained previously in the axiomatic framework or simplified models. 

The plan of the paper is as follows. In Sec.~\ref{sec:perturbation} we outline to the Epstein and Glaser approach to perturbative QFT. In Sec.~\ref{sec:models} we introduce the models which are investigated in the paper. In Sec.~\ref{sec:infrared} we discuss the IR problem in the perturbative construction of the scattering matrix in QED and the scalar model. In Sec.~\ref{sec:modified} we define the modified scattering matrix and the modified interacting fields in the scalar model and QED. Sec.~\ref{sec:low_orders} contains a construction of the first and second order corrections to the modified scattering matrix. The first order corrections to the modified interacting fields are presented in Sec.~\ref{sec:fields}. In Sec.~\ref{sec:energy_momentum} we investigate the translational covariance of our construction. In Sec.~\ref{sec:physical_interpretation} we define the spaces of asymptotic states, give their physical interpretation and present a method of constructing the inclusive cross section using our scattering matrix. The paper closes with a summary. Appendix~\ref{sec:value_dist} contains a useful theorem about the value of a distribution at a point defined with the use of the adiabatic limit. In Appendix~\ref{sec:BCH_Magnus} we recall the BCH formula and the Magnus expansion. In Appendix~\ref{sec:long_rang_tail} we define the long-range tail of a field. In Appendix~\ref{sec:LSZ} we discuss the LSZ limit of fields. In Appendix~\ref{sec:propagators} we give explicit expressions for various propagators used in the paper.

\noindent {\it Notation:}
\begin{itemize}\setlength\itemsep{0em}
\item The Minkowski spacetime is identified with $\R^4$. It is equipped with the inner product given by $x\cdot y = g_{\mu\nu} x^\mu x^\nu = x^0 y^0 - x^1 y^1-x^2 y^2-x^3 y^3$. If $x=(x^0,x^1,x^2,x^3)\in\R^4$ is a four-vector, then $\vec x=(x^1,x^2,x^3)\in\R^3$.
\item The future and past light cones in the Minkowski spacetime are denoted by $V^\pm=\{p\in\R^4\,:\, p^2 > 0,~\pm p^0> 0\}$. Their closures are denoted by $\overline{V}^\pm$.
\item The invariant measure on the mass shell $\Hm:=\{p\in\R^4\,:\,p^2=\mass^2,\,p^0\geq 0\}$ is denoted by $\mHm{p} := \frac{1}{(2\pi)^3} \rd^4 p\, \theta(p^0) \delta(p^2-\mass^2)$, $\mass\geq 0$. We have $\HO=\partial \overline{V}^+$. We set $E_\mass(\vec p)=\sqrt{\mass^2+|\vec p|^2}$. We say that $v\in\R^4$ is a four-velocity iff $v\in H_1$.
\item The Heaviside theta function is denoted by $\theta$.
\item The space of test functions with compact support and Schwartz functions on $\R^N$ are denoted by $C^\infty_{\textrm{c}}(\R^N)$ and $\cS(\R^N)$, respectively. 
\item  Let $t\in\cS'(\R^N)$ be a Schwartz distribution and $g\in\cS(\R^N)$. We use the notation $\int \rd^N x \, t(x) g(x)$ for the pairing between distributions and test functions.
\item The Fourier transform of the Schwartz distribution $t\in\cS'(\R^N)$ is denoted by~$\tilde{t}$. For any $g\in\cS(\R^N)$ it holds $\tilde{g}(q):=\int\rd^N x\, \exp(\ri q \cdot x) g(x)$ and $g(x)=\int\frac{\rd^N q}{(2\pi)^N}\, \exp(-\ri q \cdot x) \tilde{g}(q)$.
\item The positive-definite inner product in the (Krein-)Fock space is denoted by $\langle\cdot|\cdot\rangle$ and the corresponding norm by $\|\cdot\|$. The covariant non-degenerate inner-product, which need not be positive definite, is denoted by $(\cdot|\cdot)$. In the case of models without vector fields the above inner products coincide. The hermitian conjugate of an operator $A$ with respect to the positive-definite/covariant inner product is denoted by $A^\dagger$ and $A^*$, respectively. Moreover, $A^\#$ stands for $A$ or $A^*$. 
\item Let $\mathcal{Y}_1$ and $\mathcal{Y}_2$ be complex-linear spaces. The space of linear operators $\mathcal{Y}_1\to\mathcal{Y}_2$ is denoted by $L(\mathcal{Y}_1,\mathcal{Y}_2)$. We write $L(\mathcal{Y},\mathcal{Y})=L(\mathcal{Y})$. The space of complex-valued anti-linear functionals on $\mathcal{Y}$ is denoted by $\mathcal{Y}^*$. The space of sesquilinear forms on $\mathcal{Y}$ can be identified with $L(\mathcal{Y},\mathcal{Y}^*)$. 
\item The symbols $\slim$, $\wlim$ denote the limits in the strong and weak operator topologies.
\item The ring of formal power series in the coupling constant $e$ with coefficients in the ring $R$ will be denoted by $R\llbracket e\rrbracket$. The coefficient of order $e^n$ of a formal power series $a\in R\llbracket e\rrbracket$ is denoted by $a^{[n]}$.
\item We denote the electric charge by $-eQ$, where $e>0$ is the elementary charge. The free BRST charge is denoted by $Q_{\mathrm{BRST}}$. The free ghost charge is denoted by $Q_{\mathrm{gh}}$. The operator $\rho(p)$ is defined by Eqs.~\eqref{eq:def_charge_dist_momentum_scalar} and \eqref{eq:def_charge_dist_momentum} in the scalar model and QED, respectively. We set $\rd\rho(p):=\rho(p)\,\mHm{p}$.
\end{itemize}
\section{Outline of Epstein-Glaser approach}\label{sec:perturbation}

In this section we outline a construction of interacting models of perturbative quantum field theory in the Epstein-Glaser framework. More detailed exposition can be found in ~\cite{dutsch2018from,epstein1973role,scharf2014,scharf2016gauge,duch2017massless}. In Sec.~\ref{sec:bogoliubov} we recall the Bogoliubov method of constructing the scattering operator and the interacting fields. The precise definition of the adiabatic limit is given in Sec.~\ref{sec:adiabatic}.

\subsection{Wick products}\label{sec:Wick}

In perturbative QFT the interacting models are built with the use of the free fields. It is useful to consider first the algebra $\Fa$ of symbolic fields which is by definition a commutative algebra generated by symbols corresponding to the components $A_i$ of fields used in the definition of a given model and their derivatives $\partial^\alpha A_i$, where $\alpha$ is a multi-index. To every symbolic field $B\in\Fa$ we associate the Wick polynomial $\normord{B(x)}$ which is an operator-valued Schwartz distribution~\cite{wightman1965fields,streater2000pct} on a~suitable domain $\cD_0$ in the Fock Hilbert space $\cH$ with the positive-definite inner product $\langle\cdot|\cdot\rangle$. Given a domain $\cD\subset\cH$ let $L(\cD)$ be the space of linear maps $\cD\to\cD$. By an operator-valued Schwartz distribution on $\cD$ we mean a~map $T:\cS(\R^N)\to L(\cD)$, such that for any $\Psi,\Psi'\in\cD$ the map
\begin{equation}\label{eq:op_valued_dist}
 \cS(\R^N)\ni g \mapsto \langle\Psi|\int \rd^N\!x\, g(x)\, T(x)\Psi'\rangle \in\C
\end{equation}
is a~Schwartz distribution. The space of operator-valued Schwartz distributions on $\cD$ is denoted by $\cS'(\R^N,L(\cD))$. If $T(x)\in\cS'(\R^{N},L(\cD))$ and $T'(x')\in\cS'(\R^{N'},L(\cD))$, then $T(x)T'(x')\in\cS'(\R^{N+N'},L(\cD))$, by the nuclear theorem. The vacuum state in the Fock space $\cH$ is denoted by $\Omega$. We assume that the Fock space is equipped with a~non-degenerate sesquilinear inner product $(\cdot|\cdot)$ which in general need not be positive definite. The hermitian conjugation and the notion of unitarity are defined with respect to this product. The unitary representation of the inhomogeneous $SL(2,\C)$ group (the covering group of the Poincar{\'e} group), denoted by $U(a,\Lambda)$, where $a\in\R^4$ and $\Lambda\in SL(2,\C)$ is a~Lorentz transformation, is defined in the Fock space in the standard way. In the case of pure translations we write $U(a)\equiv U(a,\id)$. Consider the following dense, Poincar{\'e}-invariant domain in the Fock space
\begin{multline}\label{eq:dom}
 \cD_0 :=\Span_\C\bigg\{ \int\rd^4 x_1\ldots\rd^4 x_n\,f(x_1,\ldots,x_n)\,A_{i_1}(x_1)\ldots A_{i_n}(x_n)\Omega ~:
 \\
 ~n\in\N_0, ~i_1,\ldots,i_n\in\{1,\ldots,\mathrm{p}\},~f\in\cS(\R^{4n}) \bigg\}.
\end{multline}
As shown in~\cite{wightman1965fields} Wick polynomials are are well defined as operator-valued Schwartz distributions on~$\cD_0$. In fact, the following theorem holds.

\begin{thm}\label{thm:eg0}\emph{\cite{epstein1973role}} 
Let $t\in\mathcal{S}'(\R^{4n})$ be translationally invariant, i.e.
\begin{equation}
 t(x_1,\ldots,x_n) = t(x_1+a,\ldots,x_n+a)
\end{equation}
for all $a\in\R^4$. Then for any $B_1,\ldots,B_n\in\Fa$
\begin{equation}\label{eq:eg_operator}
 t(x_1,\ldots,x_n) \normord{B_1(x_1)\ldots B_n(x_n)}~\in \mathcal{S}'(\R^{4n},L(\cD_0)).
\end{equation}
\end{thm}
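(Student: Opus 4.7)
The plan is to prove the theorem in three stages: first, reduce to Wick monomials in the basic generators and show that their matrix elements between vectors of $\cD_0$ lie in the Schwartz multiplier space $\mathcal{O}_M(\R^{4n})$; second, deduce that the pointwise product with the tempered distribution $t$ gives a well-defined continuous sesquilinear form on $\cD_0\times\cD_0$; third, invoke the translational invariance of $t$ together with Wick's theorem to verify that the resulting operator maps $\cD_0$ into itself.

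For the first stage, by multilinearity of Wick ordering in each $B_i$ one may assume every $B_i$ is a monomial in the basic fields and their derivatives, so that $\normord{B_1(x_1)\cdots B_n(x_n)}$ is a normally-ordered product of $N=\sum_i k_i$ basic fields at the points $x_1,\ldots,x_n$. Splitting each basic field into its positive- and negative-frequency parts via its plane-wave expansion and moving all creators to the left expresses this Wick monomial as a finite sum of momentum-space integrals whose $x$-dependence enters only through plane-wave phases $\re^{\pm \ri p_j\cdot x_{\ell(j)}}$ and whose amplitudes are polynomially bounded polarization factors. Given $\Psi,\Psi'\in\cD_0$, whose momentum-space wave functions are restrictions of Schwartz functions to the appropriate products of mass shells, the matrix element
\begin{equation*}
F_{\Psi,\Psi'}(x_1,\ldots,x_n):=(\Psi|\normord{B_1(x_1)\cdots B_n(x_n)}\Psi')
\end{equation*}
reduces to a finite sum of the form $\sum_\alpha \int\prod_j\mH{m_{i_j}}{p_j}\,\re^{\ri \sum_j \epsilon_j p_j\cdot x_{\ell(j)}}G_\alpha(p)$ with Schwartz amplitudes $G_\alpha$ on the Cartesian product of mass shells. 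A direct estimate shows $F_{\Psi,\Psi'}$ is smooth with polynomially bounded derivatives, i.e.\ an element of $\mathcal{O}_M(\R^{4n})$. Since $\mathcal{O}_M$ is the multiplier algebra of $\cS$, the pointwise product $t\cdot F_{\Psi,\Psi'}$ is a well-defined tempered distribution, and for any $f\in\cS(\R^{4n})$ the pairing $\langle t,\,fF_{\Psi,\Psi'}\rangle$ is a complex number depending sesquilinearly on $(\Psi,\Psi')$ and continuously on $f$; this already furnishes the Schwartz-distribution property required by the definition of an operator-valued Schwartz distribution.

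To realize this sesquilinear form as a linear map $T(f)\colon\cD_0\to\cD_0$, apply Wick's theorem to $\normord{B_1(x_1)\cdots B_n(x_n)}\,A_{i_1}(y_1)\cdots A_{i_k}(y_k)\Omega$, expanding it as a finite sum of contraction terms of the form $\prod_\rho W_\rho(x_{\alpha_\rho}-y_{\beta_\rho})\,\normord{A_?(x_?)\,A_?(y_?)}\Omega$, where the $W_\rho$ are the translationally invariant two-point functions of the basic fields. Smearing with $f(x)t(x)\prod_j g_j(y_j)$ and writing $t(x_1,\ldots,x_n)=\tilde t(x_1-x_n,\ldots,x_{n-1}-x_n)$ via translational invariance, the center-of-mass integral decouples and the partial pairing with $\tilde t$ of the Schwartz function built from $f$, $g_1,\ldots,g_k$, and the $W_\rho$'s yields a Schwartz smearing of the surviving normally-ordered fields acting on $\Omega$, so that $T(f)\Psi'\in\cD_0$. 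The main obstacle lies precisely in this last step: without translational invariance of $t$, the partial pairing would generically produce only a tempered (not Schwartz) smearing of the remaining fields and the image of $T(f)$ would escape $\cD_0$; translational invariance of $t$ is exactly what is needed to collapse the distributional pairing to a Schwartz function of the surviving spacetime variables.
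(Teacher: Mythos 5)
The paper does not prove this theorem itself: it is quoted from the literature, and the author explicitly points to Appendix~1 of Epstein--Glaser for the proof (see the remark inside the proof of Theorem~\ref{thm:mod_definition}). So your attempt can only be measured against that argument. Your first two stages are sound: matrix elements of Wick monomials between vectors of $\cD_0$ are smooth with polynomially bounded derivatives, hence $t\cdot F_{\Psi,\Psi'}$ is a tempered distribution and the pairing with $f$ is continuous in $f$. The gap is in the third stage, which is where the actual content of the theorem lies, namely that $T(f)$ maps $\cD_0$ \emph{into} $\cD_0$.

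Your claim that the partial pairing with $\tilde t$ ``yields a Schwartz smearing of the surviving normally-ordered fields'' fails already in the simplest case $n=1$, $t\equiv 1$, $B_1=\varphi^2$, $\Psi'=\Omega$: the two surviving fields sit at the \emph{same} point $x_1$, so the effective position-space smearing is of the form $f(z_1)\,\delta^{(4)}(z_1-z_2)$, supported on a diagonal, and cannot be a Schwartz function of $(z_1,z_2)$. The state $\int\rd^4x\,f(x)\normord{\varphi^2(x)}\Omega$ nevertheless lies in $\cD_0$, but for a different reason: its wave function $\F{f}(p_1+p_2)$, restricted to the product of \emph{forward} mass shells, decays rapidly with all derivatives because positivity of the energies gives $|p_1+p_2|\geq \tfrac{1}{\sqrt{2}}(|p_1|+|p_2|)$ there, and such a function extends to a Schwartz function on $\R^8$. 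This positive-energy (spectral) argument is the essential ingredient of the Epstein--Glaser proof and is absent from your proposal. In the general case the same mechanism carries the proof: translational invariance gives $\F{t}(q)=\delta(\sum_i q_i)\,\tau(q)$, so $\widehat{ft}=\F{f}*\F{t}$ decays rapidly in the \emph{total} momentum transferred at the vertices; the contractions $W_\rho$ become integrations of Schwartz amplitudes over forward mass shells, which preserve the Schwartz property; and rapid decay in each individual outgoing momentum is then recovered from the decay in the total momentum transfer, the Schwartz decay of the incoming wave function, and the positivity of the energies of all created and contracted quanta. Your position-space formulation cannot capture this; moreover the object $f\cdot h\cdot\prod_\rho W_\rho$ that you propose to pair with $\tilde t$ is a product of distributions in overlapping variables whose existence would itself need justification. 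The repair is to carry out stage three entirely in momentum space, as Epstein and Glaser do.
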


\subsection{Time ordered products}\label{sec:time_ordered}

In perturbative QFT the scattering operator and the interacting fields are constructed with the use of the time-ordered products. The time-ordered products are multi-linear maps
\begin{equation}
 \T:~\Fa^n \ni (B_1,\ldots,B_n) \mapsto \,\T(B_1(x_1),\ldots ,B_n(x_n)) \,\in \cS'(\R^{4n},L(\cD_0)),
\end{equation}
where $n\in\N_0$, satisfying the following axioms:
\begin{enumerate}[label=\bf{A.\arabic*}]
\item\label{axiom:init} $\T(\emptyset)=\id$, $\T(B(x))=\,\,\normord{B(x)}$,
\begin{equation}
\T(B_1(x_1),\ldots,B_n(x_n),1(x_{n+1})) = \T(B_1(x_1),\ldots,B_n(x_n)),
\end{equation}
where $1$ on the LHS of the above equality is the unity in $\Fa$.

\item\label{axiom:p} Poincar\'e covariance: 
\begin{multline}\label{eq:n_P}
 U(a,\Lambda)\T(B_1(x_1),\ldots,B_n(x_n))U(a,\Lambda)^{-1} 
 \\
 = \T((\rho(\Lambda)B_1)(\Lambda x_1+a),\ldots,(\rho(\Lambda)B_n)(\Lambda x_n+a)),
\end{multline}
where $B_1,\ldots,B_n\in\Fa$, $\rho$ is the representation of $SL(2,\C)$ acting in $\mathcal{F}$ and $U(a,\Lambda)$ is the unitary representation of the Poincar{\'e} group in $\cD_0$.

\item\label{axiom:sym} Symmetry: 
\begin{equation}\label{eq:T_graded}
 \T(B_1(x_1),\ldots,B_n(x_n)) = \T(B_{\pi(1)}(x_{\pi(1)}),\ldots,B_{\pi(n)}(x_{\pi(n)})),
\end{equation}
where $\pi$ is any permutation of the set $\{1,\ldots,n\}$.

\item\label{axiom:causality} Causality: If none of the points $x_1,\ldots,x_m$ is in the causal past of any of the points $x_{m+1},\ldots,x_n$ then
\begin{multline}\label{eq:T_causality}
 \T(B_1(x_1),\ldots,B_n(x_n)) 
 \\
 = \T(B_1(x_1),\ldots,B_m(x_m))\T(B_{m+1}(x_{m+1}),\ldots,B_n(x_n)).
\end{multline}

\item\label{axiom:wick} Wick expansion: $\T(B_1(x_1),\ldots,B_n(x_n))$ is uniquely determined by the vacuum expectations of the time-ordered products of the sub-polynomials of $B_1,\ldots,B_n$
\begin{multline}\label{eq:T_expansion}
 \T(B_1(x_1),\ldots,B_n(x_n)) 
 \\
 =
 \sum_{s_1,\ldots,s_n} 
 \,(\Omega|\T(B_1^{(s_1)}(x_1),\ldots,B_n^{(s_n)}(x_n))\Omega)
 ~\frac{\normord{A^{s_1}(x_1)\ldots A^{s_n}(x_n)}}{s_1!\ldots s_n!}.
\end{multline}
The sub-polynomials of the interaction vertex of QED are listed in~\eqref{eq:subpolynomials_qed}.

\item\label{axiom:steinmann}
Bound on the Steinmann scaling degree that controls the UV behavior of the time-ordered products:
\begin{equation}\label{eq:sd_bound}
 \sd(\,(\Omega|\T(B_1(x_1),\ldots,B_n(x_n),B_{n+1}(0)) \Omega)\,) \leq \sum_{j=1}^{n+1} \dim(B_j),
\end{equation}
where $\sd(t)$ is the Steinmann scaling degree~\cite{steinmann1971perturbative,brunetti2000microlocal} of $t\in\cS'(\R^4)$ and $\dim(B)$ is the dimension of a polynomial $B$ (the dimension $\dim(B)$ may be strictly larger than the canonical dimension; see Sec.~\ref{sec:scalar_model}).

\item\label{axiom:u} Unitarity:
\begin{equation}\label{eq:n_U}
 \T(B_1(x_1),\ldots,B_n(x_n))^* = \aT(B_n^*(x_n),\ldots,B_1^*(x_1)),
\end{equation} 
where $\aT$ is the anti-time-ordered product (cf. e.g.~\cite{epstein1973role}).

\item\label{axiom:pct} Covariance with respect to the discrete group of CPT transformations (the charge conjugation, the spatial-inversion and the time-reversal). 

\item\label{axiom:one} Field equations: If one of the arguments of a time-ordered product is a basic field, then this time-ordered product is related to some time-ordered product with one argument less (for details see e.g. the normalization condition~{\bf N.FE} introduced in Sec.~4.3 of~\cite{duch2018weak}).

\item\label{norm:ward} Ward identities in QED~\cite{dutsch1999local}: Let $B_1,\ldots,B_n$ be sub-polynomials of the QED interaction vertex $\cL=J_\mu A^\mu=\overline{\psi}\slashed{A}\psi$, i.e.
\begin{equation}\label{eq:subpolynomials_qed}
 B_1,\ldots,B_{k}\in\{1,A_\mu,\psi_a,\overline{\psi}_a,J^\mu,(\slashed{A}\psi)_a,(\overline{\psi}\slashed{A})_a,\cL\}.
\end{equation}
It holds
\begin{equation}\label{eq:n_W}
 \partial^x_{\mu}\T(J^\mu(x),B_1(x_1),\ldots,B_{k}(x_{k}))
 = \sum_{j=1}^{k} \mathbf{q}(B_j) \,\delta(x-x_j)\,\T(B_1(x_1),\ldots,B_{k}(x_{k})),
\end{equation}
where $\mathbf{q}(B)$ is the additive electric charge number defined such that $\mathbf{q}(A_\mu)=0$, $\mathbf{q}(\psi_a)=-\mathbf{q}(\overline{\psi}_a)=-1$.
\end{enumerate}

The above conditions were stated for bosonic fields and have to be slightly modified in the case of fields with Fermi statistics (see e.g.~\cite{duch2018weak}). We refer the reader to~\cite{epstein1973role,scharf2014,brunetti2000microlocal,hollands2002existence} for the construction of the time-ordered products satisfying the above axioms. Let us mention that the time-ordered products are not defined uniquely. Assume that all the time-ordered products with at most $n$ arguments are fixed. Then two possible definitions of the vacuum expectation value
\begin{equation}\label{eq:T_freedom}
 (\Omega|\T(B_1(x_1),\ldots,B_{n+1}(x_{n+1}))\Omega)\in\cS'(\R^{4(n+1)})
\end{equation}
differ by a distribution of the form
\begin{equation}\label{eq:freedom_form}
 \sum_{\substack{\gamma\\|\gamma|\leq\omega}} c_\gamma \partial^\gamma \delta(x_1-x_{n+1})\ldots \delta(x_{n}-x_{n+1})
\end{equation}
for some constants $c_\gamma\in\C$ indexed by multi-indices $\gamma$, $|\gamma|\leq\omega$, where
\begin{equation}\label{eq:omega}
 \omega := 4 - \sum_{j=1}^{n+1} (4 -\dim(B_j))
\end{equation}
is the order of singularity. The constants $c_\gamma$ have to be chosen in such a way that the new set of time-ordered products satisfies all the axioms. Note that if $\omega<0$, then~\eqref{eq:T_freedom} is determined uniquely by the time-ordered products with at most $n$ arguments. Complete characterization of the freedom in defining the time-ordered products can be found in~\cite{hollands2001local,pinter2001finite,dutsch2004causal}.

\subsection{S-matrix and interacting fields with adiabatic cutoff}\label{sec:bogoliubov}

The scattering matrix of a model of perturbative QFT with the interaction vertex $\cL\in\Fa$ is formally given by the Dyson~\cite{dyson1949radiation} formula $S = \Texp\left(\ri e\int \rd^4 x \,\mathcal{L}(x)\right)$, where $\cL\in\Fa$ is the interaction vertex. However, the above expression is ill-defined even in purely massive models because the time-ordered products $\T(\mathcal{L}(x_1),\ldots,\mathcal{L}(x_n))$ are operator-valued Schwartz distribution and generically cannot be integrated with constant functions. In order to obtain a well-defined expression for the scattering matrix in QFT we follow the idea of Bogoliubov~\cite{bogoliubov1959introduction} and multiply the interaction vertex $\mathcal{L}$ by the switching function $g$ which is a real-valued Schwartz function defined on the spacetime. The above prescription regularizes the theory in the IR regime. We call this type of the IR regularization the adiabatic cutoff. Note that in the theory with adiabatic cutoff there are only short-range interactions. The scattering matrix with adiabatic cutoff, which is also called the Bogoliubov S-matrix, is defined by~\cite{bogoliubov1959introduction}
\begin{multline}\label{eq:bogoliubov_S_op}
 S(g) = \Texp\left(\ri e\int\rd^4 x \, g(x)\cL(x)\right)
 \\
 :=\sum_{n=0}^\infty \frac{\ri^n  e^n}{n!} \!\int \rd^4 x_1\,\ldots \rd^4 x_n\, g(x_1)\ldots g(x_n) \T(\cL(x_1),\ldots,\cL(x_n)) \in L(\cD_0)\llbracket e\rrbracket.
\end{multline}
The above expression is well-defined as a formal power series of operators mapping the domain $\cD_0$ into $\cD_0$. The physical scattering operator is obtained by taking the limit $g(x)\to 1$ of the $S(g)$ in an appropriate sense to be discussed in Sec.~\ref{sec:adiabatic}.

Bogoliubov proposed also an elegant method of constructing the interacting fields. For any polynomial in the basic fields and their derivatives $C\in\Fa$ the corresponding retarded and advanced interacting fields are given by~\cite{bogoliubov1959introduction}
\begin{equation}\label{eq:bogoliubov_fields}
\begin{aligned}
 C_{\ret}(g;x) &:= (-\ri) \frac{\delta}{\delta h(x)} S(g)^{-1}S(g;h)\bigg|_{h=0}\in L(\cD_0)\llbracket e\rrbracket,
 \\
 C_{\adv}(g;x) &:= (-\ri) \frac{\delta}{\delta h(x)} S(g;h)S(g)^{-1}\bigg|_{h=0}\in L(\cD_0)\llbracket e\rrbracket,
\end{aligned}
\end{equation}
where
\begin{equation}\label{eq:bogoliubov_S_op_extended}
 S(g;h) := \Texp\left(\ri e \int\rd^4 x\, g(x) \mathcal{L}(x) + \ri\int\rd^4 x\, h(x) C(x) \right)
\end{equation}
is the so-called extended scattering matrix. If the switching function $g$ has a compact support then the retarded and advanced fields coincide with the free field $\normord{C(x)}$ outside the past and future of $\supp\,g$, respectively. If $C\in\Fa$ is a basic field, then the corresponding retarded/advanced field $C_{\ret/\adv}(g;x)$ satisfies the interacting equation of motion with the coupling constant $e$ replaced with $e g(x)$. For example, in the case of the massless $\varphi^4$ model we have
\begin{equation}
 \square_x \varphi_{\ret/\adv}(g;x)+ \frac{e}{3!} g(x)\, (\varphi^3)_{\ret/\adv}(g;x) = 0.
\end{equation}
Moreover, the interacting fields satisfy the local commutativity condition, i.e. for any $C,C'\in\Fa$ it holds \cite{epstein1973role,dutsch1999local}
\begin{equation}
 [C_{\ret}(g;x),C'_{\ret}(g;x')] = 0,
 ~~~
 [C_{\adv}(g;x),C'_{\adv}(g;x')] = 0 
\end{equation}
if $x$ and $x'$ are spatially separated. Let us mentioned that the retarded and advanced fields $C_{\ret/\adv}(g;h)$ with adiabatic cutoff $g\in\cS(\R^4)$ can be used to construct~\cite{dutsch2001algebraic,fredenhagen2015perturbative} the net of abstract algebras $\mathfrak{F}(\mathcal{O})$ of interacting fields localized in bounded spacetime regions $\mathcal{O}\subset\R^4$. The net does not depend on the switching function and satisfies the Haag-Kastler axioms~\cite{haag2012local} in the sense of formal power series.

\subsection{Adiabatic limit}\label{sec:adiabatic}

In order to define the physical scattering operator or the vacuum representation of the net $\mathfrak{F}(\mathcal{O})$ one uses the so-called strong adiabatic limit. To this end, one introduces a one-parameter family of the switching functions $g_\epsilon$, $\epsilon\in(0,1)$, defined by $g_\epsilon(x)=g(\epsilon x)$, where $g\in\cS(\R^4)$ is an arbitrary real-valued Schwartz function such that $g(0)=1$. We say that the adiabatic limit of $S(g_\epsilon)$ or $C_{\ret/\adv}(g_\epsilon;h)$ exists if the limits
\begin{equation}
 \lim_{\epsilon\searrow0}\,S(g_\epsilon),\quad
 \lim_{\epsilon\searrow0}\,C_{\ret/\adv}(g_\epsilon;h)
\end{equation}
exist in appropriate sense to be specified (see e.g. Eqs.~\eqref{eq:adiabatic_lim_massive} or Eqs.~\eqref{eq:dfn_S_op_mod_scalar}) and do not depend on the choice of $g$ used in the definition of $g_\epsilon$. The existence of the adiabatic limit of the scattering operator in massive models was proved in~\cite{epstein1976adiabatic}. In what follows, we summarize the main results of~\cite{duch2018strong}, where the above-mentioned proof is simplified and generalized. 

Consider an arbitrary model of perturbative QFT which contains only massive fields. Let us define the following domain in the Fock space
\begin{multline}\label{eq:dom_holder}
 \cD_{\mathrm{H}}:=\Span_\C\bigg\{\! \int\mHm{p_1}\ldots\mHm{p_n}\,
 h(p_1,\ldots,p_n)
 \\
 \times
 \,a^*(p_1)\ldots a^*(p_n)\Omega:
 ~n\in\N_0,~h\in \cS_{\mathrm{H}}(\Hm^{\times n}) \bigg\},
\end{multline}
where $\cS_{\mathrm{H}}(\Hm^{\times n})$ consists of H{\"o}lder-continuous functions which decay rapidly at infinity. For any $\Psi\in\cD_{\mathrm{H}}$, $C\in\Fa$ and $h\in\cS(\R^4)$ the following limits:
\begin{equation}\label{eq:adiabatic_lim_massive}
 S\Psi:=\lim_{\epsilon\searrow0}~S(g_\epsilon)\Psi\in\cD_{\mathrm{H}}\llbracket e \rrbracket,
 \quad
 C_{\ret/\adv}(h)\Psi := \lim_{\epsilon\searrow0}~C_{\ret/\adv}(g_\epsilon;h)\Psi \in\cD_{\mathrm{H}}\llbracket e \rrbracket
\end{equation}
exist in each order of perturbation theory and define the physical scattering operator $S\in L(\cD_{\mathrm{H}})\llbracket e \rrbracket$ and the physical interacting field operators $C_{\ret/\adv}(h) \in L(\cD_{\mathrm{H}})\llbracket e \rrbracket$. The above objects are covariant with respect to the Poincar{\'e} transformations, i.e.
\begin{equation}
\begin{aligned}
 U(a,\Lambda) S U(a,\Lambda)^{-1} =& S,
 \\
 U(a,\Lambda) C_{\ret/\adv}(h) U(a,\Lambda)^{-1} 
 =& (\rho(\Lambda^{-1})C)_{\ret/\adv}(h_{a,\Lambda}),
\end{aligned} 
\end{equation}
where $h_{a,\Lambda}(x) = h(\Lambda^{-1}(x-a))$, $\rho$ is the representation of the Lorentz group acting in~$\Fa$ and $U(a,\Lambda)$ is the standard representation of the Poincar{\'e} group acting in the Fock space. Moreover, the scattering operator satisfies the following conditions
\begin{equation}
 S\Omega = \Omega,
 \quad
 S|p) = |p)
\end{equation}
expressing the stability of the vacuum and one-particle states. The interacting field operators fulfill the interacting field equations and the local commutativity condition. Their LSZ limits coincide with the free fields. Thus, the physical particle interpretation of the states in the Fock space coincides with the standard one. 

The above method provides a direct and complete perturbative construction of massive models of QFT in the vacuum representation. However, because of the IR problem it is not applicable to most models with massless fields. In Sec.~\ref{sec:modified} we present a modification of the above method which can be used to construct the scattering matrix and interacting fields in QED and the scalar model introduced in Sec.~\ref{sec:models}. The construction allows us to draw a number of interesting conclusions about the IR structure of these models. The properties of the scattering matrix and interacting fields in QED and the scalar model turn out to be quite different than the one listed above. In particular, the particle interpretation of states is more involved and the energy-momentum operators do not coincide with the standard ones and have different spectrum. Moreover, there is an abundance of the superselection sectors.

\section{Models with infrared problem}\label{sec:models}

In this section we recall basic facts about the perturbative construction of QED in the BRST framework and give the definition of the scalar model. 

\subsection{Scalar model}\label{sec:scalar_model}

The scalar model is a model of QFT which we use to study the IR problem in perturbation theory. As we will see its IR structure is quite similar to that of QED. Its classical action is given by
\begin{multline}
 \mathrm{S}[\varphi,\psi] = \int\rd^4x\, \bigg( \frac{1}{2}(\partial_\mu \psi(x))(\partial^\mu \psi(x)) - \frac{1}{2} \mass^2\psi^2(x) 
 \\
 + \frac{1}{2} (\partial_\mu \varphi(x))(\partial^\mu \varphi(x)) + \frac{e}{2}\psi^2(x)\varphi(x) \bigg),
\end{multline}
where the fields $\varphi(x)$ and $\psi(x)$ are real and scalar and $e$ is the coupling constant. The field $\varphi(x)$ is massless and the field $\psi(x)$ has mass $\mass$. The Euler-Lagrange equations have the following form
\begin{equation}
 (\square + \mass^2) \psi(x) =e\psi(x)\varphi(x),
 \quad
 \square \varphi(x)=\frac{e}{2}\psi^2(x).
\end{equation}
The perturbative quantization of the model is straightforward. The Hilbert space is the tensor product of the Fock spaces describing the massless particles and the particles of mass $\mass$,
\begin{equation}
 \cH = \Gamma_s(\mathfrak{h}_0)\otimes \Gamma_s(\mathfrak{h}_\mass),
\end{equation}
where $\Gamma_{s/a}(\mathfrak{h})$ denotes the symmetric/antisymmetric Fock space built over the one-particle Hilbert space $\mathfrak{h}$. The Hilbert spaces $\mathfrak{h}_0$ and $\mathfrak{h}_\mass$ are, respectively, the completions of $C^\infty_{\mathrm{c}}(H_0)$ and $C^\infty_{\mathrm{c}}(H_\mass)$ with respect to the topologies given by the scalar products 
\begin{equation}
 (f_1|f_2)=\int \mHO{k} \,\overline{f_1(k)}f_2(k),
 \quad
 (f_1|f_2)=\int \mHm{p} \,\overline{f_1(p)}f_2(p).
\end{equation}
By analogy to QED, the massless particles are called the photons and the massive particles -- the electrons. The creation and annihilation operators of the photons are denoted by
$a^*(k)$, $a(k)$, whereas the creation and annihilation operators of the electrons are denoted by $b^*(p)$, $b(p)$. The free quantized fields are given by
\begin{equation}\label{eq:phi_def}
 \varphi(x) =  \int \mHO{k} \left(a^*(k)\re^{\ri k\cdot x}+\textrm{h.c.}\right),
 \quad
 \psi(x)= \int \mHm{p} \left(b^*(p)\re^{\ri p\cdot x}+\textrm{h.c.}\right).
\end{equation}
The interaction vertex of the model coincides with $\cL = \frac{1}{2}\psi^2\varphi =  J \varphi$, where, by analogy to QED, the scalar quantity $J = \frac{1}{2}\psi^2$ is called the current. Note that, as in QED, the interaction vertex contains two massive fields and one massless. For future reference, let us also define the operator 
\begin{equation}\label{eq:def_charge_dist_momentum_scalar}
 \rho(p):=\frac{1}{2\mass}\,b^*(p) b(p).
\end{equation}

Because the canonical dimension of the interaction vertex is equal to three the scalar model is super-renormalizable according to the standard classification. However, as explained in~\cite{duch2017massless,duch2018weak}, the normalization condition~\eqref{eq:sd_bound} is to restrictive to construct the Wightman and Green functions in the scalar model if the dimension $\dim$ is the canonical dimension, i.e. $\dim(\varphi)=\dim(\psi)=1$. For this reason, in the case of the scalar model we assign dimensions to the fields in a different way. In this paper we assume that $\dim(\varphi)=1$, $\dim(\psi)=3/2$ and each derivative increases the dimension by one. With this choice of of the dimensions the scalar model is a renormalizable (not super-renormalizable) theory. Moreover, as showed in~\cite{duch2017massless,duch2018weak}, it is possible to construct the physical Wightman and Green functions in this model using the weak adiabatic limit.

The standard unitary representation of the Poincar{\'e} group acting in $\cH$ is denoted by $U(a,\Lambda)$. The generators of the space-time translations $U(a)$ are given by
\begin{equation}
 P^\mu = \int \mHO{k}\, k^\mu a^*(k) a(k) + \int \mHm{p}\, p^\mu b^*(p)b(p).
\end{equation}

\subsection{Quantum electrodynamics}\label{sec:qed}

The standard classical action of QED has the following form
\begin{equation}\label{eq:qed_action}
 \mathrm{S}[A_\mu,\psi_a,\overline{\psi}_a] = \int\rd^4x\, \left( \overline{\psi}(x) (\ri\slashed{D}-m) \psi(x) - \frac{1}{4} F_{\mu\nu}(x)F^{\mu\nu}(x) \right),
\end{equation}
where $A_\mu(x)$ is the electromagnetic potential, $F_{\mu\nu}(x)=\partial_\mu A_\nu(x)-\partial_\nu A_\mu(x)$ is the electromagnetic field tensor, $D_\mu = \partial_\mu-\ri e A_\mu$ is the covariant derivative and $\psi_a(x)$, $\overline{\psi}_a(x)$ are Dirac spinor fields. We assign to the fields their canonical dimensions, i.e $\dim(A_\mu)=1$ and $\dim(\psi_a)=\dim(\overline{\psi}_a)=3/2$. The coupling constant, denoted by $e$, coincides with the elementary charge which is equal to minus the charge of electron. Because of the gauge freedom the Euler-Lagrange equations for the above action
do not admit a well-posed initial value formulation. In order to deal with the this problem one uses the BRST method~\cite{becchi1975renormalization,tyutin1975gauge,becchi1976renormalization} which is a generalization of the approach developed by Gupta and Bleuler~\cite{gupta1950theory,bleuler1950neue}. To this end, one considers a modified theory with the action 
\begin{multline}\label{eq:action_qed_mod}
 \mathrm{S}[A_\mu,\psi_a,\overline{\psi}_a,c,\bar c] 
 = \int\rd^4x\, \bigg(\overline{\psi}(x)(\ri\slashed{D}-m)\psi(x) 
 \\
 - \frac{1}{2} (\partial_\mu A_\nu(x)) (\partial^\mu A^\nu(x)) + \ri (\partial_\mu c(x))(\partial^\mu \bar c(x))\bigg),
\end{multline}
where $c$ and $\bar c$ are the ghost and and anti-ghost fields. The Euler-Lagrange equations for the above action are of the form
\begin{equation}
 \square A^\mu(x)=\, -e\overline{\psi}(x)\gamma^\mu\psi(x),
 \quad
 (\ri\slashed{D} -m) \psi(x) =\, 0,
 \quad
 \square c(x) =\, 0,
 \quad
 \square \bar c(x) =\, 0.
\end{equation}
The above system of partial differential equations is normally-hyperbolic and has a well-posed initial value formulation. The quantization of the above model is described in detail e.g. in~\cite{scharf2014,scharf2016gauge}. The free fields are defined in the Krein-Hilbert-Fock space which is a tensor product of the photon, electron/positron and ghost/anti-ghost Fock spaces
\begin{equation}
 \cH = \Gamma_s(\mathfrak{h}_{\textrm{ph}})\otimes \Gamma_a(\mathfrak{h}_{\textrm{el}})
 \otimes \Gamma_a(\mathfrak{h}_{\textrm{gh}}),
\end{equation}
where $\mathfrak{h}_{\textrm{ph}}$, $\mathfrak{h}_{\textrm{el}}$ and $\mathfrak{h}_{\textrm{gh}}$ are the one-photon, one-electron/positron and one-ghost/anti-ghost Hilbert spaces, respectively. The positive-definite scalar products on $\mathfrak{h}_{\textrm{ph}}$ and $\mathfrak{h}_{\textrm{gh}}$ are given by
\begin{equation}
 \langle f|g\rangle_{\textrm{ph}} := \sum_{\mu=0}^3 \int\mHO{k}\,\overline{f^\mu(k)} g^\mu(k),
 \quad
 \langle f|g\rangle_{\textrm{gh}} := \sum_{a=1,2} \int\mHO{k}\,\overline{f_a(k)} g_a(k),  
\end{equation}
respectively. The corresponding covariant Krein scalar products are defined as follows
\begin{equation}
 (f|g)_{\textrm{ph}} := -\int\mHO{k}\,g_{\mu\nu} \overline{f^\mu(k)} g^\nu(k),
 \quad
 (f|g)_{\textrm{gh}} := \int\mHO{k}\,\epsilon_{ab} \overline{f_a(k)} g_b(k),
\end{equation}
where $g_{\mu\nu}$ is the Minkowski metric with the signature $(+,-,-,-)$ and $\epsilon_{ab}$ is the antisymmetric symbol with $\epsilon_{12}=1$. The Fock space of photons contains all four polarizations. The Fock space of electrons is completely standard. In particular, the positive-definite scalar product defined in this space is covariant. The hermitian conjugation of an operator $A$ defined using the positive-definite scalar product is denoted by $A^\dagger$. On the other hand, the Krein conjugation defined using the covariant Krein scalar product is denoted by $A^*$. The creation and annihilation operators of photons are denoted by $a^\dagger_\mu(k)$, $a_\mu(k)$, $\mu=0,1,2,3$. It holds $a^\dagger_\mu(k)=a^*_\mu(k)$ for $\mu=1,2,3$ and $a^\dagger_0(k)=-a^*_0(k)$. The creation and annihilation operators of ghost are denoted by $c_a(k)$, $c_a^\dagger(k)$, $a=1,2$. It holds $c^\dagger_a(k)=\epsilon_{ab} c^*_b(k)$. The above creation and annihilation operators satisfy the following commutation relations
\begin{equation}
 [a_\mu(f^\mu),a^*_\nu(g^\nu)] = (f|g)_{\textrm{ph}},
 \quad
 [c_a(f_a),c^*_b(g_b)]_+ = (f|g)_{\textrm{gh}},
\end{equation}
where $[\cdot,\cdot]_+$ denotes the anti-commutator. The creation and annihilation operators of electrons/positrons are denoted by $b^*(\sigma,p)=b^\dagger(\sigma,p)$, $b(\sigma,p)$, $d^*(\sigma,p)=d^\dagger(\sigma,p)$, $d(\sigma,p)$, where $\sigma=1,2$ corresponds to the two spin states of electron/positron. The spin vectors are denoted by $u^a(\sigma,p)$, $v^a(\sigma,p)$. The free fields are the following operator-valued Schwartz distributions
\begin{equation}\label{eq:vector_field}
 A_\mu(x) =  \int \mHO{k} \left(a_\mu^*(k)\re^{\ri k\cdot x}+a_\mu(k) \re^{-\ri k\cdot x}\right),
\end{equation}
\begin{equation}\label{eq:psi_field}
 \psi_a(x)= \sum_{\sigma=1,2}\int \mHm{p} \left(b^*(\sigma,p)u_a(\sigma,p)\re^{\ri p\cdot x}+d(\sigma,p)v_a(\sigma,p) \re^{-\ri p\cdot x}\right),
\end{equation}
\begin{equation}\label{eq:psi_bar_field}
 \overline{\psi}_a(x)= \sum_{\sigma=1,2}\int \mHm{p} \left(d^*(\sigma,p)\overline{v}_a(\sigma,p)\re^{\ri p\cdot x}+b(\sigma,p)\overline{u}_a(\sigma,p) \re^{-\ri p\cdot x}\right),
\end{equation}
\begin{equation}\label{eq:ghost_field}
 c(x) =  \int \mHO{k} \left(c_1^*(k)\re^{\ri k\cdot x}+c_1(k) \re^{-\ri k\cdot x}\right),
\end{equation}
\begin{equation}\label{eq:anti_ghost_field}
 \bar c(x) =  \int \mHO{k} \left(c_2^*(k)\re^{\ri k\cdot x}+c_2(k) \re^{-\ri k\cdot x}\right).
\end{equation}
For future reference, let us also introduce the operator 
\begin{equation}\label{eq:def_charge_dist_momentum}
 \rho(p):=\sum_{\sigma=1,2}(b^*(\sigma,p) b(\sigma,p) - d^*(\sigma,p) d(\sigma,p)).
\end{equation}
The electric charge operator $Q = \int\mHm{p}\,\rho(p)$ coincides with the difference between the number of electrons and positrons. Thus, the physical electric charge equals $-eQ$.

The interaction vertex of QED is given by $\cL = \overline{\psi} \gamma^\mu\psi A_\mu = J^\mu A_\mu$, where $J^\mu = \overline{\psi} \gamma^\mu\psi$ is the spinor current. The definition of the scattering matrix and the interacting fields with adiabatic cutoff in the model described by the action~\eqref{eq:action_qed_mod} do not pose any difficulties. However, it is not obvious how these objects are related to the corresponding objects in QED which is supposed to be defined by the action~\eqref{eq:qed_action}. In order to address this problem one introduces the free BRST charge
\begin{equation}
 Q_{\mathrm{BRST}} 
 =\int\mHO{k}\,k^\mu (a_\mu^*(k) c_1(k) + a_\mu(k) c_1^*(k))
\end{equation}
which is well defined as an element of $L(\cD_0)$. 
The covariant inner product is semi-positive-definite on $\Ker Q_{\mathrm{BRST}}$ i.e. if $\Psi\in\Ker Q_{\mathrm{BRST}}$, then $(\Psi|\Psi)\geq 0$. Moreover, if $\Psi\in\Ker Q_{\mathrm{BRST}}\cap\Ker Q_{\mathrm{gh}}$, where $Q_{\mathrm{gh}}$ is the ghost number operator, then the condition $(\Psi|\Psi)= 0$ implies that $\Psi\in\Ran Q_{\mathrm{BRST}}$. We conclude that the covariant inner product induces a positive-definite inner product on $\cD^{\textrm{phys}}_0$, where
\begin{equation}
 \cD^{\textrm{phys}}_0 = \frac{\Ker Q_{\mathrm{BRST}}\cap\Ker Q_{\mathrm{gh}}}{\Ran Q_{\mathrm{BRST}}\cap\Ker Q_{\mathrm{gh}}}.
\end{equation}
Elements of $\cD^{\textrm{phys}}_0$ are equivalence classes denoted by $[\Psi]$ with $\Psi\in\cD_0$. Let $\cH^{\textrm{phys}}$ be the Hilbert space obtained by the completion of the pre-Hilbert space $\cD^{\textrm{phys}}_0$ with the inner product induced from the covariant inner product $(\cdot|\cdot)$. If an operator $B\in L(\cD_0)$ commutes with the free BRST and ghost charge, then it induces a unique operator $[B]\in L(\cD^{\textrm{phys}}_0)$ defined by the equality 
\begin{equation}
 [B][\Psi]:= [B\Psi].
\end{equation}
Let $\varepsilon^\mu(1,k)$ and $\varepsilon^\mu(2,k)$ be real vectors of two arbitrarily chosen physical polarizations of photons. More specifically, we demand that $\varepsilon^\mu(1,k)$, $\varepsilon^\mu(2,k)$ and $k^\mu$ are linearly independent and 
\begin{equation}
 k_\mu\varepsilon^\mu(s,k) = 0,
 \quad 
 \overline{\varepsilon_\mu(s,k)}\varepsilon^\mu(s',k) = -\delta_{ss'}. 
\end{equation}
The operators
\begin{equation}\label{eq:qed_physical_c_a}
 a^\#(s,k):=\varepsilon^\mu(s,k) [a^\#_\mu(k)],
 \quad s=1,2,
\end{equation}
are the creation and annihilation operators of the physical free photons and satisfy the standard commutation relations. Abusing the notation we identify $b^*(\sigma,p)\equiv [b^*(\sigma,p)]$ and $d^*(\sigma,p)\equiv [d^*(\sigma,p)]$. The polynomials in the operators $a^*(s,k)$, $b^*(\sigma,p)$, $d^*(\sigma,p)$ smeared with Schwartz functions generate a dense subspace in the physical Hilbert space when acting on the vacuum $[\Omega]$. 

The standard representation of the Poincar{\'e} group acting in the Krein-Hilbert-Fock space $\cH$ is denoted by $U(a,\Lambda)$. 
The transformations $U(a,\Lambda)\in L(\cD_0)$ commute with the free BRST charge. Since the representation $U(a,\Lambda)$ is Krein-unitary it induces a unitary representation $[U(a,\Lambda)]$ acting in the physical Hilbert space. The generators of the subgroup of space-time translations $[U(a)]$ have the following form
\begin{multline}
 [P^\mu] = \sum_{s=1,2} \int \mHO{k}\, k^\mu a^*(s,k) a(s,k) 
 \\
 +
 \sum_{\sigma=1,2}\int \mHm{p}\, 
 p^\mu (b^*(\sigma,p) b(\sigma,p) +d^*(\sigma,p) d(\sigma,p)).
\end{multline}

%

\section{Infrared problem in scattering theory}\label{sec:infrared}

In this section we review the origin of the IR problem in the scattering of particles in models with long-range interactions. 

\subsection{Classical and quantum mechanics}\label{sec:Coulomb_potential}

The IR problem in QED can be traced back to classical mechanics. Consider a classical particle of mass $\mass$ moving in the repulsive Coulomb potential.\footnote{We consider only the case of the repulsive Coulomb potential in order to avoid minor complications caused by the presence of the bound states.} Let $\vec x,\vec p\in\R^3$ be its position and momentum. The Hamiltonian of the system under consideration is given by the following expression 
\begin{equation}\label{eq:Coulomb_hamiltonian}
 H := H_\fr + V(\vec{x}),
 \quad
 H_\fr := \frac{|\vec p|^2}{2\mass},
 \quad
 V(\vec{x}) = \frac{e^2}{4\pi}\frac{1}{|\vec x|}.
\end{equation}
We call the position of the particle as a function of time,
\begin{equation}
 \R\ni t\mapsto \vec x(t) \in\R^3,
\end{equation}
its trajectory. If the particle is moving in the short range potential $V(\vec{x})$ such that $|V(\vec x)|< \const\,|\vec x|^{-1-\delta}$ for some $\delta>0$, then its trajectory is asymptotic in the future and in the past to the unique trajectories of the free particles whose evolution is governed by $H_\fr$, i.e. for some $\vec x_{\rout/\rin},\vec v_{\rout/\rin}\in\R^3$ it holds
\begin{equation}
 \lim_{t\to\pm\infty}\,|\vec x(t) - \vec x_{\rout/\rin} - \vec v_{\rout/\rin} t |=0.
\end{equation}
One easily verifies that the velocity of the particle moving in the Coulomb potential acquires specific values at the future and past infinity. However, the asymptotic values are approached so slowly that the distance between the particle influenced by the Coulomb potential and a freely moving particle always diverges irrespective of the choice of the initial position and velocity of the free particle -- the trajectory of the particle moving in the Coulomb potential cannot be asymptotic to a trajectory of any free particle. In fact, one shows that 
\begin{equation}
 \lim_{t\to\pm\infty}\,\left|\vec x(t) - \vec x_{\rout/\rin}(t) \right| =0,
\end{equation}
where the asymptotic trajectories are of the form
\begin{equation}\label{eq:asymp_trajectory_Coulomb}
 \vec x_{\rout/\rin}(t):=\vec x_{\rout/\rin} + \vec v_{\rout/\rin} t -  \frac{e^2}{4\pi\mass} \frac{\vec v_{\rout/\rin}}{|\vec v_{\rout/\rin}|^3}\log(|t|/\zeta)
\end{equation}
for some $\vec x_{\rout/\rin},\vec v_{\rout/\rin}\in\R^3$ and $\zeta\in\R_+$. 

A similar problem arises for the Coulomb scattering in quantum mechanics. The position and momentum of the particle, $\vec x$ and $\vec p$, are now interpreted as operators defined in the Hilbert space $\cH = L^2(\R^3)$. The free and full evolution operators are by denoted by $U_\fr(t) := \exp(-\ri t H_\fr)$, $U(t) := \exp(-\ri t H)$, respectively. The standard M{\o}ller and scattering operators in quantum mechanics, which are well-defined in the case of short-range potentials~\cite{reedsimon3}, are constructed in the following way
\begin{equation}
 \Omega_{\rout/\rin} := \slim_{t\to\pm\infty} U(-t)U_\fr(t),
 ~~~~
 S:=\Omega_{\rout}^*\Omega_{\rin}.
\end{equation}
In the case of quantum-mechanical particle moving in the Coulomb potential the IR problem manifests itself by the nonexistence of the above limits. The problem is related to the fact that the wave-function satisfying the Schr{\"o}dinger equation with the Coulomb potential approaches at large distances the free-particle wave-function only up to a logarithmically divergent phase factor~\cite{schiff1949quantum} called the Coulomb phase. The above-mentioned divergent phase factor depends only on the momentum of the particle and does not contribute to the the differential cross section for the Coulomb scattering which is finite. 

\subsection{Quantum field coupled to classical current}\label{sec:simple_model}

In the case of QED the infinite Coulomb phase is not the only manifestation of the IR problem. Another aspect of the IR problem, which we describe in this section, is the infinite photon emission. In theories without massless particles the principle of conservation of energy guarantees that the number of particles emitted during scattering is always finite. In QED or the scalar model nothing prohibits a production of infinitely many low-energetic photons. In fact, one proves that the probability of an emission of a low-energetic photon in a non-trivial scattering of charged particles is proportional to the inverse of the energy of the emitted photon. Consequently, the expected number of emitted photons is generically infinite and the probability of the emission of any finite number of photons is equal to zero~\cite{bloch1937note}. As a result, in the case of QED or the scalar model both the scattering matrix and the differential cross section are plagued by IR divergences.

In order to illustrate the above-mentioned problem we consider a simple model describing the second-quantized electromagnetic field $F^{\mu\nu}(x)$ coupled to a classical conserved current $J^\mu(x)$ of spatially compact support. The field $F^{\mu\nu}(x)$ satisfies the Maxwell equations 
\begin{equation}\label{eq:maxwell_eqs}
 \partial_\mu F^{\mu\nu}(x) = -eJ^\nu(x),
 \quad\quad
 \partial^{[\mu} F^{\alpha\beta]}(x)=0.
\end{equation}
We assume that the current $J^\mu(x)$ has future and past asymptotes $\rho_{\rin/\rout} \in C(\Hm)$ defined by
\begin{equation}\label{eq:asymptote_current}
 \lim_{\lambda\to\infty} \lambda^3 J^\mu(\pm \lambda v) =  \frac{\mass^2}{2(2\pi)^3}\,v^\mu \rho_{\rout/\rin}(\mass v),
\end{equation}
where $v\in H_1$ and the prefactor was introduced to comply with the notation used in the rest of the paper (note that the parameter $\mass$ does not appear in the model considered in this section). The above assumption with non-trivial $\rho_{\rout/\rin}$ is typically satisfied by physically relevant currents~\cite{flato1997asymptotic,herdegen1995long} (see also Sec.~\ref{sec:asymp_current_behavior}). The unique solution of~\eqref{eq:maxwell_eqs} satisfying no-incoming radiation condition is given by
\begin{equation}\label{eq:simple_model_ret}
 F^{\mu\nu}_\ret(x) = F^{\mu\nu}_\fr(x) 
 - 2e\int\rd^4 y\,D_0^\ret(x-y)\,\partial^{[\mu} J^{\nu]}(y),
\end{equation}
where $F^{\mu\nu}_\fr(x)$ is the standard free quantum field defined in the Fock space. Using the formulas from Appendix~\ref{sec:LSZ} we verify that the past LSZ limit of the field $F^{\mu\nu}_\ret(x)$ coincides with the free quantum field $F_\rin^{\mu\nu}(x)=F^{\mu\nu}_\fr(x)$ whereas the future limit gives
\begin{equation}
 F_\rout^{\mu\nu}(x) = F^{\mu\nu}_\fr(x) 
 - 2e\int\rd^4 y\,D_0(x-y)\,\partial^{[\mu} J^{\nu]}(y).
\end{equation}
One shows that the asymptotic LZS fields $F_\rin^{\mu\nu}(x)$ and $F_\rout^{\mu\nu}(x)$ are unitarily related if and only if $\rho_\rin\equiv \rho_\rout$. Since the above condition is generically violated we conclude that the scattering operator does not exist. If the initial state of the electromagnetic radiation is described by a vector in the standard Fock representation, then the final state is typically a vector is some non-Fock coherent representation which depends on both the incoming and outgoing asymptotes of the current.

In the case of the electromagnetic field $F^{\mu\nu}_\ret$ the long-range tail is informally defined by the following limit with $n$ being a unit spatial four-vector (for the precise definition of the long-range tail see Appendix~\ref{sec:long_rang_tail})
\begin{equation}\label{eq:long_rang_tail_simple}
 \lim_{r\to\infty} r^2\,F^{\mu\nu}_{\ret}(x+rn) = -e  \int_{\Hm}\rd\rho_\rin(p)\,f^{\mu\nu}(p/\mass;n),
\end{equation}
where 
\begin{equation}\label{eq:Coulomb_field}
 f^{\mu\nu}(v;x):=\frac{x^\mu v^\nu- v^\mu x^\nu}{((x\cdot v)^2-x^2)^{3/2}}
\end{equation}
is the Coulomb field of a unit point-charge moving with velocity $v$ and $\rd\rho_{\rout/\rin}(p):=\rho_{\rout/\rin}(p)\,\mHm{p}$. The above field commutes with all local fields, and as a result is a classical observable determining the super-selection sector. The above statement remains true in QED~\cite{buchholz1986gauss} and its consequences are the superselection of the velocity of the electron and the fact that the charged states cannot be eigenstates of the mass operator (the so-called infraparticle problem).

\subsection{Relativistic perturbative QFT}\label{sec:IR_problem_qft}

In the case of perturbative QED or the scalar model already the first order correction to the standard scattering operator is ill-defined. In the second order even its matrix elements between regular states are divergent.

\subsubsection{Problems in first-order of perturbation theory}

Let us investigate the first-order correction to the Bogoliubov S-matrix in the scalar model. All possible processes contributing in this order are listed in Sec.~\ref{sec:S_op_mod_first_order_corrections}. In what follows we consider only the decay of an electron into a photon and an electron. Let $p$ and $k$ be the momenta of the outgoing electron and photon, respectively, and $p'$ be the momentum of the incoming electron. All of the momenta are assumed to be on-shell. The energy-momentum conservation $p+k=p'$ implies that $p\cdot k =0$. Thus, $p=p'$ and $k=0$ which suggest that the amplitude of the decay should vanish. However, one has to bear in mind that the energy and momentum are conserved only in the adiabatic limit, if this limit exists. The amplitude for the decay is clearly non-vanishing if $\epsilon>0$. The wave function of the outgoing particles is given explicitly by 
\begin{equation}
 F_\epsilon(p,k) 
 = 
 \ri\int\mHm{p'}\,(p,k|S^{[1]}(g_\epsilon)|p')\, f(p')
 =
 \ri\int\mHm{p'}\, \F{g}_\epsilon(p+k-p')\,f(p'),
\end{equation}
where $f\in \cS(\R^4)$ is the wave function of the incoming electron. The expression for the wave function of the outgoing state can be rewritten in the form
\begin{equation}
 F_\epsilon(p,k)= \ri \int\frac{\rd^4 q}{(2\pi)^3}\, g(q)\, \theta(p^0+k^0-\epsilon q^0)\delta(2p\cdot (k-\epsilon q) + (k-\epsilon q)^2)\, f(p+k-\epsilon q).
\end{equation}
After performing the integral over $q^0$ with the use of the Dirac delta we arrive at
\begin{multline}
 F_\epsilon(p,k)=
 \frac{\ri}{\epsilon}  \int\frac{\rd^3 \vec{q}}{(2\pi)^3}\, 
 \frac{1}{2E_\mass(\vec{p}+\vec{k}-\epsilon\vec{q})}\,
 \\
 \times
 \F{g}\bigg(\frac{-E_\mass(\vec{p})-|\vec{k}|+E_\mass(\vec{p}+\vec{k}-\epsilon\vec{q})}{\epsilon},\vec{q}\bigg)f(p+k+\epsilon q),
\end{multline}
where $E_\mass(\vec p) = \sqrt{\mass^2+|\vec p|^2}$. Using the above equation we prove that the following limits exist
\begin{equation}
 \lim_{\epsilon\searrow0}  \frac{1}{\epsilon}\int\mHm{p}\mHO{k}\, \overline{h(p,k)} F_\epsilon(p,k) 
 ,
 \quad
 \lim_{\epsilon\searrow0} \int\mHm{p}\mHO{k}\,|F_\epsilon(p,k)|^2>0, 
\end{equation}
where $h\in\cS(\Hm\times\HO)$ is arbitrary.
Since $\cS(\Hm\times\HO)$ is dense in the Hilbert space $L^2(\Hm\times\HO,\rd\mu_\mass\times\rd\mu_0)$ we obtain
\begin{equation}
 \wlim_{\epsilon\searrow0}  F_\epsilon = 0,~~~\textrm{and}~~~
 \lim_{\epsilon\searrow0}  F_\epsilon ~\textrm{does not exist},
\end{equation}
where $\wlim$ and $\lim$ stand for the weak and strong limits in $L^2(\Hm\times\HO,\rd\mu_\mass\times\rd\mu_0)$, respectively. The same result holds in the case of QED. The non-existence of the of the above adiabatic limits is a manifestation of the infinite photon emission described in Sec.~\ref{sec:simple_model}.

The adiabatic limit of the first order correction to the standard scattering matrix does not exist also in the massless $\varphi^k$ theory with $k\leq 4$. Consider the wave function of the unphysical process of the decay of the vacuum into $m$ massless particles
\begin{equation}\label{eq:infrared_problem_phi_k}
 F_\epsilon(k_1,\ldots,k_k) = (k_1,\ldots,k_k|S^{[1]}(g_\epsilon)\Omega) = \F{g}_\epsilon(k_1+\ldots+k_k).
\end{equation}
One easily checks that the $L^2$ norm of the above wave-function does not converge to zero in the limit $\epsilon\searrow0$ if $k\leq 4$. Since $F_\epsilon$ converges weakly to zero the adiabatic limit $F_\epsilon$ does not exist in $L^2(\HO^{\times 4},\mu_0^{\times4})$. Note that in the case of the scalar model in higher orders of perturbation theory there appear effective vertices of the above type with any $k$. However, for $k\leq 4$ one can use the renormalization freedom to make these contributions IR-finite (e.g. the case $k=2$ is considered in Sec.~\ref{sec:vacuum_polarization}). 


\subsubsection{Problems in higher orders of perturbation theory}\label{sec:problems_higher}

To illustrate the IR problem at the second order of perturbation theory let us consider the scattering of two electrons. By analogy to QED we call this process the M{\o}ller scattering. We will show that even the matrix elements of this correction between regular states are ill-defined in the adiabatic limit. The relevant Feynman diagrams are depicted in Figure~\ref{fig:moller}: (A), (B) in Sec.~\ref{sec:moller}. For simplicity let us assume that the wave function of the two incoming electrons $f\in\cS(\Hm\times\Hm)$ is real-valued and supported outside the set of coinciding momenta. Then the wave function of the two outgoing electrons is given by
\begin{multline}
 F_\epsilon(p_1,p_2) := \int\mHm{p'_1}\mHm{p'_2}\, (p_1,p_2|S^{[2]}(g_\epsilon)|p'_1,p'_2)\,f(p'_1,p'_2)
 \\
 =-\ri\int \mHm{p'_1}\mHm{p'_2}\frac{\rd^4 k}{(2\pi)^4}\, 
 \F{g}_\epsilon(p_2-k-p'_2)\F{g}_\epsilon(p_1+k-p'_1)\,\frac{1}{k^2+\ri 0}\, f(p'_1,p'_2).
\end{multline}
Using the above expression one shows that $\Re F_\epsilon$ converges in $L^2(\Hm^{\times2},\rd\mu_\mass^{\times2})$ both strongly and weakly but the limit depends on the choice of $g$. This implies that the adiabatic limit of $\Re F_\epsilon$ does not exist (there is no distinguished switching function~$g$). Moreover, the $L^2(\Hm\times\Hm)$ norm of $\Im F_\epsilon$ diverges like $\log|\epsilon|$. Note that $\F{g}_\epsilon(q)$ converges to $(2\pi)\delta(q)$ in $\cS'(\R^4)$. However, after replacing $\F{g}_\epsilon$ by its limit in the expression for $\Im F_\epsilon$ we obtain an ill-defined integral
\begin{equation}
 \frac{1}{8}\int \frac{\rd^4 k}{(2\pi)^2}\,
 \,\frac{1}{k^2}\,
 \delta((p_1+p_2)\cdot k)
 \delta((p_1-p_2)\cdot k-k^2)\,
 f(p_1+k,p_2-k). 
\end{equation}
The above expression corresponds formally to the infinite Coulomb phase. The second-order correction to the M{\"o}ller scattering in QED has a very similar IR divergence. In both models there are other divergent contributions at the second order. 

The above results are unchanged if one uses a different IR regulator. In general, the matrix elements of the standard scattering operator between regular states are generically well-defined only if there are no contributing Feynman diagrams with internal photon propagators joining the electron lines with momenta close to the mass shell $p^2=\mass^2$~\cite{yennie1961infrared,weinberg1965infrared,weinberg1995quantum}. The modified scattering matrix introduced in Sec.~\ref{sec:modified} is free from all of the above-mentioned IR problems.

\subsubsection{Overview of various approaches}\label{sec:overview_IR}

Because of the infinite Coulomb phase and the infinite photon emission the standard scattering matrix and the standard differential cross sections are not well-defined in QED or the scalar model. In this section we give a brief overview of different approaches that have been developed to solve or circumvent this problem. For a more detailed exposition, we refer the reader to~\cite{morchio1986infrared,strocchi2013introduction,herdegen2017asymptotic,strominger2017lectures}. A thorough analysis of the IR problem in classical electrodynamics can be found in~\cite{herdegen1995long,flato1997asymptotic}.

To avoid the IR problem one usually restricts attention to the so-called infrared safe observables such as the inclusive cross section. The inclusive differential cross section with the threshold $E>0$ is usually computed with the use of the following formula~\cite{weinberg1995quantum} (or various modifications thereof formally equivalent up to terms of order $E$)
\begin{equation}\label{eq:inclusive_overview}
 \rd\sigma^{\mathrm{incl}}_{\underline{p}_1,\underline{p}_2}(E;p_1,\ldots,p_n):=\lim_{\epsilon\searrow 0}
 \sum_{m=0}^\infty\frac{1}{m!}\int_{k_1^0+\ldots+k_m^0\leq E}
 \rd\sigma^\epsilon_{\underline{p}_1,\underline{p}_2}(k_1,\ldots,k_m,p_1,\ldots,p_n),
\end{equation}
where $\underline{p}_1$, $\underline{p}_2$ and $p_1,\ldots,p_n$  are the four-momenta of the incoming and observed outgoing particles (their energies are supposed to be large compared to the threshold $E$), $k_1,\ldots,k_m$ are the four-momenta of the unobserved photons and $\rd\sigma^\epsilon$ is the standard differential cross section in a theory with some IR regulator $\epsilon$ such as: a mass of the photon~\cite{yennie1961infrared}, an artificial lower bound on the photon energies~\cite{weinberg1965infrared,weinberg1995quantum}, $d-4$ in the dimensional regularization~\cite{gastmans1973dimensional,marciano1975dimensional} or the adiabatic cutoff~\cite{dutsch1993vertex,dutsch1993infrared}. The limit $\epsilon\to0$ is supposed to exist if the threshold $E$ is fixed and positive. The construction of the inclusive cross sections requires several idealizations. As we explain in Sec.~\ref{sec:inclusive_cross_section}, their finiteness is a very delicate issue. It relies heavily on the exact energy-momentum conservation and the assumption that one can prepare an incoming state with sharp energy-momentum content. Nevertheless, it is expected that the inclusive cross sections are well defined in arbitrary order of perturbation theory. A reasoning indicating that this indeed may be the case was presented by Yiennie, Fratschi and Suura ~\cite{yennie1961infrared} and later simplified by Weinberg~\cite{weinberg1965infrared,weinberg1995quantum}. So far no rigorous proof of this statement was given. 

Since particle detectors always have a finite sensitivity soft photons with energy below some threshold may always escape undetected. Consequently, states with different content of soft photons are difficult to discriminate experimentally and have to be all taken into account as a possible final states. Thus, the inclusive differential cross sections correspond to quantities that are usually measured experimentally. Nevertheless, they do not provide a fully satisfactory description of the scattering processes. First of all, the physical interpretation of the inclusive cross section is not clear since its definition as a sum of the standard cross sections is meaningless in a theory without the IR regularization. In Sec.~\ref{sec:inclusive_cross_section} we propose a different construction which gives the same numerical predictions as the standard construction. It involves the scattering matrix constructed in the paper and makes sense in an unregulated theory. Let us also stress that the inclusive cross sections do not provide a complete information about the scattering. One can imagine~\cite{herdegen2012infrared,staruszkiewicz1981gauge} a scattering process whose only outcome is a measurable shift of the trajectories of the charged particles. Such a shift corresponds to the change of the phase of the wave function and does not give any contribution to the cross section. 

Let us briefly describe some more satisfactory attempts to solve the IR problem in perturbative QED. In the case of models with only massive particles the matrix elements of the scattering operator can be easily expressed in terms of the Green functions with the use of the LSZ reduction formulas~\cite{lehmann1955formulierung}. Note that the IR problem in the perturbative construction of the Wightman and Green functions is completely under control in most models with massless fields including QED or non-abelian Yang-Mills theories  ~\cite{blanchard1975green,lowenstein1976convergence,breitenlohner1977dimensionally1,breitenlohner1977dimensionally2,steinmann2013,duch2018weak}. Moreover, as shown in~\cite{buchholz1977collision} the standard LSZ limit of the photon field is well-defined. However, this is not the case for the Dirac field which has a non-standard asymptotic behavior. In fact, the Fourier transforms of the perturbative corrections to the interacting Feynman propagator of the electron are logarithmically divergent on the mass shell. Furthermore, it is expected that the full Feynman propagator of the electron defined in a non-perturbative way (assuming that QED can be formulated non-perturbatively) would be less singular on the mass shell than the free Feynman propagator and, in particular, it would not have a pole there~\cite{kibble1968coherent2}. For the above reasons, the application of the standard LSZ formula leads to divergent expressions in perturbation theory whereas non-perturbatively it is expected to produce vanishing scattering matrix elements. The latter statement is consistent with the fact that the scattering of charged particles is always accompanied by the infinite photon emission. The way out is to modify the LSZ reduction formulas by taking into account the non-standard asymptotic behavior of the Dirac field. One can find in the literature a number of interesting proposals for the construction of the asymptotic Dirac field which creates and annihilates the physical electrons  ~\cite{zwanziger1975scattering1,zwanziger1975scattering2,bagan1997infrared,herdegen1998semidirect,bagan2000charges1,bagan2000charges2,morchio2016infrared,collins2019new}. However, the constructions usually involve non-local functionals of interacting fields and it is not clear how to give them a proper mathematical meaning in perturbative QFT. 

Another strategy is based on the observation that the electromagnetic radiation emitted by scattered charged particles can be always accommodated in the Hilbert space of some coherent representation of the electromagnetic field. As argued by Chung and Kibble~\cite{chung1965infrared,kibble1968coherent1}, one can define a scattering matrix elements between appropriately chosen coherent states depending on the asymptotic velocities of charged particles. The drawback of this approach is an involved definition of the space of physical states and the fact that the infinite Coulomb phase, which enters the expression for the scattering matrix elements, if there are two or more charged particles in the incoming or outgoing states, has to be dropped by hand.

In this paper, we follow the approach put forward by Kulish and Faddeev~\cite{kulish1970asymptotic} and later investigated e.g. in  \cite{jauch1976theory,papanicolaou1976infrared,gomez2016asymptotic,kapec2017infrared,hirai2019dressed} (see also~\cite{murota1960radiative}). The method proposed by Kulish and Faddeev is based on the Dollard strategy~\cite{dollard1964asymptotic}. The basic object is the so-called modified scattering matrix which is constructed by comparing the full dynamics of the system with some non-trivial asymptotic dynamics. One of the advantages of this approach is the fact that it takes care of both the infinite Coulomb phase and the infinite photon emission. We describe this approach in more detail in Sec.~\ref{sec:Dollard} and give its mathematically rigorous reformulation in the case of perturbative QFT in Sec.~\ref{sec:def_modified}.

We have discussed above different strategies of solving the IR problem in perturbative QED. Many important results about the IR structure of QED were derived in the framework of axiomatic QFT. These results rely on the assumption that there exists a non-perturbative version of QED which satisfies a number of physically motivated conditions. Let us list the most important results obtained in this framework. As shown by Buchholz~\cite{buchholz1986gauss} states with non-zero electric charge cannot be eingenstates of the mass operator. Consequently, charged particles such as electrons and positrons are not elementary particles in the sense of the Wigner definition~\cite{wigner1939unitary} -- they cannot be identified with vectors in the Hilbert space of some irreducible unitary representation of the universal cover of the Poincar{\'e} group. For the above reasons charged particles are usually called in the literature infraparticles~\cite{schroer1963infrateilchen}. Another known fact is the abundance of superselection sectors. Indeed, using the assumption of locality one proves that the tail of the electromagnetic field which decays in spatial directions like the inverse distance squared is a classical observable characterizing different possible superselection sectors of the theory. Consequently, the sectors in QED are not fully characterize by their total charge. In fact, for each given physically attainable value of the total charge there are uncountably many sectors. The characterization of the space of physical states in QED was given in \cite{buchholz1982physical,buchholz2014new,herdegen1998semidirect}. A related result obtained in the axiomatic framework is the proof that the Lorentz transformation cannot be unitarily implemented in sectors with non-zero total charge~\cite{frohlich1979infrared,frohlich1979charged}.

\section{Construction of S-matrix and interacting fields}\label{sec:modified}

In this section we propose a mathematically rigorous construction of the modified scattering matrix in the scalar model and QED. We use the modified scattering theory~\cite{dollard1964asymptotic,kulish1970asymptotic} and combine it with the method of the adiabatic switching of the interaction proposed by Bogoliubov~\cite{bogoliubov1959introduction}.

\subsection{Modified scattering theory}\label{sec:Dollard}

The standard scattering theory is not applicable to most models with long range interactions. The source of the problem is the fact that even long before or after the scattering event the actual evolution of the system is not well approximated by the free evolution. The standard solution, which was originally proposed by Dollard~\cite{dollard1964asymptotic}, is to compare the evolution of the system with some non-trivial but simple asymptotic evolution. 

In this section we describe the Dollard method~\cite{dollard1964asymptotic} for the Coulomb scattering of a quantum-mechanical particle. We use the notation introduced in Sec.~\ref{sec:Coulomb_potential}. The asymptotic evolution for the model at hand is generated by the following time dependent Dollard Hamiltonian
\begin{equation}\label{eq:def_H_D}
 H_{\textrm{D}}(t) := H_\fr + \frac{e^2}{4\pi}\frac{1}{\left|t\vec p/\mass\right|}.
\end{equation}
Note that the Dollard Hamiltonian is obtained by replacing the position $\vec x$ of the particle in the expression~\eqref{eq:Coulomb_hamiltonian} for full Hamiltonian with $t \vec {v}$, where $\vec v=\vec p/\mass$ is the velocity of the particle. The corresponding unitary evolution operator is given by
\begin{equation}
 U_{\textrm{D}}(t_2,t_1) := \exp\left(-\ri\int_{t_1}^{t_2}\rd t\,H_{\textrm{D}}(t)  \right)
\end{equation}
since the Dollard Hamiltonians at different times commute. The modified M{\o}ller operators $\Omega_{\rout/\rin/\rmod}$ and the modified scattering matrix $S_\rmod$ for the Coulomb scattering are obtained by comparing the full dynamics of the system $U(t):= \exp(-\ri t H)$ with the above asymptotic dynamics
\begin{equation}
 \Omega_{\rout/\rin,\rmod}(\zeta) := \slim_{t\to\pm\infty} U(-t)U_{\textrm{D}}(t,\pm \zeta),
 ~~~~
 S_\rmod(\zeta) := \Omega_{\rout,\rmod}(\zeta)^*\Omega_{\rin,\rmod}(\zeta),
\end{equation}
where $\zeta\in\R_+$ is some fixed parameter which has to be positive in order to avoid a non-integrable singularity in time at zero of the Dollard Hamiltonian. The physical interpretation of the parameter $\zeta$ can be inferred from Eq.~\eqref{eq:asymp_trajectory_Coulomb} describing the long time behavior of the classical particle moving in the Coulomb potential. Let us mention that the modified scattering matrix $S_\rmod(\zeta)$ can be also defined by the weak limit of the expression in Eq.~\eqref{eq:S_mod_formal_intro}. The modified M{\o}ller operators and scattering matrix have the following properties
\begin{equation}\label{eq:modified_properties}
 \Omega_{\rout/\rin,\rmod}(\zeta) U_\fr(t) = U(t) \Omega_{\rout/\rin,\rmod}(\zeta),
 ~~~~~
 S_\rmod(\zeta) U_\fr(t) = U_\fr(t) S_\rmod(\zeta)
\end{equation}
and for $\zeta,\zeta'\in\R_+$
\begin{equation}\label{eq:modified_properties_V}
 \Omega_{\rout/\rin,\rmod}(\zeta') = \Omega_{\rout/\rin,\rmod}(\zeta)V_{\rout/\rin}(\zeta,\zeta'),
 \quad
 S(\zeta) = V_{\rout}(\zeta,\zeta') S(\zeta') V_{\rin}(\zeta',\zeta),
\end{equation}
where the unitary intertwining operators $V_{\rout/\rin}(\zeta,\zeta')$ are defined by
\begin{equation}
 V_{\rout/\rin}(\zeta,\zeta') := \slim_{t\to\pm\infty} U_{\textrm{D}}(\zeta;0,t)U_{\textrm{D}}(\zeta';t,0) = \exp\left(\mp \ri \frac{e^2}{4\pi}\frac{\log(\zeta'/\zeta)}{\left|\vec p/\mass\right|} \right).
\end{equation}

The $\zeta$ dependence of the modified scattering matrix may seem slightly unsatisfactory. Note that different choices of this parameter correspond to different identifications of the asymptotic states as vectors in the Hilbert space $\cH=L^2(\R^3)$. In order to make the $\zeta$ dependence of this identification explicit and construct the scattering matrix that is $\zeta$ independent we define the following spaces $\hat{\cH}_{\rin/\rout}$ of asymptotic outgoing and incoming states,
\begin{equation}
 \hat\Psi\in \hat{\cH}_{\rin/\rout}
 \quad\textrm{iff}\quad
 \hat\Psi:\R_+\to L^2(\R^3)
 \quad\textrm{and}\quad
 \hat\Psi(\zeta)=V_{\rout/\rin}(\zeta,\zeta')\hat\Psi(\zeta').
\end{equation}
The scalar product in $\hat{\cH}_{\rin/\rout}$ is given by
\begin{equation}
 (\hat\Psi|\hat\Psi'):=(\hat\Psi(\zeta)|\hat\Psi'(\zeta))
\end{equation}
and is independent of the choice of $\zeta\in\R_+$ by the unitarity of $V_{\rout/\rin}(\zeta,\zeta')$. In particular, the Hilbert spaces $\hat{\cH}_{\rin/\rout}$ are non-canonically isomorphic to $L^2(\R^3)$.  Using property~\eqref{eq:modified_properties_V} we define the unique M{\o}ller and scattering operators
\begin{equation}
\begin{gathered}
 \hat \Omega_{\rout/\rin}:\,\hat{\cH}_{\rin/\rout}\to L^2(\R^3),
 \quad
 \hat \Omega_{\rout/\rin}\hat\Psi:=\Omega_{\rmod,\rout/\rin}(\zeta)\hat\Psi(\zeta),
 \\
 \hat S:\,\hat{\cH}_{\rin}\to\hat{\cH}_{\rout},
 \quad
 (\hat S\hat\Psi)(\zeta):=S_\rmod(\zeta)\hat\Psi(\zeta).
\end{gathered} 
\end{equation}
Note that all states $\hat \Psi(\zeta)\in L^2(\R^3)$, $\zeta\in\R_+$, correspond to the same interacting state $\hat \Omega_{\rout/\rin}\hat\Psi \in L^2(\R^3)$. In Sec.~\ref{sec:state_space} we present a similar reformulation of the construction of the modified scattering matrix in the case of the scalar model and QED.

The Dollard method of constructing the modified M{\o}ller and scattering operators can be generalized (see e.g.~\cite{derezinski1997scattering}) to systems consisting of an arbitrary number of non-relativistic particles interacting via pair potentials which can be of the long-range type. The modified scattering theory for a single Dirac particle moving in the Coulomb potential was developed in~\cite{dollard1966}. The generalization to other long-range potentials was considered in~\cite{gatel2001scattering,daude2004propagation}. 

The main aim of the present paper is to formulate a mathematically rigorous generalization of the Dollard method which is applicable to models of perturbative QFT such as QED. The Dollard strategy has been applied to QED for the first time by Kulish and Faddeev~\cite{kulish1970asymptotic}. Their approach has been revisited by many authors but has not been put on firm mathematical ground so far. One of the difficulties is the fact that in interacting models of relativistic perturbative QFT it is not possible to define the interacting part of the Hamiltonian of the system as operator on a dense domain in the Fock space\footnote{This is essentially a consequence of the Haag theorem (see e.g.~\cite{streater2000pct}), which implies the non-existence of the interaction picture in QFT with no IR cutoff. Note that in spacetimes of dimension two one can define the Hamiltonian with a spatial IR cutoff. If the dimension is greater than two, then even the IR regularized Hamiltonian is ill defined. Indeed, as a consequence of worse UV properties of the free field in dimensions greater than two the Wick polynomials have to be smeared in both space and time \cite{jaffe1966wick}. Recall that the well-defined Bogoliubov S-matrix with adiabatic cutoff given by Eq.~\eqref{eq:bogoliubov_S_op} involves spacetime smearing of the time-ordered products and does not rely on the existence of the interaction Hamiltonian.}. Ignoring this problem Kulish and Faddeev suggested to construct the scattering matrix in QED with the use of the formula
\begin{multline}\label{eq:S_mod_formal}
 S_\rmod \feq \lim_{\substack{t_1\to-\infty\\ t_2\to+\infty}} \aTexp\left( \ri e \int_{0}^{t_2} \rd t ~ H_{\rD,\rint}^I(t) \right)
 \\
 \times \Texp\left( -\ri e \int_{t_1}^{t_2} \rd t ~ H_\rint^I(t) \right) 
 \,
 \aTexp\left( \ri e \int_{t_1}^{0} \rd t ~ H_{\rD,\rint}^I(t) \right),
\end{multline}
where 
\begin{equation}\label{eq:H_int_I}
 H_\rint^I(t) = U_\fr(-t)H_\rint U_\fr(t),
 \quad
 H_{\rD,\rint}^I(t) = U_\fr(-t)H_{\rD,\rint}(t) U_\fr(t)
\end{equation}
are the interaction part of the total Hamiltonian in the interaction picture and the interaction part of the Dollard Hamiltonian in the interaction picture. The formula~\eqref{eq:S_mod_formal} is formally equivalent to~\eqref{eq:S_mod_formal_intro}. Kulish and Faddeev assumed that 
\begin{equation}
 H^I_\rint(t)\feq \int\rd^3\vec x\,\mathcal{L}(t,\vec x) = 
 \int\rd^3\vec x\,\overline{\psi}(t,\vec x)\gamma^\mu\psi(t,\vec x)A_\mu(t,\vec x)
\end{equation}
and used the following Dollard Hamiltonian 
\begin{equation}\label{eq:qed_D_hamiltonian}
 H^I_{\rD,\rint}(t) \feq
 \int\mHm{p}\mHO{k}\,
 \bigg(\frac{p^\mu}{p^0}\,\rho(p)\, a_\mu^*(k)\,
 \re^{\ri \left(k^0 - \frac{\vec p\cdot \vec k}{p_0}\right) t}  
 +\,\mathrm{h.c.}\,
 \bigg),
\end{equation}
where $\rho(p)$ is defined by~\eqref{eq:def_charge_dist_momentum}. In order to motivate the form of the Dollard Hamiltonian they investigated the asymptotic behavior in time of $H^I_\rint(t)$ using heuristic reasoning based on stationary phase arguments~\cite{kulish1970asymptotic,horan2000asymptotic}. One can make this reasoning precise but only when both $H^I_\rint(t)$ and $H^I_{\rD,\rint}(t)$ are treated as forms on some domain in the Fock space, say $\cD_0$ and not as densely defined operators. This is not sufficient to make sense of the formula~\eqref{eq:S_mod_formal}. Other inconsistencies in the original Kulish and Faddeev proposal have been recently pointed out by Dybalski~\cite{dybalski2017faddeev}.

Our construction of the modified scattering operator in QED and the scalar model presented in Sec.~\ref{sec:def_modified} is based to some extent on the ideas of Kulish and Faddeev~\cite{kulish1970asymptotic} and Morchio and Strocchi~\cite{morchio2016dynamics,morchio2016infrared}. The main challenge is the implementation of the Dollard method in a relativistic model with no UV cutoff. Another difficulty is related to the perturbative nature of QED. Our modified scattering matrix is given by the adiabatic limit of the expression~\eqref{eq:intro_S_mod_g}. The factors $S^\ras_{\rout}(\eta,g) $, $S(g)$ and $S^\ras_{\rin}(\eta,g)$ in that expression play the role of the respective factors in the expression under the limit in Eq.~\eqref{eq:S_mod_formal}. The factors $R(\eta,g)$ and $R(\eta,g)^{-1}$, which have no counterpart in Eq.~\eqref{eq:S_mod_formal}, are  used to implement a coherent transformation to sectors with non-trivial long-range tail of the interacting electromagnetic field $A_\mu$ in the case of QED or the interacting massless field $\varphi$ in the case of the scalar model. In this way we are able to treat on equal footing a large class of superselection sectors that arise naturally for example in the simple model introduced in Sec.~\ref{sec:simple_model}. Moreover, in the case of QED the factors $R(\eta,g)$ and $R(\eta,g)^{-1}$ in Eq.~\eqref{eq:S_mod_formal} are essential to ensure the BRST invariance of the constructed scattering matrix. The factors $S^\ras_{\rout/\rin}(\eta,g)$ are formally given by Eq.~\eqref{eq:D_modifiers_Texp} with $\cL_{\rout/\rin}(x) := A_\mu(x) J^\mu_{\rout/\rin}(\eta;x)$, where the asymptotic currents $J^\mu_{\rout/\rin}(\eta;x)$ are regularized versions\footnote{The regularized currents are smooth and depend on a profile $\eta$ (see Def.~\ref{dfn:profile})} of 
\begin{equation}
 J_\ras^\mu(x) = \int\mHm{p}\,j_\ras^\mu(p;x)\,\rho(p),
 \quad
 j_\ras^\mu(\mass v;x):=v^\mu \int_\R \rd\tau\,\delta\left(x-\tau v\right).
\end{equation}
Note that $j_\ras^\mu(\mass v;x)$ is a current of a point-particle moving with constant velocity $v$. In order to see a relation between our Dollard modifiers $S^\ras_{\rout/\rin}(g)$ and the first and the third factor under the limit in Eq.~\eqref{eq:S_mod_formal} note that formally
\begin{equation}
 H_{\rD,\rint}(t) \feq \int\rd^3\vec x\, A_\mu(t,\vec x)\,J^\mu_\ras(x).
\end{equation}

Let us mention that the ideas of Kulish and Faddeev have been tested in several models of QED that can be defined non-perturbatively. We would like to point the construction of the one-electron states in the Nelson model~\cite{pizzo2005scattering} and non-relativistic QED~\cite{chen2010infraparticle,chen2009infraparticle} and the recent proposal for the LSZ-type formula which in principle can be used to construct many-electron states~\cite{dybalski2017faddeev}. Another interesting result has been recently obtained by Morchio and Strocchi who developed the modified scattering theory for a model of QED~\cite{morchio2016infrared} using the strategy formulated in~\cite{morchio2016dynamics}. The model describes charged particles interacting with each other via a pair potential of the same long range behavior as the Coulomb potential and coupled to the quantized electromagnetic field. The analysis of this model is simplified by the lack of the recoil of charged particles in the emission or absorption of photons. Despite this feature the model retains the key IR properties of QED. 

Let us stress that all of the above-mentioned models have some fixed UV-cutoff and are not covariant under Lorentz transformations. In contrast, the technique developed in this paper works for relativistic models and allows to solve both the IR and UV problem in these models.

\subsection{Asymptotic currents}\label{sec:asymp_currents}

In this section we define the currents $J^\mu_{\rout/\rin}(\eta;x)$, which are used in the definition of the Dollard modifiers $S^\ras_{\rout/\rin}(\eta,g)$, and the current $J^\mu_{\mathrm{sector}}(\eta,x)$, which is used in the definition of the operator $R(\eta,g)$ implementing a transformation to a sector with some non-trivial long range-tail of the massless field. 

In order to motivate our definitions of the asymptotic currents let us consider for a moment the model describing the quantized electromagnetic field coupled to an external classical current introduced in Sec.~\ref{sec:simple_model} (in this paragraph we use the notation introduced in that section). As follows from Eq.~\eqref{eq:long_rang_tail_simple} the long-range tail of the solution $F^{\mu\nu}_\ret$ of the equations of motion of this model~\eqref{eq:maxwell_eqs} with no-incoming radiation condition is correlated to the velocity of the incoming electrons (the velocity superselection problem). In order to avoid this correlation in our construction of the modified scattering matrix, we drop the no-incoming radiation condition. Instead, we choose a solution of the equations of motion~\eqref{eq:maxwell_eqs} of the form
\begin{equation}
 F^{\mu\nu}_{\ret,\rmod} = F^{\mu\nu}_\ret - \left(\begin{matrix}\textrm{free radiation field of}\\J_\rin^\mu \textrm{ depending on } \rho_\rin\end{matrix} \right),
\end{equation}
where $\rho_{\rout/\rin}$, defined by Eq.~\eqref{eq:asymptote_current}, are the future/past asymptotes of the external current $J^\mu$. Assuming that the total charge of the current $J^\mu$ vanishes we choose the asymptotic current $J_\rin^\mu$ (depending only on the past asymptote $\rho_\rin$ of the external current $J^\mu$) in such a way that the long-range tail of $F^{\mu\nu}_{\ret,\rmod}$ vanishes. The past and future LSZ limits of $F^{\mu\nu}_{\ret,\rmod}$, denoted by $F_{\rin}^{\mu\nu}$ and $F_{\rout}^{\mu\nu}$, are equivalent to free fields in some coherent, generically non-Fock, representations which depend only on the asymptotes $\rho_\rin$ and $\rho_\rout$, respectively (note that the representation class of the outgoing LSZ limit of the original field $F^{\mu\nu}_\ret$ given by Eq.~\eqref{eq:simple_model_ret} depends on both $\rho_\rin$ and $\rho_\rout$). The modified S-matrix intertwines the fields
\begin{equation}
 F_\rin^{\mu\nu} +\left(\begin{matrix}\textrm{free radiation field of}\\J_\rin^\mu \textrm{ depending on } \rho_\rin\end{matrix} \right)
 \quad\textrm{and}\quad
 F_\rout^{\mu\nu} -\left(\begin{matrix}\textrm{free radiation field of}\\J_\rout^\mu \textrm{ depending on } \rho_\rout\end{matrix} \right).
\end{equation} 
The generalization to sectors with non-trivial long-range tail of the electromagnetic field involves the use of the sector current $\underline{J}$ depending on the sector measure $\varrho$ introduced in Sec.~\ref{sec:def_modified}. In the quantum theory the asymptotes $\rho_{\rin/\rout}$ are replaced by the operator $\rho$ defined by~\eqref{eq:def_charge_dist_momentum}. We consider families of asymptotic currents $J_{\rout/\rin}^\mu$ parameterized by profiles.

\begin{dfn}\label{dfn:profile}
A profile is a real-valued Schwartz function $\eta\in\cS(\R^4)$ which satisfies the normalization condition $\int\rd^4 x\,\eta(x) \equiv \F{\eta}(0) = 1$.
\end{dfn}

There is no natural distinguished choice of a profile. A profile plays in our construction of the modified scattering matrix in perturbative QFT a similar role to the parameter $\zeta\in\R_+$ that appears in the definition of the modified M{\o}ller operators in Sec.~\ref{sec:Dollard}.

\subsubsection{QED}\label{sec:asymp_currents_qed}

We first introduce the outgoing, incoming and asymptotic numerical currents
\begin{gather}\label{eq:numerical_out_in_as_currents}
 j^\mu_{\rout/\rin}(\eta,\mass v;x):=v^\mu \int_\R \rd\tau\, \theta(\pm\tau) \eta\left(x-\tau v\right),
 \\
 j^\mu_\ras(\eta,\mass v;x):=j_\rout(\eta,\mass v;x) + j_\rin(\eta,\mass v;x) = v^\mu\, \int_\R \rd\tau\, \eta\left(x-\tau v\right),
\end{gather}
which depend on a profile $\eta$ and a four-velocity $v$ and are smooth. The Fourier transforms of the above currents have the following form
\begin{equation}
 \F{j}^\mu_{\rout/\rin}(\eta,\mass v;q)
 =
 \pm\frac{\ri\, v^\mu\,\F{\eta}(q)}{v\cdot q\pm\ri \zerop},
 \quad
 \F{j}^\mu_\ras(\eta,\mass v;q)
 =
 v^\mu\, (2\pi) \delta(v\cdot q) \F{\eta}(q).
\end{equation}
The outgoing, incoming and asymptotic operator-valued currents are defined by
\begin{equation}\label{eq:QED_J_out_in_as}
 J^\mu_{\rout/\rin/\ras}(\eta;x)
 =
 \int_{\Hm}\rd\rho(p)\,j^\mu_{\rout/\rin/\ras}(\eta,p;x),
\end{equation}
where $\rd\rho(p):=\rho(p)\,\mHm{p}$ with $\rho(p)$ defined by~\eqref{eq:def_charge_dist_momentum}. Note that
\begin{equation}\label{eq:div_out_in_currents}
\frac{\partial}{\partial x^\mu} J^\mu_{\ras}(\eta;x) = 0,
 \quad
 \frac{\partial}{\partial x^\mu} J^\mu_{\rout/\rin}(\eta;x) 
 =\pm \eta(x)Q.
\end{equation}

\subsubsection{Scalar model}\label{sec:asymp_currents_scalar}

The functions and operators we introduce in this section are Lorentz scalars. Nevertheless, by analogy to QED we call them currents. Let us first define the outgoing, incoming and asymptotic numerical currents
\begin{gather}\label{eq:j_in_out_scalar}
 j_{\rout/\rin}(\eta,\mass v;x)
 :=
 \int_\R \rd\tau\, \theta(\pm\tau) \eta\left(x-\tau v\right),
 \\
 j_\ras(\eta,\mass v;x)
 :=
 j_\rout(\eta,\mass v;x)+j_\rin(\eta,\mass v;x).
\end{gather}
The Fourier transforms of the above currents have the following form
\begin{equation}
 \F{j}_{\rout/\rin}(\eta,\mass v;q)
 =
 \pm \frac{\ri\, \F{\eta}(q)}{v\cdot q\pm\ri \zerop},
 \quad
 \F{j}_\ras(\eta,\mass v;q) 
 =
 (2\pi) \delta(v\cdot q) \F{\eta}(q).
\end{equation}
The outgoing, incoming and asymptotic operator-valued currents are given by
\begin{equation}\label{eq:asymptotic_currents_operators}
 J_{\rout/\rin/\ras}(\eta;x):=\int_{\Hm}\rd\rho(p)\, j_{\rout/\rin/\ras}(\eta,p;x),
\end{equation}
where $\rd\rho(p):=\rho(p)\,\mHm{p}$ with $\rho(p)$ defined by~\eqref{eq:def_charge_dist_momentum_scalar}.

\subsubsection{Asymptotic behavior of currents}\label{sec:asymp_current_behavior}

In this section we show that the matrix elements of the free currents and the asymptotic currents have the same timelike asymptotic behavior. We consider in detail the case of the scalar model. The obtained results generalize to QED in a straightforward manner. Let us recall that by the analogy to QED we call the operator $\normord{J(x)}\,=\,1/2\,\normord{\psi^2(x)}$ the current in the scalar model. It plays a similar role to the spinor current $\normord{J^\mu(x)}\,=\,\normord{\overline{\psi}(x)\gamma^\mu\psi(x)}$ in QED. Let
\begin{equation}
 |f) = \int \mHm{p}\,f(p)\, b^*(p)\Omega,~~~~~f\in C^\infty_{\mathrm{c}}(\Hm),
\end{equation}
be a state with one electron and let $v$ be a four-velocity. We first determine the asymptote of $\normord{J(x)}$
\begin{multline}
 \lim_{\lambda\to\infty} \lambda^3\,(f|\normord{J(\pm\lambda v)}|f) 
 =\lim_{\lambda\to\infty} \lambda^3\frac{\mass^4}{2}\,\int \rd \mu_1(v_1)\rd \mu_1(v_2)\, 
 \overline{f(\mass v_1)} f(\mass v_2) \re^{\ri \lambda \mass (v_1-v_2)\cdot v} 
 \\
 =
 \frac{\mass}{4(2\pi)^3} |f(\mass v)|^2.
\end{multline}
The last equality follows from the standard result about the asymptotic behavior of a solution of the Klein-Gordon equation (see e.g. Appendix 1 to Section XI.3 in~\cite{reedsimon3}). Next, we find the asymptotes of the asymptotic currents. To this end, we shall use the following expression for the asymptotic numerical current
\begin{multline}
 j_\ras(\eta,\mass u;\lambda v) 
 =  \int_\R \rd\tau\, \eta\left(\lambda (v - (v\cdot u) u) -\tau u\right) 
 \\
 = \int_\R \rd\tau\, \eta\left(- \tau (1+|\vec{u}|^2)^{1/2}, -\lambda (1+|\vec{u}|^2)^{1/2}\vec{u} - \tau \vec{u} \right), 
\end{multline}
which is valid for any four-velocities $v,u$ such that $v=(1,\vec{0})$. We observe that
\begin{multline}
 \lambda^3\frac{\mass}{2}\int \rd\mu_1(u)\, |f(\mass u)|^2  j_\ras(\mass u,\lambda v)
 \\
 =\frac{\mass}{4(2\pi)^3} \int\frac{\rd\tau\rd^3\vec{u}}{(1+|\vec{u}|^2/\lambda^2)^{1/2}} 
 \left|f(\mass (1+|\vec{u}|^2/\lambda^2)^{1/2},\mass \vec{u}/\lambda)\right|^2\,
 \\
 \times
 \eta\left(- \tau (1+|\vec{u}|^2/\lambda^2)^{1/2}, -(1+|\vec{u}|^2/\lambda^2)^{1/2}\vec{u} - \tau \vec{u}/\lambda \right).  
\end{multline}
Hence, by the Lebesgue dominated convergence theorem it holds
\begin{multline}
 \lim_{\lambda\to\infty} \lambda^3\, (f|J_{\rout/\rin}(\eta;\pm\lambda v)|f) 
 = \lim_{\lambda\to\infty} \lambda^3\, (f|J_\ras(\eta;\pm\lambda v)|f) 
 \\
 = \lim_{\lambda\to\infty} \lambda^3 \frac{\mass}{2}\int \rd\mu_1(u)\, |f(\mass u)|^2  j_\ras(\eta,\mass u,\lambda v)
 \\
 = \frac{\mass}{4(2\pi)^3} |f(\mass v)|^2 
 \int\rd\tau\rd^3\vec{u}\,\eta(\tau,\vec u)=
 \frac{\mass}{4(2\pi)^3} |f(\mass v)|^2.
\end{multline}
The last equality is a consequence of the normalization of the profile $\int\rd^4 x\,\eta(x)=1$. Concluding, for $\Psi,\Psi'\in\cD_0$ it holds
\begin{equation}
 \lim_{\lambda\to\infty} \lambda^3\,(\Psi|\normord{J(\pm\lambda v)}\Psi') 
 =\lim_{\lambda\to\infty} \lambda^3\,(\Psi|J_{\rout/\rin}(\eta;\pm\lambda v)\Psi') 
 =
 \frac{\mass^2}{2(2\pi)^3} (\Psi|\rho(\mass v)\Psi').
\end{equation}
In the case of QED it holds
\begin{equation}
 \lim_{\lambda\to\infty} \lambda^3\,(\Psi|\normord{J^\mu(\pm\lambda v)}\Psi') 
 =\lim_{\lambda\to\infty} \lambda^3\,(\Psi|J^\mu_{\rout/\rin}(\eta;\pm\lambda v)\Psi') 
 =
 \frac{\mass^2}{2(2\pi)^3} \,v^\mu\, (\Psi|\rho(\mass v)\Psi').
\end{equation}
We comment on the asymptotic behavior of the interacting current in Sec.~\ref{sec:fields}.

\subsection{Coulomb phase}\label{sec:coulomb_phase}

In this section we investigate the properties of the relativistic Coulomb phase which will be used in our definition of the modified scattering matrix in the scalar model and QED. In order to motivate its definition, stated below, let us consider a system of non-relativistic quantum-mechanical particles interacting via the Coulomb potential. As showed in~\cite{dollard1966adiabatic}, one can construct the modified scattering operator for this systems using the method of the adiabatic switching of the interaction\footnote{Using the adiabatic switching is not necessary~\cite{dollard1964asymptotic}.}. For simplicity, let us assume that there are only two particles of the same mass $\mass$ and denote by $\vec x_1$, $\vec x_2$ and $\vec p_1$, $\vec p_2$ their position and momenta operators. The modified scattering matrix can be obtained by the following adiabatic limit
\begin{equation}\label{eq:dollard_adiabatic}
 S_\rmod \Psi = \slim_{\epsilon\searrow0}\,
 \exp\left(\ri e^2 \Phi_{\epsilon}\right) 
 S_\epsilon 
 \exp\left(\ri e^2 \Phi_{\epsilon}\right)\Psi,
\end{equation}
where $S_\epsilon$ is the standard scattering matrix for particles interacting via the time-dependent potential $\re^{-\epsilon |t|} V(\vec{x}_1-\vec{x_2})$, where $V(\vec{x})$ is the Coulomb potential~\eqref{eq:Coulomb_hamiltonian}, and
\begin{equation}\label{eq:coulomb_non_rel}
 \Phi_{\epsilon}
 :=
 \frac{1}{4\pi}\int_{\frac{\mass}{2|\vec p_1-\vec p_2|^2}}^\infty\rd t\,
 \frac{\re^{-\epsilon|t|}}{|t|\frac{|\vec p_1-\vec p_2|}{\mass}}
\end{equation}
is the non-relativistic Coulomb phase with adiabatic cutoff. The modified scattering operator defined by~\eqref{eq:dollard_adiabatic} coincides with the one constructed using the method described in Sec.~\ref{sec:Dollard}. Note that for non-coinciding momenta of the particles the Coulomb phase~\eqref{eq:coulomb_non_rel} equals
\begin{equation}\label{eq:Coulomb_div_part}
 \frac{1}{4\pi} \frac{\mass}{|\vec p_1-\vec p_2|}\, \log(1/\epsilon)
\end{equation}
up to a term for which the limit $\epsilon\searrow0$ exists and is finite. 
\begin{dfn}
Let $g\in\cS(\R^4)$ be real-valued and $p_1,p_2\in\Hm$. The outgoing and incoming relativistic Coulomb phases with adiabatic cutoff~$g$ are given by
\begin{multline}\label{eq:def_rel_coulomb_phase}
 \Phi_{\rout/\rin}(\eta,\eta',g,p_1,p_2):=
 \\
 \int \rd^4 x \rd^4 y\,g(x)g(y)\,
 D_0^D(x-y)\,j_{\rout/\rin}(\eta,p_1;x) j_{\rout/\rin}(\eta',p_2;y),
\end{multline}
where $j_{\rout/\rin}(\eta,p_1;x)$, $j_{\rout/\rin}(\eta',p_2;y)$ are defined by Eq.~\eqref{eq:j_in_out_scalar} for arbitrary real-valued functions $\eta,\eta'\in\cS(\R^4)$ and $D^D_0$ is the Dirac propagator (cf. Appendix~\ref{sec:propagators}).
\end{dfn}
The phases $\Phi_{\rout/\rin}(\eta,\eta,g_\epsilon,p_1,p_2)$, where $\eta$ is a profile, i.e. $\int\rd^4 x\,\eta(x)=1$, will be used in the definition of the modified scattering matrix in the scalar model. In the case of QED we will use the phases $\frac{p_1\cdot p_2}{\mass^2} \Phi_{\rout/\rin}(\eta,\eta,g_\epsilon,p_1,p_2)$. By the theorem stated below, the later phases are relativistic generalizations of the non-relativistic Coulomb phase~\eqref{eq:coulomb_non_rel}. They appear in the M{\o}ller scattering in QED investigated in Sec.~\ref{sec:problems_higher} and~\ref{sec:moller}.
\begin{thm}\label{thm:coulomb_phase}
Set $g_\epsilon(x):=g(\epsilon x)$, where $g\in\cS(\R^4)$ such that $g(0)=1$.
\\
(A) Let $\eta,\eta'\in\cS(\R^4)$. If $\int\rd^4x\,\eta(x)=0$ or $\int\rd^4x\,\eta'(x)=0$, then for $p_1,p_2\in\Hm$, $p_1\neq p_2$, the pointwise limit
\begin{equation}\label{eq:phi_finite}
 \Phi_{\rout/\rin}(\eta,\eta',p_1,p_2) :=\lim_{\epsilon\searrow0}\,\Phi_{\rout/\rin}(\eta,\eta',g_\epsilon,p_1,p_2)
\end{equation}
exists. For any $\eta\in\cS(\R^4)$ it holds
\begin{equation}\label{eq:phi_finite_trans}
 \Phi_{\rout/\rin}(\eta-\eta_a,\eta+\eta_a,p_1,p_2) =0,
\end{equation}
where $\eta_a(x)=\eta(x-a)$, $a\in\R^4$. Let $h\in\cS(\Hm\times\Hm)$ with support outside the diagonal. The functions 
\begin{equation}
 \Hm\times\Hm\ni(p_1,p_2)\mapsto h(p_1,p_2)\Phi_{\rout/\rin}(\eta,\eta',p_1,p_2) \in\C
\end{equation}
belong to the Schwartz class. 
\\
(B) If $\eta$ and $\eta'$ are profiles, i.e. $\eta,\eta'\in\cS(\R^4)$ and $\int\rd^4x\,\eta(x)=\int\rd^4x\,\eta'(x)=1$, then 
\begin{equation}\label{eq:phi_div_limit}
 \Phi_{\rout/\rin}(\eta,\eta',g_\epsilon,p_1,p_2) - 
 \frac{1}{4\pi}\left(\frac{(p_1\cdot p_2)^2}{\mass^2}-1\right)^{-1/2} \log(1/\epsilon)
\end{equation} 
converges as $\epsilon\searrow0$ for $p_1,p_2\in\Hm$, $p_1\neq p_2$ (the limit depends on the choice of the switching function $g\in\cS(\R^4)$). 
\end{thm}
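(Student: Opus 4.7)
The plan is to pass to momentum space. Using the Fourier transforms $\F{j}_{\rout/\rin}(\eta, \mass v; q) = \pm \ri \F{\eta}(q)/(v \cdot q \pm \ri \zerop)$ and substituting $l = q - \epsilon k$ in the two convolutions $\F{(g_\epsilon j_i)}(q) = \int \F{g}(k) \F{j}_i(q - \epsilon k)\, \rd^4 k/(2\pi)^4$, definition~\eqref{eq:def_rel_coulomb_phase} takes the form
\begin{multline}
 \Phi_{\rout/\rin}(\eta, \eta', g_\epsilon, \mass v_1, \mass v_2) = \int \frac{\rd^4 q\, \rd^4 k_1\, \rd^4 k_2}{(2\pi)^{12}}\, \F{D}_0^D(q)\, \F{g}(k_1) \F{g}(k_2)
 \\
 \times \frac{\F{\eta}(-q - \epsilon k_1)\, \F{\eta'}(q - \epsilon k_2)}{(v_1 \cdot (q + \epsilon k_1) \mp \ri \zerop)(v_2 \cdot (q - \epsilon k_2) \pm \ri \zerop)},
\end{multline}
in which the $\epsilon$-dependence is confined to the smooth profile factors and the two denominators; the rest of the argument is an analysis of the $\epsilon \searrow 0$ behaviour of this integral.

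For Part~(A), I would first take the pointwise limit of the integrand. The resulting $q$-integrand is the product of $\F{D}_0^D(q)$ (with a $1/q^2$-type singularity at the origin), the tempered distribution $(v_1 \cdot q \mp \ri 0)^{-1}(v_2 \cdot q \pm \ri 0)^{-1}$ -- which for $v_1 \neq v_2$ is well defined by H\"ormander's criterion on the wavefront sets of its factors -- and the Schwartz factor $\F{\eta}(-q)\F{\eta'}(q)$. Under the hypothesis $\F{\eta}(0) = 0$ or $\F{\eta'}(0) = 0$, the Schwartz numerator vanishes to first order at $q=0$, taming the apparent $|q|^{-4}$ singularity to $|q|^{-3}$, which is integrable against the measure $\rd^4 q \sim r^3\, \rd r\, \rd\Omega$ near $q=0$; Schwartz decay of $\F{\eta}, \F{\eta'}, \F{g}$ controls the behaviour at infinity. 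A dominated-convergence argument then produces the explicit limit
\begin{equation}
 \Phi_{\rout/\rin}(\eta, \eta', \mass v_1, \mass v_2) = \int \frac{\rd^4 q}{(2\pi)^4}\, \F{D}_0^D(q)\, \frac{\F{\eta}(-q)\, \F{\eta'}(q)}{(v_1 \cdot q \mp \ri \zerop)(v_2 \cdot q \pm \ri \zerop)},
\end{equation}
after using $\int \F{g}(k)\, \rd^4 k = (2\pi)^4 g(0) = (2\pi)^4$. The identity~\eqref{eq:phi_finite_trans} then follows by substituting $\F{\eta_a}(q) = \re^{\ri a \cdot q}\F{\eta}(q)$ and exploiting the explicit $q \mapsto -q$ symmetries of the various factors. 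Smoothness of the integrand in $(v_1, v_2)$ on $\{v_1 \neq v_2\}$, together with at most polynomial growth of the integral in $v_i$, is absorbed by the Schwartz decay of $h$ and yields Schwartz-class membership of $h \cdot \Phi_{\rout/\rin}$.

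For Part~(B), the numerator no longer vanishes at $q=0$, which produces the logarithmic divergence. I would fix a reference profile $\chi$ once and for all, and decompose $\eta = \eta_0 + \chi$, $\eta' = \eta_0' + \chi$ with $\F{\eta}_0(0) = \F{\eta}_0'(0) = 0$. Bilinearity of~\eqref{eq:def_rel_coulomb_phase} in $(\eta,\eta')$ gives
\begin{equation}
 \Phi_{\rout/\rin}(\eta, \eta', g_\epsilon, p_1, p_2) = \Phi_{\rout/\rin}(\chi, \chi, g_\epsilon, p_1, p_2) + R_\epsilon,
\end{equation}
where $R_\epsilon$ collects the three Part~(A) type cross-terms and therefore converges as $\epsilon \searrow 0$. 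It remains to extract the $\log(1/\epsilon)$-divergence of the universal quantity $\Phi_{\rout/\rin}(\chi, \chi, g_\epsilon, p_1, p_2)$, which is independent of $\eta, \eta'$. The divergence originates from the scaling region $|q| \sim \epsilon$: after rescaling $q = \epsilon q'$ the profile factors $\F{\chi}(\pm \epsilon(q' - k_i))$ tend to $1$, the denominators become $\epsilon^{-2} (v_1 \cdot q' \mp \ri 0)^{-1}(v_2 \cdot q' \pm \ri 0)^{-1}$, the propagator contributes $\F{D}_0^D(\epsilon q') \sim \epsilon^{-2}/q'^2$, and the Jacobian $\epsilon^4$ combines with these to give a scale-invariant $q'$-integrand whose UV cutoff is provided by the Schwartz $\F{\chi}$'s. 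The coefficient of $\log(1/\epsilon)$ is Lorentz-invariant and can be fixed by evaluation in the rest frame of $p_1 + p_2$, where $v_{1,2} = (\cosh\xi, \mp \sinh\xi\, \vec e)$ and $\cosh 2\xi = v_1 \cdot v_2$; the resulting one-dimensional radial integral equals $1/(4\pi \sinh 2\xi) = \frac{1}{4\pi}((v_1 \cdot v_2)^2 - 1)^{-1/2}$, which matches the stated coefficient $\frac{1}{4\pi}((p_1 \cdot p_2)^2/\mass^2 - 1)^{-1/2}$ after $p_i = \mass v_i$. The remainder after subtracting this piece is once more of Part~(A) type and converges.

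The main technical obstacle is the rigorous treatment of the distribution-valued integrand in the regime where three singular distributions -- the propagator $\F{D}_0^D$ and the two boundary-value distributions $(v_i \cdot q \pm \ri 0)^{-1}$ -- concentrate at the origin. Producing $\epsilon$-uniform dominating envelopes for the dominated-convergence step of Part~(A), and carefully decomposing the integration domain into IR, intermediate and UV regions to isolate the coefficient of $\log(1/\epsilon)$ in Part~(B) without spurious contributions from the cross-term subtractions, is where the substantive work lies.
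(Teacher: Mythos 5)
Your momentum-space route is genuinely different from the paper's proof, which never Fourier transforms: the paper inserts the explicit kernel $D_0^D(x-y)=\frac{1}{4\pi}\delta(x^2)$ into the position-space definition, replaces $g_\epsilon(x)g_\epsilon(y)$ by $g_\epsilon(\tau_1 v_1)g_\epsilon(\tau_2 v_2)$ up to an $O(\epsilon^{1/2})$ error, and then subtracts the one-dimensional "trajectory" integral $\frac{1}{4\pi}\int\rd\tau_1\rd\tau_2\,g_\epsilon(\tau_1 v_1)g_\epsilon(\tau_2 v_2)\theta(\pm\tau_1-1)\theta(\pm\tau_2-1)\delta((\tau_1 v_1-\tau_2 v_2)^2)$, whose coefficient is $\big(\!\int\!\eta\big)\big(\!\int\!\eta'\big)$ — this single subtraction yields (A) (where the coefficient vanishes) and (B) (where it equals one and the $\tau$-integral is elementary and manifestly $\frac{1}{4\pi|\vec v_1|}\log(1/\epsilon)$ in the frame $v_2=(1,0,0,0)$) in one stroke. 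Your bilinear decomposition $\eta=\eta_0+\chi$ in part (B) plays exactly the role of this subtraction, and your rest-frame coefficient $\frac{1}{4\pi}((v_1\cdot v_2)^2-1)^{-1/2}$ agrees with the paper's $\frac{1}{4\pi|\vec v_1|}$. What the paper's route buys is that after using $\delta((x-y)^2)$ to kill one integration, every remaining integral is an absolutely convergent integral of ordinary functions, so all estimates are elementary.

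The concrete gap in your plan is the dominated-convergence step of part (A). Your power counting ($|q|^{-4}$ tamed to $|q|^{-3}$ by the vanishing numerator) addresses only the origin, but $\F{D}_0^D(q)$ is a principal value of $1/q^2$ supported singularly on the \emph{entire} light cone $q^2=0$, where $1/q^2$ is not locally integrable; likewise $(v_i\cdot q\pm\ri 0)^{-1}$ are boundary-value distributions on the hyperplanes $v_i\cdot q=0$, not functions. One therefore cannot dominate the integrand by its absolute value, and "integrable against $r^3\,\rd r\,\rd\Omega$" is not the right notion away from the origin. For $\epsilon>0$ the convolution with $\F{g}$ smooths the two boundary values, but as $\epsilon\searrow 0$ their singular supports re-form on the hyperplanes, and what is actually needed is an $\epsilon$-uniform bound of the type $|\widetilde{(g_\epsilon j)}(q)|\lesssim |v\cdot q|^{-1}$ plus an $L^1$-controlled remainder, combined with a careful treatment of the pairing with $\mathrm{P}(1/q^2)$ near the cone (where the relevant hypersurfaces meet only at $q=0$, so the product is well defined away from the origin, but the estimates are not automatic). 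This is precisely where the substance of the proof lies — you flag it yourself as the "main technical obstacle" — so as written the argument is a plausible programme rather than a proof; the paper's position-space estimates are the replacement for this missing step.
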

\begin{proof}
Let $v_1,v_2$ be four-velocities. First, we note that
\begin{multline}\label{eq:thm_coulomb_phase1}
 \Phi_{\rout/\rin}(\eta,\eta',g_\epsilon,\mass v_1,\mass v_2)
 =\frac{1}{4\pi} \int\rd\tau_1\rd\tau_2\rd^4 x\rd^4 y\,\theta(\pm\tau_1)\theta(\pm\tau_2)\,\delta((x-y)^2)
 \\
 \times\,g_\epsilon(x) g_\epsilon(y)
 \,\eta(x-v_1 \tau_1)\,\eta'(y-v_2\tau_2).
\end{multline}
In the proof of part (A) of the theorem we will use two identities:
\begin{multline}\label{eq:thm_coulomb_phase2}
 \lim_{\epsilon\searrow0}\,\frac{1}{4\pi} \int\rd\tau_1\rd\tau_2\rd^4 x\rd^4 y\,\theta(\pm\tau_1)\theta(\pm\tau_2)\,\delta((x-y)^2)
 \\
 \times [g_\epsilon(x) g_\epsilon(y)
 -g_\epsilon(\tau_1 v_1) g_\epsilon(\tau_2 v_2)]
 \, \eta(x-v_1 \tau_1)\,\eta'(y-v_2\tau_2)=0
\end{multline}
and
\begin{multline}\label{eq:thm_coulomb_phase3}
 \lim_{\epsilon\searrow0}\,\frac{1}{4\pi} \int\rd\tau_1\rd\tau_2\rd^4 x \rd^4 y\,g_\epsilon(\tau_1 v_1) g_\epsilon(\tau_2 v_2)\,\eta(x)\,\eta'(y)
 \\
 \times [\theta(\pm\tau_1)\theta(\pm\tau_2)\delta((x-y+\tau_1 v_1-\tau_2 v_2)^2)-\theta(\pm\tau_1-1)\theta(\pm\tau_2-1)\delta((\tau_1 v_1-\tau_2 v_2)^2)]
 \\
 =\frac{1}{4\pi} \int\rd^4 x \rd^4 y\,\eta(x)\,\eta(x\mp y)
 \int \rd\tau_1\rd\tau_2\,
 \big[\theta(\tau_1)\theta(\tau_2)\delta((y+\tau_1 v_1-\tau_2 v_2)^2)
 \\
 -\theta(\tau_1-1)\theta(\tau_2-1)\delta((\tau_1 v_1-\tau_2 v_2)^2)\big],
\end{multline}
which are proved below. The above limits are pointwise and exist for $v_1\neq v_2$. Note that $\Phi_{\rout/\rin}(\eta,\eta',\mass v_1,\mass v_2)$ defined by~\eqref{eq:phi_finite} coincides with the expression on the RHS of~\eqref{eq:thm_coulomb_phase3} and is a smooth function of the velocities $v_1$ and $v_2$ outside the diagonal. 

To prove~\eqref{eq:thm_coulomb_phase2} we note that
\begin{multline}
 |g_\epsilon(x) g_\epsilon(y)-g_\epsilon(\tau_1 v_1) g_\epsilon(\tau_2 v_2)| 
 \\
 \leq |g_\epsilon(x)| |g_\epsilon(y)-g_\epsilon(\tau_2 v_2)| + |g_\epsilon(x)-g_\epsilon(\tau_1 v_1)| |g_\epsilon(\tau_2 v_2)|
\end{multline}
and
\begin{equation}
 |x|^{1/2}\,|g_\epsilon(x)|\frac{|g_\epsilon(y)-g_\epsilon(\tau_2 v_2)|}{\epsilon^{1/2}|y-\tau_2 v_2|},~
 |\tau_2 v_2|^{1/2}\,|g_\epsilon(\tau_2 v_2)| \frac{|g_\epsilon(x)-g_\epsilon(\tau_1 v_1)|}{\epsilon^{1/2}|x-\tau_1 v_1|}\leq \const.
\end{equation}
Eq.~\eqref{eq:thm_coulomb_phase2} is a consequence of the following estimates:
\begin{multline}
 \epsilon^{1/2} \int\rd\tau_1\rd\tau_2\rd^4 x \rd^4 y\,\delta((x-y)^2)
 \,\frac{|x-\tau_1 v_1|}{|x|^{1/2}} |\eta(x-\tau_1v_1)|\,|\eta'(y-\tau_2v_2)|
 \\
 \leq 
 \epsilon^{1/2} \int\rd\tau_1\rd^4 x \rd^4 y\,
 \,\frac{|x|}{|\vec x-\vec y+\tau_1\vec v_1|\,|x+\tau_1 v_1|^{1/2}} |\eta(x)|\,|\eta'(y)|
 \leq \const\, \epsilon^{1/2} 
\end{multline}
and
\begin{multline}
 \epsilon^{1/2} \int\rd\tau_1\rd\tau_2\rd^4 x \rd^4 y\,\delta((x-y)^2)
 \,\frac{|y-\tau_2 v_2|}{|\tau_1v_1|^{1/2}} |\eta(x-\tau_1v_1)|\,|\eta'(y-\tau_2v_2)|
 \\
 \leq 
 \epsilon^{1/2} \int\rd\tau_2\rd^4 x \rd^4 y\,
 \,\frac{|y|}{|\vec x-\vec y-\tau_2\vec v_2|\,|\tau_2 v_2|^{1/2}} |\eta(x)|\,|\eta'(y)|
 \leq \const\, \epsilon^{1/2},
\end{multline}
where in each of the above bounds we first changed the integration variables $(x,y)\mapsto(x+\tau_1v_1,y+\tau_2v_2)$ and subsequently performed the integral over $\tau_2$ or $\tau_1$ using the Dirac delta and assuming that $v_2=(1,0,0,0)$ or $v_1=(1,0,0,0)$. 

In order to show Eq.~\eqref{eq:thm_coulomb_phase3} it is enough to note that the expression under the limit on the LHS of this equation coincides with
\begin{multline}
 \sum_{\pm}\frac{1}{8\pi} \int\rd\tau_1\rd^4 x \rd^4 y\,\eta(x)\,\eta'(y)
 \\
 \bigg[\theta(\pm\tau_1)\theta(\pm\tau^\pm_2(\tau_1,x-y))\,
 g_\epsilon(\tau_1 v_1) g_\epsilon(\tau^\pm_2 v_2(\tau_1,x-y))
 \frac{1}{|\vec{x}-\vec{y}+\tau_1\vec v_1|}
 \\
 -
 \theta(\pm\tau_1-1)\theta(\pm\tau^\pm_2(\tau_1,0)-1)\,
 g_\epsilon(\tau_1 v_1) g_\epsilon(\tau_2^\pm(\tau_1,0) v_2)
 \frac{1}{|\tau_1\vec v_1|}\bigg],
\end{multline}
where
\begin{equation}
 \tau^\pm_2(\tau_1,z)=z^0-\tau_1v^0_1 \pm |\vec{z}+\tau_1\vec v_1|.
\end{equation}
By~\eqref{eq:thm_coulomb_phase2} and~\eqref{eq:thm_coulomb_phase3} in order to prove part (B) of the theorem it is enough to study the expression 
\begin{multline}\label{eq:thm_coulomb_phase4}
 \frac{1}{4\pi} \int\rd\tau_1\rd\tau_2\,g_\epsilon(\tau_1 v_1) g_\epsilon(\tau_2 v_2)\,\theta(\pm\tau_1-1)\theta(\pm\tau_2-1)\,\delta((\tau_1 v_1-\tau_2 v_2)^2)
 \\
 =\sum_{\pm}\frac{1}{8\pi} \int\rd\tau_1\,
 \theta(\pm\tau_1-\epsilon)\theta(\pm\tau^\pm_2(\tau_1,0)-\epsilon)\,
 g(\tau_1 v_1) g(\tau_2^\pm(\tau_1,0) v_2)
 \frac{1}{|\tau_1\vec v_1|},
\end{multline}
where we assumed that $v_2=(1,0,0,0)$. Using the fact that $g(0)=1$ it is easy to see that the limit $\epsilon\searrow0$ of the difference between the above expression and $\frac{1}{4\pi|\vec v_1|} \log(1/\epsilon)$ exists (and depends on the choice of $g$). This proves part (B) of the theorem.
\end{proof}

\subsection{Domains in Fock space}\label{sec:domains}

In the perturbation theory the scattering matrix and the interacting fields are formal power series whose coefficients are densely-defined unbounded operators. In this section we introduce domains in the Fock space that we use in our construction. 

For a moment let us restrict attention to the scalar model. We begin by fixing some terminology. We use the following standard shorthand notation for the improper state with $n+m$ particles of definite momenta
\begin{equation}
 |p_1,\ldots,p_n,k_1,\ldots,k_m) := b^*(p_1)\ldots b^*(p_n)a^*(k_1)\ldots a^*(k_m)\Omega,
\end{equation}
where $\Omega$ is the vacuum state. A state $\Psi$ in the Fock space $\cH$ of the form
\begin{multline}\label{eq:state_psi}
 \Psi= \int\mHm{p_1}\ldots\mHm{p_n}\mHO{k_1}\ldots\mHO{k_m}\,
 \\
 h(p_1,\ldots,p_n,k_1,\ldots,k_m)\,
 |p_1,\ldots,p_n,k_1,\ldots,k_m),
\end{multline}
where $h\in L^2(\Hm^{\times n}\times\HO^{\times m},\rd\mu_\mass^{\times n}\times\rd\mu_0^{\times m})$, $n,m\in\N_0$, is called a state with definite particle content. The square-integrable function $h$ is called the wave function of the state $\Psi$. Without loss of generality we assume that the function $h(p_1,\ldots,p_n,k_1,\ldots,k_m)$ is symmetric under permutations of the momenta $p_1,\ldots,p_n$ and $k_1,\ldots,k_m$. The domain $\cD_\fin\subset\cH$ by definition consists of finite linear combinations of states with definite particle content. All the domains that we will consider are subsets of $\cD_\fin$. In particular this is true for the domain $\cD_0$ defined by~\eqref{eq:dom}. 

The domain $\cD_1$ consists of linear combinations of states $\Psi$ given by~\eqref{eq:state_psi} such that
\begin{equation}
 h(p_1,\ldots,p_n,k_1,\ldots,k_m) =H(\vec{p}_1,\ldots,\vec{p}_n,k_1^0,\hat{k}_1,\ldots,k_m^0,\hat{k}_m) 
\end{equation}
for some $H\in \cS(\R^{3n}\times(\R\times S^2)^m)$, where $\hat{k}=\vec{k}/|\vec{k}|$. By definition, the space $\cS(\R^{3n}\times(\R\times S^2)^m)$ consists of smooth functions $H\in C^\infty(\R^{3n}\times(\R\times S^2)^m)$ such that the Schwartz seminorms of the functions $$\R^{3n}\times\R^m\ni (\vec{p}_1,\ldots,\vec{p}_n,k^0_1,\ldots,k_m^0)\mapsto H(\vec{p}_1,\ldots,\vec{p}_n,k^0_1,\hat{k}_1,\ldots,k_m^0,\hat{k}_m)$$ are uniformly bounded for $\hat{k}_1,\ldots,\hat{k}_m\in(S^2)^m$. We have $\cD_1\supsetneq \cD_0$. An example of $\Psi\in\cD_1$, $\Psi\notin \cD_0$ is $\Psi = \int\mHO{k} \frac{\F{\eta}(k)}{p\cdot k} |k)$, where $\F{\eta}\in\cS(\R^4)$, $\F{\eta}(0)=0$ and $p\in\Hm$ is fixed. 

We will also need the domain $\cD_2$ which consists of those states belonging to $\cD_1$ whose wave functions vanish when the momenta of any pair of massive particles are sufficiently close to each other. More precisely, the domain $\cD_2$ consists of the states from $\cD_1$ whose wave functions $h$ are supported away from
\begin{equation}
 \{ (p_1,\ldots,p_n,k_1,\ldots,k_m)\,:\, \exists_{i\neq j} p_i=p_j \}.
\end{equation}
It holds $\cD_1 \supsetneq \cD_2$. The domains $\cD_2$ and $\cD_0$ are not comparable.

The above definitions of the domains $\cD_1$, $\cD_2$ can be adapted in a natural way to the case of QED. The distinction between massive and massless particles in the definitions above remains the same. However, one has to keep in mind the fact that in QED there are more types of massless particles (four polarizations of photons and two types of ghosts) as well as massive particles (two spin polarizations of electrons and positrons). Moreover, since ghosts as well as electrons and positrons are fermions the wave functions are assumed to be antisymmetric under permutations of the momenta of these particles. 

Finally, let us note that the domains $\cD_0$, $\cD_1$, $\cD_2$ are invariant under the action of the standard Fock representation of the Poincar{\'e} group $U(a,\Lambda)$ as well as the free BRST charge $Q_\mathrm{BRST}$ and the free ghost charge $Q_{\mathrm{gh}}$.

\subsection{Definition of modified S-matrix and modified interacting fields}\label{sec:def_modified}

In this section we present our proposal for the definition of the modified scattering matrix and the modified interacting fields in the scalar model and QED. We use the asymptotic currents introduced in Sec.~\ref{sec:asymp_currents} which depend on the profile $\eta$, that is a real-valued Schwartz function whose integral over spacetime is normalized to one (cf. Def.~\ref{dfn:profile}).

\begin{dfn}\label{def:D_modifiers}
The outgoing and incoming asymptotic vertices in QED are given by
\begin{equation}\label{eq:def_asymp_int_vertices}
 \cL_{\rout/\rin}(\eta;x):= J^\mu_{\rout/\rin}(\eta;x)A_\mu(x).
\end{equation}
The outgoing and incoming Dollard modifiers with adiabatic cutoff are given by
\begin{multline}\label{eq:def_modifiers}
 S^\ras_{\rout/\rin}(\eta,g)
 :=
 \exp\left(-\ri e\int \rd^4 x \, g(x)\, \cL_{\rout/\rin}(\eta;x)\right)
 \\ 
 \times\exp\left(\ri\frac{e^2}{2}\!\int\! \rd^4 x \rd^4 y\,g(x)g(y)
 g_{\mu\nu}D_0^D(x-y)\normord{J^\mu_{\rout/\rin}(\eta;x) J^\nu_{\rout/\rin}(\eta;y)}\right),
\end{multline}
where $g\in\cS(\R^4)$ is a switching function and $D_0^D$ is the Dirac propagator. In the case of the scalar model we replaced in the above formulas $A_\mu$ with $\varphi$, the vector currents with their scalar counterparts and $g_{\mu\nu}$ with $-1$. 
\end{dfn}
Let us consider for a moment the scalar model. First, note that it holds
\begin{gather}
 [\cL_{\rout/\rin}(\eta;x),\cL_{\rout/\rin}(\eta;y)] 
 = -\ri D_0(x-y)\,J_{\rout/\rin}(\eta;x) J_{\rout/\rin}(\eta;y),
 \\
 [[\cL_{\rout/\rin}(\eta;x),\cL_{\rout/\rin}(\eta;y)],\cL_{\rout/\rin}(\eta;z)] =0.
\end{gather}
where $D_0$ is the commutator function. Consequently, by the formula~\eqref{eq:magnus}, known as the Magnus expansion, the Dollard modifiers are closely related to the expression~\eqref{eq:D_modifiers_Texp}. Our Dollard modifiers differ from it only due to the normal-ordering of asymptotic currents in the second line of~\eqref{eq:def_modifiers}. The normal-ordering can be considered a renormalization of the self-energy of electron and is needed for the existence of the adiabatic limit of our modified scattering matrix (cf. Sec.~\ref{sec:2ed_se}). Let us present a more explicit expression for the Dollard modifiers 
\begin{multline}
 S^\ras_{\rout/\rin}(\eta,g_\epsilon)\,|p_1,\ldots,p_n,k_1,\ldots,k_m)
 \\
 =
 \prod_{k=1}^n W_{\rout/\rin}(\eta,g_\epsilon,p_k) 
 \prod_{\substack{k,l=1\\k\neq l}}^n 
 \exp\left(-\ri\frac{e^2}{2} \Phi_{\rout/\rin}(\eta,\eta,g_\epsilon,p_k,p_l)\right)
 |p_1,\ldots,p_n,k_1,\ldots,k_m),
\end{multline}
where
\begin{equation}
 W_{\rout/\rin}(\eta,g_\epsilon,p) 
 := 
 \exp\left(-\ri e\int \rd^4 x \, g_\epsilon(x) j_{\rout/\rin}(\eta,p;x) \varphi(x)  \right)
\end{equation}
and the functions~$\Phi_{\rout/\rin}(\eta,\eta,g_\epsilon,p_k,p_l)$ are the Coulomb phases defined by~\eqref{eq:def_rel_coulomb_phase}. The Weyl operators $W_{\rout/\rin}(\eta,g_\epsilon,p)$ are responsible for the generation of clouds of coherent photons surrounding massive particles. For any $p\in\Hm$ and $\epsilon>0$ the vector $W_{\rout/\rin}(\eta,g_\epsilon,p)\Omega$ is a coherent state in the Fock space. The expected number of photons in the state $W_{\rout/\rin}(\eta,g_\epsilon,p)\Omega$ diverges in the limit $\epsilon\searrow0$ and, in fact, this limit does not exist in the Fock space. In the adiabatic limit the operators $W_{\rout/\rin}(\eta,g_\epsilon,p)$ effectively remove the long-range tail of the modified retarded massless field $\varphi_{\ret,\rmod}(x)$ and make possible a definition of the scalar model in the vacuum sector. In order to define the scalar model in different sectors one has to use non-trivial operators $R(\eta,g)$ introduced below. Let us also note that
\begin{equation}\label{eq:dollard_no_massive}
 S^\ras_{\rout/\rin}(\eta,g)\Psi = \Psi,
 ~~~~
 (S^\ras_{\rout/\rin}(\eta,g))^{-1}\Psi = \Psi
\end{equation}
whenever the state $\Psi$ does not contain massive particles. The above identities follow immediately from the fact that $J_{\rout/\rin}(\eta;x)\Psi$ vanishes for states $\Psi$ of the above form. All of the above remarks (with suitable modifications) apply also to the case of QED.

\begin{dfn}\label{def:varrho}
A sector measure $\varrho$ is a compactly supported signed measure on the mass hyperboloid $\Hm$ with finite total variation $|\varrho|$. 
\end{dfn}

\begin{dfn}\label{def:sector}
In the case of QED we fix some sector measures $\varrho_0$, $\varrho_1$ such that $\varrho_0(\Hm)=0$ and $\varrho_1(\Hm)=1$ and set $\hat\varrho = \varrho_0 + Q\varrho_1$, where $Q$ is the electric charge. In the case of the scalar model $Q=0$ and $\hat\varrho=\varrho_0$ is not restricted by the condition $\varrho_0(\Hm)=0$. In both QED and the scalar model we define the sector currents
\begin{equation}
 \underline{J}^\mu(\eta,x) := \int_{\Hm}\rd\hat\varrho(p)\, j^\mu_\rin(\eta,p;x),
 \quad
 \underline{J}(\eta,x) := \int_{\Hm}\rd\hat\varrho(p)\, j_\rin(\eta,p;x),
\end{equation}
the sector vertices
\begin{equation}\label{eq:def_vertex_sector_qed}
 \underline{\cL}(\eta;x) := \underline{J}^\mu(\eta;x)\,A_\mu(x),
 \quad
 \underline{\cL}(\eta;x) := \underline{J}(\eta;x)\,\varphi(x)
\end{equation}
and the sector operator
\begin{equation}\label{eq:def_operator_R}
 R(\eta,g) := \exp\left(-\ri e\int \rd^4 x \, g(x)\,\underline{\cL}(\eta;x) \right),
\end{equation}
where $g\in\cS(\R^4)$ is a switching function. 
\end{dfn}
One could replace the current $j_\rin(\eta,p;x)$ used in the above definition with a different current whose divergence coincides with $-\eta(x)$. In the case of the scalar model one can set $\hat\varrho=0$ which is our standard choice. As we will show in Sec.~\ref{sec:fields}, this choice results in the construction of the modified scattering matrix and interacting fields in the vacuum sector with trivial long-range tail of the interacting massless field. In the case of QED, the constructed space of asymptotic states contains the vacuum sector if $\varrho_0=0$. The conditions $\varrho_0(\Hm)=0$ and $\varrho_1(\Hm)=1$ are necessary to respect the BRST invariance and the Gauss law. Arguably the simplest choice of $\hat\varrho$ in the case of QED is $\varrho_0=0$ and $\varrho_1=\varrho_{\mass \mathrm{v}}$, where $\mathrm{v}$ is a four-velocity and $\varrho_p$ is the Dirac measure at $p\in\Hm$. As we demonstrate in Sec.~\ref{sec:fields}, this choice of $\hat\varrho$ corresponds to the super-selection sectors with nonzero electric charge in which the flux of the electric field is independent of the spatial direction in the reference frame of some observer whose four-velocities coincides with~$\mathrm{v}$. Note that in the case of QED we are considering simultaneously a family of sectors parametrized by the total electric charge $q\in\textrm{sp}(Q)=\Z$. One could also carry out the construction in one particular sector. To this end, it it enough to replace $R(\eta,g)$ with $1_q(Q)R(\eta,g)$ where $1_q(Q)$ is the operator acting in the Fock space projecting to the eigenspace of the charge operator $Q$ with an eigenvalue $q\in\Z$.

\begin{dfn}\label{def:modified_S_fields}
Let $g$ be a switching function, $\eta$ be a profile and $\hat\varrho$ be a sector measure. The modified scattering operator with adiabatic cutoff is defined by
\begin{equation}\label{eq:mod_S_op}
 S_{\rmod}(\eta,g) :=  R(\eta,g) S^\ras_\rout(\eta,g)S(g)S^\ras_\rin(\eta,g) R(\eta,g)^{-1},
\end{equation}
where $S(g)$ is the standard Bogoliubov scattering operator given by~\eqref{eq:bogoliubov_S_op}. The modified extended scattering operator with adiabatic cutoff $S_{\rmod}(\eta,g;h)$ is obtained by the same formula with $S(g)$ replaced with $S(g;h)$ given by~\eqref{eq:bogoliubov_S_op_extended}. The modified interacting fields with adiabatic cutoff are defined by
\begin{equation}\label{eq:mod_fields}
\begin{aligned}
 C_{\ret,\rmod}(\eta,g;x) &:= (-\ri) \frac{\delta}{\delta h(x)}
 S_{\rmod}(\eta,g)^{-1} S_{\rmod}(\eta,g;h)\bigg|_{h=0},
 \\
 C_{\adv,\rmod}(\eta,g;x) &:= (-\ri) \frac{\delta}{\delta h(x)}
 S_{\rmod}(\eta,g;h) S_{\rmod}(\eta,g)^{-1}\bigg|_{h=0}.  
\end{aligned}
\end{equation}
\end{dfn}

Observe that it holds
\begin{equation}\label{eq:mod_fields_relation}
\begin{aligned}
 C_{\ret,\rmod}(\eta,g;h)&=R(\eta,g) S^\ras_\rin(\eta,g)^{-1}
 C_\ret(g;h) S^\ras_\rin(\eta,g) R(\eta,g)^{-1},
 \\
 C_{\adv,\rmod}(\eta,g;h)&= R(\eta,g) S^\ras_\rout(\eta,g)
 C_\adv(g;h) S^\ras_\rout(\eta,g)^{-1}R(\eta,g)^{-1},
\end{aligned}
\end{equation}
where $C_{\ret/\adv}(g;h)$ are the standard retarded and advanced interacting fields introduced in Sec.~\ref{sec:bogoliubov}. Note that $R(\eta,g)\Omega = \Omega$ in the case $\varrho_0=0$. Restricting attention to this case and using Eqs.~\eqref{eq:dollard_no_massive}, \eqref{eq:mod_fields_relation} we show that the vacuum expectation value of a product of the modified interacting fields coincides with the vacuum expectation value of the corresponding product of the standard interacting fields~\eqref{eq:bogoliubov_fields}. Consequently, the Wightman and Green functions of the modified retarded or advanced fields exist and coincide with the standard Wightman and Green functions. The latter were constructed in~\cite{blanchard1975green,lowenstein1976convergence,breitenlohner1977dimensionally1,breitenlohner1977dimensionally2,steinmann2013,duch2018weak} in the case of QED and in~\cite{duch2018weak} in the case of the scalar model.

The following theorem makes precise in which sense objects introduced so far in this section are defined.
\begin{thm}\label{thm:mod_definition}
(A) Expressions~\eqref{eq:def_modifiers}, \eqref{eq:def_operator_R} define the Dollard modifiers and the sector operator as unitary operators $S^\ras_{\rout/\rin}(\eta,g), R(\eta,g)\in\mathcal{B}(\cH)$ acting in the Fock space $\cH$ and as formal power series $S^\ras_{\rout/\rin}(\eta,g), R(\eta,g)\in L(\cD)\llbracket e\rrbracket$, where $\cD\in\{\cD_0,\cD_1,\cD_2\}$. 
\\
(B) Expressions~\eqref{eq:mod_S_op} and~\eqref{eq:mod_fields} define formal power series
\begin{equation}\label{eq:mod_S_op_fields}
 S_\rmod(\eta,g),\,C_{\ret/\adv,\rmod}(\eta,g;h)\in L(\cD)\llbracket e\rrbracket,
\end{equation}
where $\cD\in\{\cD_0,\cD_1\}$. 
\\
(C) Let $\cD\in\{\cD_0,\cD_1\}$. The equation
\begin{equation}\label{eq:S_mod_sym_dist}
 S^{[n]}_\rmod(\eta,g) = \int\mP{q_1}\ldots\mP{q_n}\,\F{g}(-q_1)\ldots\F{g}(-q_n)\, S^{[n]}_\rmod(\eta,q_1,\ldots,q_n) 
\end{equation}
with arbitrary $g\in\cS(\R^4)$ defines uniquely the symmetric operator-valued Schwartz distributions $S^{[n]}_\rmod(\eta,q_1,\ldots,q_n) \in \cS'(\R^{4n},L(\cD))$. One defines analogously symmetric distributions $C^{[n]}_{\ret/\adv,\rmod}(\eta,q_1,\ldots,q_n;h)\in\cS'(\R^{4n},L(\cD))$.
\end{thm}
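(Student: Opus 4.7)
The plan is to establish the three parts in sequence, using part (A) as the analytical backbone and then deriving (B) and (C) as consequences. For part (A), the first key observation is that the two exponents appearing in Eq.~\eqref{eq:modifiers_scalar} commute: the current $J_{\rout/\rin}(\eta;x)$ is built exclusively from the electron number densities $b^*(p)b(p)$ while $\varphi(x)$ acts only on the photon sector, so the commutator of $\cL_{\rout/\rin}(\eta;x)=\tfrac{1}{2}J_{\rout/\rin}(\eta;x)\varphi(x)$ with $\normord{J_{\rout/\rin}(\eta;y)J_{\rout/\rin}(\eta;z)}$ vanishes identically. Denoting the two exponents by $E_1$ and $E_2$, I would therefore write $S^\ras_{\rout/\rin}(\eta,g) = \exp(E_1+E_2)$ and reduce unitarity to checking that $E_1+E_2$ is skew-adjoint on a suitable invariant domain. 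Because $E_1$ commutes with every $b^*(p)b(p)$, on each joint spectral subspace of the number densities its restriction is $-\ri e\varphi(G)$ with $G=\frac{1}{2\mass}\sum_j g\cdot j_{\rout/\rin}(\eta,p_j;\cdot)\in\cS(\R^4)$, hence essentially skew-adjoint by the standard argument for smeared free fields; simultaneously, since the Dirac propagator $D_0^D$ is real and symmetric under $x\leftrightarrow y$ and $\normord{JJ}$ is self-adjoint, $E_2$ is $-\ri$ times a self-adjoint quadratic form in the number densities. This gives unitarity in $\cH$.

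For the formal-power-series assertion in (A), I would expand $\exp(E_1)\exp(E_2)$ as a double series in $e$. Every insertion of $E_1$ creates or destroys a single photon with amplitude $\F{G}_p(k)=\int\rd^4 x\,\re^{\ri k\cdot x}g(x)j_{\rout/\rin}(\eta,p;x)$, and every insertion of $E_2$ is diagonal in all momenta. Domain preservation is then a direct bookkeeping argument: $\cD_0$ is preserved because the relevant amplitudes are Schwartz; $\cD_2$ is preserved because both $E_1$ and $E_2$ are diagonal in the massive momenta, so the non-coincidence support of the wave functions in these momenta is maintained; and $\cD_1$ is preserved because the momentum-space kernel
\begin{equation}
 \F{G}_p(k) = \pm\int\frac{\rd^4 q}{(2\pi)^4}\,\F{g}(k-q)\,\frac{\ri\,\F{\eta}(q)}{v\cdot q \pm\ri\zerop}
\end{equation}
inherits from $\F{g}$ smoothness in $k$ and Schwartz-type behaviour in $k^0$ uniformly in the direction $\hat{k}\in S^2$, which is precisely the structure tolerated by the definition of $\cD_1$. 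The delicate step is the last one: the IR singularity $(v\cdot q\pm\ri\zerop)^{-1}$ must be controlled uniformly in $p$ varying on the support of the wave function, and this is where the profile $\eta$ and its normalisation $\F\eta(0)=1$ enter. I expect this to be the main obstacle and to proceed by estimating the convolution explicitly, relative to the momentum-support hypotheses encoded in each $\cD$.

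For part (B), the modified scattering operator $S_\rmod(\eta,g)=S^\ras_\rout(\eta,g)\,S(g)\,S^\ras_\rin(\eta,g)$ is the coefficient-wise product of three formal power series in $e$. The Bogoliubov matrices $S(g)$ and $S(g;h)$ preserve $\cD_0$ by Theorem~\ref{thm:eg0} and $\cD_1$ by the same Schwartz analysis of the time-ordered kernels, so composition with the Dollard modifiers (established in (A) on both domains) yields the claim. The modified interacting fields equal $-\ri$ times the functional derivative at $h=0$ of $S_\rmod(\eta,g)^{\mp 1}S_\rmod(\eta,g;h)^{\pm 1}$ and therefore expand as a finite sum of products of modifiers, the standard inverse scattering matrix, and the extended time-ordered operators with a single $C$-insertion, to which the same analysis applies. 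The domain $\cD_2$ is in general lost at this stage because $S(g)$ redistributes the massive momenta through scattering.

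Finally, part (C) is a general statement about multilinear structure. The coefficient $S^{[n]}_\rmod(\eta,g)$ is by construction a symmetric polynomial of degree $n$ in $g$, so its matrix elements $(\Psi|S^{[n]}_\rmod(\eta,g)\Psi')$ define an $n$-multilinear symmetric form on $\cS(\R^4)^{\times n}$ by polarisation. The Schwartz kernel theorem, combined with the tempered bounds that emerge from the proof of (A), yields a unique symmetric distribution $S^{[n]}_\rmod(\eta,q_1,\ldots,q_n)\in\cS'(\R^{4n},L(\cD))$ satisfying Eq.~\eqref{eq:S_mod_sym_dist}; uniqueness follows from the density of factorised test functions in $\cS(\R^{4n})$. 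The same argument applied to $S_\rmod(\eta,g;h)$ produces the corresponding symmetric distributions for the modified retarded and advanced fields.
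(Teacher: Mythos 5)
Your proposal is correct and follows essentially the same route as the paper: the heart of part (A) is the observation that the momentum-space kernel of the smeared asymptotic vertex, $\int\frac{\rd^4 q}{(2\pi)^4}\,\tilde{g}(q)\,\F{\eta}(k-q)/(p\cdot(k-q)\pm\ri 0)$, extends to a Schwartz function on $\Hm\times\R^4$ while the Coulomb-phase kernel is smooth of polynomial growth (whence preservation of $\cD_0$, $\cD_1$, $\cD_2$ and, via the commuting-exponents/Weyl-operator structure you describe, unitarity), part (B) rests on Theorem~\ref{thm:eg0} together with its extension from $\cD_0$ to $\cD_1$ for the Bogoliubov factors, and part (C) is the kernel/nuclear theorem applied to the symmetric multilinear coefficients. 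The one small misdirection is your expectation that the normalization $\F{\eta}(0)=1$ is needed to control the $(v\cdot q\pm\ri 0)^{-1}$ singularity: at finite adiabatic cutoff any Schwartz profile works, and the normalization only becomes relevant for the adiabatic limit in later sections.
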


The Dollard modifiers and the sector operator can be viewed as either unitary operators or formal power series. By abusing the notation in what follows we use the same symbols $S^\ras_{\rout/\rin}(\eta,g)$, $R(\eta,g)$ to denote different objects: the unitary operator in $\cH$ or the formal power series in $L(\cD)\llbracket e\rrbracket$. The meaning of the symbol should be clear from the context. Since the Bogoliubov S-matrix is a formal power series the modified scattering matrix and modified interacting fields can only be defined as formal power series. In general, if at least of the objects in a given formula can only be interpreted as a formal power series, than all objects in this formula should be interpreted in this sense\footnote{We work mostly with formal power series. The fact that the Dollard modifiers $S^\ras_{\rout/\rin}(\eta,g)$ and the sector operator $R(\eta,g)$ can be given a non-perturbative meaning will be only used in Sec.~\ref{sec:energy_momentum} and~\ref{sec:physical_interpretation}.}. Note that~\eqref{eq:mod_S_op_fields} does not hold for $\cD=\cD_2$.  Indeed, the scattering operator or interacting fields acting on a state described by the wave function with disjoint momenta of the massive particles need not produce a state of this type.

\begin{proof}[Proof of Thm. \ref{thm:mod_definition}]
We consider only the case of the scalar model. The Dollard modifiers $S^\ras_{\rout/\rin}(\eta,g)$ are isometric and invertible, hence, unitary. We have
\begin{multline}\label{eq:thm_vertex}
 \int \rd^4 x \, g(x) \cL_{\rout/\rin}(\eta;x)
 =\int \rd^4 x \, g(x) J_{\rout/\rin}(\eta;x) \varphi(x)  
 \\
 =\int\mHm{p}\mHO{k} \left[\int\frac{\rd^4 q}{(2\pi)^4}\, \tilde{g}(q)  
 \frac{\pm\,\ri\, \F{\eta}(k-q)}{p\cdot (k-q)\pm\ri \zerop} a^*(k) + \mathrm{h.c.} \right] 
 b^*(p)b(p).
\end{multline}
We note that there exists a Schwartz functions on $\R^4\times\R^4$ whose restriction to $\Hm\times\R^4$ coincides with the function
\begin{equation}
 \Hm\times\R^4\ni(p,k)\mapsto \int\frac{\rd^4 q}{(2\pi)^4}\, \tilde{g}(q)  
 \frac{\ri\, \F{\eta}(k-q)}{p\cdot (k-q)\pm\ri \zerop} \in \C.
\end{equation}
This shows that expression~\eqref{eq:thm_vertex} defines an operator belonging to $L(\cD)$, where $\cD\in\{\cD_0,\cD_1,\cD_2\}$. The same is true for the operator
\begin{equation}
 \int \rd^4 x \rd^4 y\,g(x)g(y)
 D_0^D(x-y)\normord{J_{\rout/\rin}(\eta;x) J_{\rout/\rin}(\eta;y)}.
\end{equation}
Indeed, the relativistic Coulomb phases $\Phi_{\rout/\rin}(\eta,\eta,g,p_1,p_2)$, given by Eq.~\eqref{eq:def_rel_coulomb_phase}, are smooth functions of $p_1$ and $p_2$ of polynomial growth. This proves part (A).

Part (B) with $\cD=\cD_0$ follows immediately from (A) and axiom~\ref{axiom:wick} of the time-ordered products and Thm.~\ref{thm:eg0}. To obtain the case $\cD=\cD_1$ we need a version of this theorem in which $\cD_0$ is replaced with $\cD_1\subset\cD_0$. The proof of this generalization requires only minor modifications in the original proof of Thm.~\ref{thm:eg0} which is contained in Appendix~1 of~\cite{epstein1973role}. As an aside, let us mention that Thm.~\ref{thm:eg0} is false if $\cD=\cD_2$ because $\cD_2$ is not a subset of $\cD_0$ and all states belonging to $\cD_0$ can be created from the vacuum by the operator~\eqref{eq:eg_operator} smeared with some Schwartz function.

Part (C) is a consequence of the property of operator-valued Schwartz distributions stated below Eq.~\eqref{eq:op_valued_dist}, Thm.~\ref{thm:eg0} and the fact that $\cL_{\rout/\rin}(\eta;x)$ and $D_0^D(x-y)\normord{J_{\rout/\rin}(\eta;x) J_{\rout/\rin}(\eta;y)}$ are operator-valued Schwartz distributions on $\cD_0$.
\end{proof}

The physical scattering matrix $S_\rmod(\eta)$ and interacting fields $C_{\ret/\adv,\rmod}(\eta;h)$ are defined with the use of the adiabatic limit and depend on the choice of a profile $\eta$. The profile characterizes the cloud of photons surrounding electrons. As we have already mentioned, there is no distinguished choice of the profile. In order to find the relation between the physical scattering matrix and interacting fields constructed with the use of different profiles we introduce a family of unitary operators which we call the intertwining operators.
\begin{dfn}\label{def:intertwiners}
Let $\eta,\eta'$ be profiles and $\hat\varrho$ be a sector measure. The intertwining operators in the QED are defined by
\begin{multline}\label{eq:dfn_intertwiner}
 V_{\rout/\rin}(\eta',\eta) 
 := 
 V(\eta',\eta)
 \\
 \times 
 \exp\left(\mp\ri\frac{e^2}{2}\int_{\Hm\times\Hm}\normord{\rd\rho(p_1)\rd\rho(p_2)}\,\frac{p_1\cdot p_2}{\mass^2}\,\Phi_{\rout/\rin}(\eta'-\eta,\eta'+\eta,p_1,p_2)\right),
\end{multline}
where $\Phi_{\rout/\rin}(\eta'-\eta,\eta'+\eta,p_1,p_2)$ and $\rd\rho(p)=\rho(p)\,\mHm{p}$ are defined by~\eqref{eq:phi_finite} and~\eqref{eq:def_charge_dist_momentum},
\begin{multline}\label{eq:dfn_intertwiner_V}
 V(\eta',\eta) 
 := 
 \exp\left(e \int_{H_0\times\Hm} (\rd\rho-\rd\hat\varrho)(p)\,\mHO{k}
 \left( v^\mu(\eta',\eta,p,k)\, a_\mu^*(k) - \mathrm{h.c.}\right)
  \right)
 \\
 \times 
 \exp\left(-\frac{e^2}{2}\int_{\Hm\times\Hm}(\rd\rho-\rd\hat\varrho)(p_1)\,(\rd\rho-\rd\hat\varrho)(p_2)\,v(\eta',\eta,p_1,p_2)\right) 
\end{multline}
and
\begin{equation}
\begin{gathered}
 v^\mu(\eta',\eta,p,k) := (\F{\eta}'(k)-\F{\eta}(k)) \frac{p^\mu}{p\cdot k},
 \\
 v(\eta',\eta,p_1,p_2) :=\int\mHO{k}
 \left(\F{\eta}'(k)\F{\eta}(-k)-\F{\eta}'(-k)\F{\eta}(k)\right)
 g_{\mu\nu}
 \frac{p_1^\mu}{p_1\cdot k} 
 \frac{p_2^\nu}{p_2\cdot k}.
\end{gathered}
\end{equation} 
In the case of the scalar model the intertwining operators are defined by the same formulas with $a^\#_\mu(k)$ replaced by $a^\#(k)$, $g_{\mu\nu}$ with $-1$, $p_1\cdot p_2/\mass^2$ with $-1$, $p^\mu/p\cdot k$ with $\mass/p\cdot k$ and $\rho(p)=1/(2\mass)\,b^*(p)b(p)\,\mHm{p}$.
\end{dfn}
\begin{thm}\label{thm:intertwiners}
The expression~\eqref{eq:dfn_intertwiner} defines the intertwining operator as a unitary operator $V_{\rout/\rin}(\eta',\eta)\in B(\cH)$ and as a formal power series $V_{\rout/\rin}(\eta',\eta) \in L(\cD_2)\llbracket e\rrbracket$. For every profiles $\eta,\eta',\eta''$ it holds
\begin{equation}\label{eq:intertwiners_properties}
\begin{gathered}
 V_{\rout/\rin}(\eta,\eta)=\id,~~~V_{\rout/\rin}(\eta,\eta')V_{\rout/\rin}(\eta',\eta'')=V_{\rout/\rin}(\eta,\eta''),
 \\
 V_{\rout/\rin}(\eta,\eta')^*=V_{\rout/\rin}(\eta,\eta')^{-1}=V_{\rout/\rin}(\eta',\eta)
\end{gathered} 
\end{equation}
and for all $\Psi\in\cD_2$ it holds
\begin{equation}\label{eq:intertwiners_limits}
\begin{aligned}
 V_{\rout}(\eta',\eta)\Psi 
 &=
 \lim_{\epsilon\searrow0}\, R(\eta',g_\epsilon) S^\ras_\rout(\eta',g_\epsilon) S^\ras_\rout(\eta,g_\epsilon)^{-1}R(\eta,g_\epsilon)^{-1}\Psi,
 \\
 V_{\rin}(\eta',\eta)\Psi 
 &= 
 \lim_{\epsilon\searrow0}\, R(\eta',g_\epsilon)^{-1}S^\ras_\rin(\eta',g_\epsilon)^{-1} S^\ras_\rin(\eta,g_\epsilon)R(\eta,g_\epsilon)\Psi.
\end{aligned}
\end{equation}
Moreover, in the case of QED $V_{\rout/\rin}(\eta',\eta)$ commutes with the electric charge and BSRT operator. The above identities hold true if the intertwining operators $V_{\rout/\rin}(\eta',\eta)$ are understood as elements of $B(\cH)$ or $L(\cD_2)\llbracket e\rrbracket$.
\end{thm}
Note that generically $V_{\rout/\rin}(\eta,\eta')\Psi\notin \cD\llbracket e\rrbracket$ with $\cD\in\{\cD_0,\cD_1\}$ even for very regular states $\Psi$. 
In the proof of the above theorem we will use the following lemma, which follows easily from the Lebesgue dominated convergence theorem. 
\begin{lem}\label{lem:intertwiners}
Let 
\begin{equation}
 v_\epsilon\,:\,\Hm\times(\R\times S^2)\to\C,
 ~~~~
 u_\epsilon\,:\,\Hm\to\ri\R,
 ~~~~
 w_\epsilon\,:\,\Hm\times\Hm\to\ri\R,
\end{equation}
be families of functions parametrized by $\epsilon\in[0,1]$. Suppose that for all $\epsilon\in[0,1]$ and all $f\in\cS(\Hm)$, $h\in\cS(\Hm\times\Hm)$ such that $h$ is supported outside the diagonal the functions
\begin{equation}\label{eq:lem_intertwiners_assumptions1}
\begin{gathered}
 \Hm\times(\R\times S^2)\ni(p,k^0,\hat k)\mapsto f(p) v_\epsilon(p,k^0,\hat k)\in\C,
 \\
 \Hm\ni p \mapsto f(p) u_\epsilon(p) \in \R,
 \\
 \Hm\times\Hm\ni(p_1,p_2)\mapsto h(p_1,p_2) w_\epsilon(p_1,p_2) \in \R
\end{gathered}
\end{equation}
belong to the Schwartz class and their absolute values are bounded by some Schwartz functions independent of $\epsilon$. Moreover, assume that
\begin{equation}\label{eq:lem_intertwiners_assumptions2}
\begin{gathered}
 \lim_{\epsilon\searrow0}\,\int\mHO{k}\,|v_\epsilon(p,k)-v_0(p,k)|^2=0,
 \\
 \lim_{\epsilon\searrow0}\, u_\epsilon(p) = u_0(p),
 \\
 \lim_{\epsilon\searrow0}\, w_\epsilon(p_1,p_2) = w_0(p_1,p_2)
\end{gathered}
\end{equation}
pointwise almost everywhere. For $\epsilon\in[0,1]$ set
\begin{multline}\label{eq:W_epsilon_lemma}
 V_\epsilon := \exp\left(\frac{e}{2} \int\mHm{p}\mHO{k}\, \left(v_\epsilon(p,k)\, a^*(k) - \mathrm{h.c.}\right)b^*(p)b(p) \right)
 \\
 \times \exp\left(\frac{e^2}{8} \int\mHm{p}\,u_\epsilon(p)\, b^*(p)b(p) \right)
 \\
 \times \exp\left(\frac{e^2}{8} \int\mHm{p_1}\mHm{p_2}\,w_\epsilon(p_1,p_2)\, b^*(p_1)b^*(p_2)b(p_1)b(p_2) \right)
\end{multline}
For each $\epsilon\in[0,1]$ the above expression defines a unitary operators in $B(\cH)$ and a formal power series in $L(\cD_2)\llbracket e\rrbracket$. It holds
\begin{equation}
 \lim_{\epsilon\searrow0}\,V_\epsilon \Psi=V_0\Psi
\end{equation}
for arbitrary $\Psi\in\cH$ when the above limit is considered in the sense of $B(\cH)$ and for arbitrary $\Psi\in\cD_2$ when the above limit is considered in the sense of $L(\cD_2)\llbracket e\rrbracket$.
\end{lem}

\begin{proof}[Proof of Thm.~\ref{thm:intertwiners}]
The algebraic properties of the intertwining operators~\eqref{eq:intertwiners_properties} can be proved with the use of the BCH formula~\eqref{eq:BCH}. Let us show the existence of the limits~\eqref{eq:intertwiners_limits}. We carry out the proof only in the case of the scalar model with $\hat\varrho\equiv 0$. The generalization to other cases is straightforward. We have to study the following expression
\begin{multline}\label{eq:W_epsilon_thm}
 S^\ras_\rout(\eta',g_\epsilon) S^\ras_\rout(\eta,g_\epsilon)^{-1}\!
 =
 \exp\left(\ri e\int \rd^4 x \, g_\epsilon(x) J_\rout(\eta-\eta',x)\, \varphi(x)\right)
 \\
 \times\exp\left(-\ri\frac{e^2}{2}\int \rd^4 x \rd^4 y\,g_\epsilon(x)g_\epsilon(y)\, D_0(x-y)  
 J_\rout(\eta',x) J_\rout(\eta;y)
 \right)
 \\
 \times\exp\left(\ri\frac{e^2}{2}\int \rd^4 x \rd^4 y\,g_\epsilon(x)g_\epsilon(y)\, D_0^D(x-y)  
 \normord{J_\rout(\eta-\eta',x) J_\rout(\eta+\eta',y)}
 \right).
\end{multline}
By Part (A) of Thm.~\ref{thm:coulomb_phase} the last factor above converges to the expression in the third line of Eq.~\eqref{eq:dfn_intertwiner}. Let us denote the product of the first and the second factor by $V_\epsilon$.
For $\epsilon>0$ we set
\begin{equation}
 j_\epsilon(\eta,p,k):=\int\frac{\rd^4 q}{(2\pi)^4} 
 \,\tilde{g}_\epsilon(q) 
 \frac{\F{\eta}(k-q)}{p\cdot (k-q)+\ri \zerop}
\end{equation}
and
\begin{equation}
\begin{gathered}
 v_\epsilon(p,k):= j_\epsilon(\eta',p,k)-j_\epsilon(\eta,p,k),
 \quad
 u_\epsilon(p):=w_\epsilon(p,p),
 \\
 w_\epsilon(p_1,p_2):=\int\mHO{k}\,\left[ 
 j_\epsilon(\eta',p_1,k)\overline{j_\epsilon(\eta,p_2,k)}
 -
 \overline{j_\epsilon(\eta',p_1,k)}j_\epsilon(\eta,p_2,k)
 \right].
\end{gathered} 
\end{equation}
We also set
\begin{equation}
\begin{gathered}
 v_0(p,k):=\frac{\F{\eta}'(k)-\F{\eta}(k)}{p\cdot k},
 \quad
 u_0(p):=w_0(p,p),
 \\
 w_0(p_1,p_2):=\int\mHO{k}\,
 \frac{\F{\eta}'(k)\F{\eta}(-k)-\F{\eta}'(-k)\F{\eta}(k)}{(p_1\cdot k)(p_2\cdot k)}. 
\end{gathered}
\end{equation}
To prove the existence of the limit $\lim_{\epsilon\searrow0}V_\epsilon$ it is enough to show that the functions $v_\epsilon$, $u_\epsilon$, $w_\epsilon$ satisfy the assumptions of Lemma~\ref{lem:intertwiners}. By direct inspection, the functions~\eqref{eq:lem_intertwiners_assumptions1} have all the required properties. Let us turn to the proof of the existence of the first limit in~\eqref{eq:lem_intertwiners_assumptions2}. Note that
\begin{multline}
 \lim_{\epsilon\searrow0}\int\mHO{k}\,\left|\,\int_{p\cdot(k-q)<\epsilon}
 \frac{\rd^4 q}{(2\pi)^4}\, \tilde{g}_\epsilon(q) 
 \frac{\F{\eta}'(k-q)-\F{\eta}(k-q)}{p\cdot (k-q)+\ri \zerop}\right|^2 
 \\
 =
 \lim_{\epsilon\searrow0}\int\mHO{k}\,\left|\,\int_{p\cdot(k-q)<1}
 \frac{\rd^4 q}{(2\pi)^4}\, \tilde{g}(q) 
 \frac{\F{\eta}'(\epsilon k-\epsilon q)-\F{\eta}(\epsilon k-\epsilon q)}{p\cdot (k-q)+\ri \zerop}\right|^2 
 =0.
\end{multline}
To prove that the limit vanishes we used the Lebesgue dominated convergence theorem. The above identity implies that
\begin{multline}
 \lim_{\epsilon\searrow0} \int \mHO{k}\,|v_\epsilon(p,k)-v_0(p,k)|^2 
 \\
 =
 \lim_{\epsilon\searrow0} \int \mHO{k}
 \left|\int_{p\cdot(k-\epsilon q)\geq\epsilon}\frac{\rd^4 q}{(2\pi)^4}\tilde{g}(q)
 \frac{\F{\eta}'(k-\epsilon q)-\F{\eta}(k-\epsilon q)}{p\cdot (k-\epsilon q)}
 -\frac{\F{\eta}'(k)-\F{\eta}(k)}{p\cdot k}\right|^2 
 \\
 = 0,
\end{multline}
where the last equality follows again from the Lebesgue theorem. To show the existence of the second and the third limit in~\eqref{eq:lem_intertwiners_assumptions2} we note that
\begin{multline}\label{eq:thm_intertwiners_w}
 w_\epsilon(p_1,p_2)=
 \int\mHO{k}
 \frac{\rd^4 q_1}{(2\pi)^4}\frac{\rd^4 q_2}{(2\pi)^4}
 \,\tilde{g}_\epsilon(q_1) \tilde{g}_\epsilon(q_2) 
 \\\times
 \frac{\F{\eta}'(k-q_1)\F{\eta}(-k+q_2)-\F{\eta}(k-q_2)\F{\eta}'(-k+q_1)}{[p_1\cdot (k-q_1)+\ri 0][p_2\cdot (k-q_2)-\ri 0]}.
\end{multline}
One can show that the limit $\epsilon\searrow0$ of the integral on the RHS of the above expression vanishes when restricted to $(k,q_1,q_2)\in\HO\times\R^4\times\R^4$ such that $p_1\cdot(k-q_1)<\epsilon$ or $p_2\cdot(k-q_2)<\epsilon$. Thus, the limit $\epsilon\searrow0$ of the expression~\eqref{eq:thm_intertwiners_w} coincides with the limit $\epsilon\searrow0$ of the expression
\begin{multline}
 \int\limits_{\substack{p_1\cdot(k-\epsilon q_1)\geq\epsilon\\p_2\cdot(k-\epsilon q_2)\geq\epsilon}}\mHO{k}
 \frac{\rd^4 q_1}{(2\pi)^4}\frac{\rd^4 q_2}{(2\pi)^4}
 \,\tilde{g}(q_1) \tilde{g}(q_2)
 \\
 \times \frac{\F{\eta}'(k-\epsilon q_1)\F{\eta}(-k+\epsilon q_2)-\F{\eta}(k-\epsilon q_2)\F{\eta}'(-k+\epsilon q_1)}{p_1\cdot (k-\epsilon q_1)~p_2\cdot (k-\epsilon q_2)}
\end{multline}
which, by the Lebesgue dominated convergence theorem, is equal to $w_0(p_1,p_2)$. This shows the existence of the second and the third limit in~\eqref{eq:lem_intertwiners_assumptions2} and finishes the proof of the first equality in~\eqref{eq:intertwiners_limits}.
\end{proof}

Our modified scattering matrix and interacting fields in the scalar model and QED are defined as stated below.
\begin{cnj}\label{cnj:S_mod}
Fix a measure $\hat\varrho$ as in Def.~\ref{def:sector}. There exists a renormalization scheme for time-ordered products such that for any profile $\eta$, all polynomials in fields and their derivatives $C\in\Fa$ (in the case of QED we assume that $C$ is either the electromagnetic field $F_{\mu\nu}$ or the spinor current $J^\mu$), $h\in\cS(\R^4)$ and $\Psi,\Psi' \in \cD_2$ the adiabatic limits
\begin{equation}\label{eq:dfn_S_op_mod_scalar}
\begin{gathered}
 (\Psi|S_\rmod(\eta)\Psi'):=\lim_{\epsilon\searrow 0}\,(\Psi|S_\rmod(\eta,g_\epsilon)\Psi'),
 \\
 (\Psi|C_{\ret/\adv,\rmod}(\eta;h) \Psi') := \lim_{\epsilon\searrow 0}\, (\Psi|C_{\ret/\adv,\rmod}(\eta,g_\epsilon;h)\Psi')
\end{gathered} 
\end{equation}
exist in $\C\llbracket e\rrbracket$ and define the physical scattering matrix and the physical interacting fields
\begin{equation}
 S_\rmod(\eta)\in L(\cD_2,\cD_2^*)\llbracket e\rrbracket,
 \quad
 C_{\ret/\adv,\rmod}(\eta;h)\in L(\cD_2,\cD_2^*)\llbracket e\rrbracket.
\end{equation}
Moreover, if $\eta,\eta'$ are two profiles, then it holds
\begin{equation}\label{eq:S_op_mod_V}
\begin{gathered}
 S_\rmod(\eta') = V_\rout(\eta',\eta) S_\rmod(\eta) V_\rin(\eta,\eta'),
 \\
 C_{\ret/\adv,\rmod}(\eta';h)=V_{\rin/\rout}(\eta',\eta)C_{\ret/\adv,\rmod}(\eta;h)V_{\rin/\rout}(\eta,\eta').
\end{gathered} 
\end{equation}
In the case of QED the forms $S_\rmod(\eta)$ and $C_{\ret/\adv,\rmod}(\eta;h)$ commute with the BRST charge $Q_{\mathrm{BRST}}$. Consequently, they induce unique forms
\begin{equation}
 [S_\rmod(\eta)],[C_{\ret/\adv,\rmod}(\eta;h)]\in L(\cD^{\mathrm{phys}}_2,{\cD_2^{\mathrm{phys}}}^*)\llbracket e\rrbracket
\end{equation}
defined in the physical Hilbert space.
\end{cnj}
\noindent Let us make some comments about the above conjecture.
\begin{enumerate}[leftmargin=*,label={(\arabic*)}]
 \item By Thm.~\ref{thm:adiabatic_first_second} the statements in the above conjecture concerning the modified scattering matrix hold true at least in the first and the second order of perturbation theory. By Thm.~\ref{thm:first_order_fields} the part of the above conjecture concerning the modified interacting fields holds at least for the first-order corrections to $\psi$ and $\varphi$ in the scalar model and $F_{\mu\nu}$ and $J^\mu$ in QED. 
 
 \item The above conjecture implies a factorization of the IR divergences of the standard scattering matrix
 \begin{equation}\label{eq:factorization}
  S(g_\epsilon) =
  S^\ras_\rout(\eta,g_\epsilon)^{-1}
  S_\rmod(\eta,g_\epsilon) 
  S^\ras_\rin(\eta,g_\epsilon)^{-1}.
 \end{equation}
 In the above formula we omitted the operators $R(\eta,g)$ and $R(\eta,g)^{-1}$ which are not essential for the existence of the adiabatic limit (they are crucial for the BRST invariance of the construction). Our modified scattering matrix $S_\rmod$ is the IR finite part of the standard scattering matrix which is obtained by factoring out the first and the third factors above and taking the adiabatic limit. The factorization of the IR divergences in the expression for the standard scattering matrix in models such as QED is a common wisdom. Strong arguments for such factorization in QED were given in \cite{yennie1961infrared} and \cite{weinberg1965infrared},
 where this property was used to argue that the differential inclusive cross sections are finite in each order of perturbation theory (cf. Sec.~\ref{sec:inclusive_cross_section}). Note that the references \cite{yennie1961infrared} and \cite{weinberg1965infrared} use a different IR regulator: positive photon mass and lower cutoff of the photon energy, respectively. The above regulators break explicitly the symmetries of the theory (the BRST or Lorentz symmetry). More importantly, the S-matrix regularized with a positive photon mass or a low-energy cutoff is not IR-finite because of IR-divergent self-energy contributions that appear even in purely massive models. The S-matrix regularized with the use of the adiabatic cutoff is free from the above-mentioned problems.
 
 \item The modified S-matrix $S_\rmod(\eta)$ and the modified interacting fields $C_{\ret/\adv,\rmod}(\eta;h)$ are defined as formal power series whose coefficients are forms on $\cD_2$. It would be preferable to define the coefficients of these series as operators on some invariant dense domain in the Fock space. However, because of the singularity of the Coulomb phase at coinciding momenta the wave function of the state $\lim_{\epsilon\searrow0}S_\rmod^{[j]}(\eta,g_\epsilon)\Psi$ is singular if the momenta of some of the outgoing massive particles coincide. High-order corrections are not square-integrable and cannot be interpreted as vectors in the Fock space. These singularities are a feature of the perturbative expansion and would be absent in the non-perturbative formulation (the problematic contribution has a form of a phase factor with a phase of order $O(e^2)$ that diverges on a set of measure zero). As an aside, let us mention that it is plausible that one can define $S_\rmod(\eta)$ and $C_{\ret/\adv,\rmod}(\eta;h)$ as operators on some invariant domain in $\cH\llbracket e\rrbracket$ which is not of the form $\cD\llbracket e\rrbracket$ with a fixed $\cD\subset\cH$.
 
 \item\label{rem:eta} The modified scattering matrix $S_\rmod(\eta)$ and the modified interacting fields $C_\rmod(\eta)$ are defined in the standard Fock space. As follows from the results of Sec.~\ref{sec:fields}, the particle interpretation of states in this space is non-standard and depends on the choice of the profile $\eta$ and the super-selection sector. In Sec.~\ref{sec:state_space} using the compatibility relations~\eqref{eq:S_op_mod_V} we present a straight-forward reformulation the construction along the lines of Sec.~\ref{sec:Dollard}. We construct a family of asymptotic spaces $\hat\cH_{\rout/\rin}(\varrho)$ and define the scattering matrix $\hat S:\,\hat\cH_\rin(\varrho)\to\hat\cH_\rin(\varrho)$ and the interacting fields $\hat C_{\ret/\adv}:\,\hat\cH_{\rin/\rout}(\varrho)\to\hat\cH_{\rin/\rout}(\varrho)$ which depend only on the choice of the superselection sector determined by a measure $\varrho$ (the operators $\hat S$ and $\hat C_{\ret/\adv}$ do not depend on $\eta$). The advantage of this reformulation is the explicit $\eta$ dependence of the states in accord with their physical properties. The asymptotic spaces $\hat\cH_{\rout/\rin}(\varrho)$ have structure similar to the Fock space. However, the creators and annihilators of photons and electrons acting in these spaces do not commute.
 
 \item Our modified interacting fields satisfy the interacting field equations. For example, in the case of scalar model it holds
 \begin{align}
  (\square + \mass^2)\psi_{\ret/\adv,\rmod}(\eta;x) 
  &= e\,(\varphi\psi)_{\ret/\adv,\rmod}(\eta;x),
  \\
  \square \varphi_{\ret/\adv,\rmod}(\eta;x)
  &=\frac{e}{2}\,(\psi^2)_{\ret/\adv,\rmod}(\eta;x).
 \end{align}
 The above statement follows immediately from Eq.~\eqref{eq:mod_fields_relation} and the fact that the standard retarded and advanced fields given by the Bogoliubov formula~\eqref{eq:bogoliubov_fields} satisfy the above equations with the coupling constant $e$ replaced by $e\,g(x)$.

 \item The renormalization scheme required for the existence of the adiabatic limit of the modified scattering matrix is (essentially) unique. Hence, the modified scattering matrix depends only on the electric charge $e$, the mass of the electron $\mass$, the choice of a superselection sector~$\hat\varrho$ and a profile $\eta$ (however, see the comment~\ref{rem:eta} above).  
 
 In order to specify the renormalization scheme we first note that the time-ordered products with odd number of massive external fields vanish both in the scalar model and QED. Thus, by the bound on the Steinmann scaling degree~\eqref{eq:sd_bound} we conclude that in order to fix a renormalization scheme it is enough to consider the time-ordered products with the following external structure: (a) no external fields (vacuum bubble), $\omega=4$ (b) $m$ massless external fields, $k\in\{1,2,3,4\}$ (vacuum polarization), $\omega=4-m$, (c) two massive external fields (self-energy), $\omega=1$, (d) two massive and one massless external field (vertex correction), $\omega=0$. The order of singularity $\omega$ is defined by Eq.~\eqref{eq:omega}. Thus, in the case of QED we are led to consider the following distributions:
\begin{equation}\label{eq:normalization_t}
 t^{[n]}_k(x_1,\ldots,x_n)
 :=
 (\Omega|
 \T(J_{\mu_1}(x_{1}),\ldots,J_{\mu_k}(x_{k}),
 \cL(x_{k+1}),\ldots,\cL(x_n))\Omega), 
\end{equation} 
with $k\in \{0,\ldots,4\}$ and
\begin{multline}\label{eq:normalization_s}
 s^{[n]}_k(p;x_1,\ldots,x_n;p')
 \\
 :=
 (b^*(\sigma,p)\Omega|
 \T(J_{\mu_1}(x_{1}),\ldots,J_{\mu_k}(x_{k}),
 \cL(x_{k+1}),\ldots,\cL(x_n))|b^*(\sigma',p'))
\end{multline}
with $k\in \{0,1\}$. The the above formulas $b^*(\sigma,p)$ stands for the creator of an electron and $J_\mu$ is the spinor current. In the case of the scalar model we consider similar distributions with $J^\mu$ replaced by $J=\frac{1}{2}\psi^2$ and $b^*(\sigma,p)$ replaced by the creator of the scalar massive particle $b^*(p)$. By Eq.~\eqref{eq:n_P} the time-ordered products are translationally covariant. Hence,
\begin{equation}\label{eq:normalization_st}
\begin{aligned}
 \F{t}^{[n]}_k(q_1,\ldots,q_n) &=: (2\pi)^4\delta(q_1+\ldots+q_n)\F{\underline{t}}^{[n]}_{k}(q_1,q_2,\ldots,q_{n-1}),
 \\
 \F{s}^{[n]}_k(p;q_1,\ldots,q_n;p') &=: (2\pi)^4\delta(p+q_1+\ldots+q_n-p')\F{\underline{s}}^{[n]}_k(p;q_1,\ldots,q_n)
\end{aligned} 
\end{equation}
for some Schwartz distributions $\F{\underline{t}}^{[n]}_k\in\cS'(\R^{4(n-1)})$ and $\F{\underline{s}}^{[n]}_k,\in\cS'(\Hm\times\R^{4n})$. 

We impose the following renormalization conditions
\begin{align}
 \label{eq:renormalizaion_conditions_t}
 \F{\underline{t}}_{k}^{[n]}(q_1,\ldots,q_{n-1}) &= O^{\textrm{dist}}(|q_1,\ldots,q_{n}|^{4-k+\delta}),\quad k\in\{0,\ldots,4\},\,n\geq 1,
 \\
 \label{eq:renormalizaion_conditions_s} 
 \F{\underline{s}}^{[n]}_{k}(p;q_1,\ldots,q_n) &= O^{\textrm{dist}}(|q_1,\ldots,q_n|^{\delta}),\quad k\in\{0,1\},\,n\geq 2
\end{align}
for some $\delta\in(0,1)$, where $O^{\textrm{dist}}(\cdot)$ is a generalization of the $O(\cdot)$ notation for distribution that was introduced in~\cite{duch2018weak}. 

Condition~\eqref{eq:renormalizaion_conditions_t} is also known as the central normalization condition~\cite{epstein1973role,scharf2014}. The proof that it can be imposed in QED and the scalar model is contained in~\cite{duch2017massless}. In order to see that this condition is necessary for the existence of the adiabatic limit it is enough to consider the process of a creation of $k$ photons from the vacuum (cf. Eq.~\eqref{eq:infrared_problem_phi_k} and the remark below that equation). In particular, the condition~~\eqref{eq:renormalizaion_conditions_t} with $k=2$ implies the on-shell normalization of the photon -- the physical mass of the photon is equal to zero and the residue of the photon propagator is equal to $1$. The renormalization condition~\eqref{eq:renormalizaion_conditions_s} with $k=0$ asserts that the parameter $\mass$ is the physical mass of the electron. The fact that such a condition can be imposed follows from H{\"o}lder continuity of the distribution $\F{\underline{s}}$ (see \cite{blanchard1975green} for the proof in the case of QED). Mass normalization is necessary for the existence of the adiabatic limit even in purely massive models~\cite{epstein1976adiabatic,duch2018strong}. The renormalization conditions~\eqref{eq:renormalizaion_conditions_s} with $k=1$ is the interaction vertex normalization. It asserts that the coupling parameter $e$ is the physical charge of the electron. 

Note that in both the scalar model and QED the time-ordered products with two massive external fields are linearly divergent ($\omega=1$). In the case of the scalar model these products are fixed uniquely by the mass normalization (i.e. condition \eqref{eq:renormalizaion_conditions_s} with $k=0$) and the Lorentz invariance of the time-ordered products~\eqref{eq:n_P}. Hence, in the case of the scalar model conditions~\eqref{eq:renormalizaion_conditions_t} and ~\eqref{eq:renormalizaion_conditions_s} fix uniquely all the time-ordered products that are used in the construction the Bogoliubov S-matrix $S(g)$. In the case of QED there is still possibility to renormalize the wave function of the electron.  A convenient choice of the wave function renormalization in QED that fixes completely the renormalization freedom is the central normalization condition
\begin{equation}
\begin{gathered}
 w^{[n]}(x_1,\ldots,x_n):=(\Omega|
 \T((\slashed{A}\psi)_a(x_1),
 \cL(x_3),\ldots,\cL(x_{n-1}),(\overline{\psi}\slashed{A})_b(x_n))\Omega),
 \\
 \F{w}^{[n]}(q_1,\ldots,q_n) =: (2\pi)^4\delta(q_1+\ldots+q_n)\,\F{\underline{w}}^{[n]}(q_1,\ldots,q_{n-1}),
 \\
 \partial_q^{\alpha}\F{\underline{w}}^{[n]}(q_1,\ldots,q_{n-1})=O^{\textrm{dist}}(|q_1,\ldots,q_n|^{\delta})
\end{gathered} 
\end{equation}
for all multi-indices $\alpha$, $|\alpha|=1$ and some $\delta\in(0,1)$. Note that it is impossible to normalize the self-energy of the electron on-shell by imposing the condition
\begin{equation}
\begin{aligned}
 \F{\underline{s}}^{[n]}_{k=0}(p;q_1,\ldots,q_n) &= O^{\textrm{dist}}(|q_1,\ldots,q_n|^{1+\delta})
\end{aligned} 
\end{equation}
with some $\delta\in(0,1)$ because of the familiar logarithmic IR singularities (cf. Eq.~\eqref{eq:self_energy_qed}). Let us mentioned that the choice of the wave function renormalization does not affect the adiabatic limit. This is the case because  we define the electric charge $e$ in terms of the distribution~\eqref{eq:normalization_s} with $m=1$ that includes the sum of all self-energy corrections. As an aside, observe also that is difficult to express the renormalization conditions using the standard vertex correction
\begin{equation}
 \T((\slashed{A}\psi)_a(x_2),J^\mu(x_1),
 \cL(x_3),\ldots,\cL(x_{n-1}),(\overline{\psi}\slashed{A})_b(x_n))\Omega) 
\end{equation}
because it is logarithmically divergent on the electron mass shell.

Finally, let us point that in the case of QED the condition~\eqref{eq:renormalizaion_conditions_t} with $k$ odd is satisfied trivially because the corresponding distributions $t^{[n]}_k$ vanish identically by the Furry theorem (see e.g.~\cite{scharf2014}). Moreover, the condition~\eqref{eq:renormalizaion_conditions_t} with $k=4$ and the condition~\eqref{eq:renormalizaion_conditions_s} with $k=1$ follow from the Ward identities~\eqref{eq:n_W}. The former condition corresponds to vanishing of the four-photon polarization tensor at zero momenta and the latter to the vanishing of the radiative corrections to the electric form factor of the electron at zero momentum transfer.  

\item In the case of QED we use the photon propagator in the so-called Feynman gauge which follows from the modified action of QED given by Eq.~\eqref{eq:action_qed_mod}. One could consider a more general class of modified actions with corresponding photon propagators in the $R_\xi$ gauge. Because of the Ward identities the modified scattering matrix and modified interacting gauge-invariant fields obtained with the use of any of these propagators coincide.
 
\item\label{rem:brst} Finally, we comment on the BRST invariance of the construction in the case of QED. The modified scattering matrix with adiabatic cutoff does not commute with the free BRST charge (the same is true for the standard Bogoliubov S-matrix). More specifically, we have
\begin{equation}\label{eq:formal_gauge_inv_S_QED}
 [Q_{\mathrm{BRST}},S_\rmod(\eta,g)] = (-\ri) \partial_\lambda S_\rmod(\eta,g,\lambda \partial_\mu g)\big|_{\lambda=0},
\end{equation}
The extended modified scattering operator $S_\rmod(\eta,g,g_\mu)$ with adiabatic cutoffs $g,g_\mu\in\cS(\R^4)$ is defined by a formula similar to~\eqref{eq:mod_S_op} with the following modifications: $g\,\cL \mapsto g\,\cL - g_\mu\, J^\mu c$ in the definition~\eqref{eq:bogoliubov_S_op} of the Bogoliubov S-matrix, $g\,\cL_{\rout/\rin} \mapsto g\,\cL_{\rout/\rin} - g_\mu\, J_{\rout/\rin}^\mu c$ in the definition~\eqref{eq:def_modifiers} of the Dollard modifiers and $g\,\underline{\cL} \mapsto g\,\underline{\cL} -  g_\mu\, \underline{J}^\mu c$ in the definition~\eqref{eq:def_operator_R} of the sector operator, where $c$ is the ghost field. In order to prove Eq.~\eqref{eq:formal_gauge_inv_S_QED} we use 
\begin{equation}
[Q_{\textrm{BRST}},A_\mu(x)]=\partial_\mu c(x),\quad
[Q_{\textrm{BRST}},J^\mu(x)]=0,\quad
[Q_{\textrm{BRST}},J^\mu_{\rout/\rin}(x)]=0,
\end{equation}
the Ward identities~\eqref{eq:n_W}, the BCH formula~\eqref{eq:BCH} and the fact that the currents $J^\mu_{\rout/\rin}\pm\underline{J}^\mu$ are conserved. 

Now, let $g_\epsilon(x):=g(\epsilon x)$ and $g_{\mu,\epsilon}(x):=g_\mu(\epsilon x)$, $g,g_\mu\in\cS(\R^4)$ and $g(0)=1$, $g_{\mu}(0)=1$. Note that if $g_\mu(x)=\partial_\mu g(x)$, then $\partial_\mu g_\epsilon(x) = \epsilon g_{\mu,\epsilon}(x)$. Consequently,
\begin{equation}\label{eq:brst_comment}
 \frac{1}{\epsilon} \partial_\lambda  S_\rmod(\eta,g_\epsilon,\lambda \partial_\mu g_\epsilon)\big|_{\lambda=0}
 =
 \partial_\lambda  S_\rmod(\eta,g_\epsilon,\lambda g_{\mu,\epsilon})\big|_{\lambda=0}.
\end{equation}
The infrared properties of the operators $S_\rmod(\eta,g,g_\mu)$ and $S_\rmod(\eta,g)$ are similar. In particular, all extra interaction vertices appearing in that operator contain two massive and one massless field. This indicates that the adiabatic limit of the RHS of Eq.~\eqref{eq:brst_comment} exists and the adiabatic limit of $S_\rmod(\eta,g_\epsilon)$ commutes with $Q_{\mathrm{BRST}}$. Actually, to reach the above conclusion it is enough to show that the RHS of Eq.~\eqref{eq:brst_comment} is of order $O(\epsilon^{-1+\delta})$ with $\delta>0$. It is unlikely that the last condition is not satisfied since the IR divergences that appear, for example, in the construction of the standard scattering matrix in QED are only logarithmic. One can use a similar argument to show the BRST-invariance of $F^{\mu\nu}_{\ret/\adv,\rmod}(\eta;h)$ and $J^\mu_{\ret/\adv,\rmod}(\eta;h)$. To generalize this reasoning to other gauge invariant fields one would have to impose the Ward identities~\eqref{eq:n_W} for polynomials $B$ which are not necessarily sub-polynomial of the interaction vertex.  
\end{enumerate}

\section{First- and second-order corrections to S-matrix}\label{sec:low_orders}

In this section we prove the existence of the adiabatic limit that defines the first and the second order corrections to the modified scattering matrix as operators from $\cD_1$ to $\cH$. This shows that one can solve (at least in low orders of perturbation theory) both the IR and UV problem in QED and similar models using the strategy presented in Sec.~\ref{sec:modified}.

\begin{thm}\label{thm:adiabatic_first_second}
Let $\Psi\in\cD_1$ (cf. Sec.~\ref{sec:domains}), $\eta$ be a profile (cf. Def.~\ref{dfn:profile}) and $\hat\varrho$ be the sector measure (cf. Def.~\ref{def:sector}).
\\
(A) It holds
\begin{equation}
 \lim_{\epsilon\searrow 0}\, \|S_\rmod^{[1]}(\eta,g_\epsilon)\Psi\|=0.
\end{equation}
(B) Assume that the time-ordered products are normalized such that
\begin{equation}\label{eq:normalization_conditions}
 \F{\underline{t}}^{[2]}_{0}(q) = O(|q|^5),
 ~~~
 \F{\underline{t}}^{[2]}_{2}(q) = O(|q|^3),
 ~~~
 \F{\underline{s}}^{[2]}_{0}(q) = 0 \textrm{~~for~~} q^2=\mass^2,
\end{equation}
where the above distributions are defined by Eqs~\eqref{eq:normalization_t}, \eqref{eq:normalization_s} and \eqref{eq:normalization_st}. There exists $S_\rmod^{[2]}(\eta)\in L(\cD_1,\cH)$ such that
\begin{equation}
 \lim_{\epsilon\searrow0}\,\|  (S_\rmod^{[2]}(\eta,g_\epsilon) -  S_\rmod^{[2]}(\eta))\Psi \|=0.
\end{equation}
Moreover, in the case of QED the following limits exist
\begin{equation}\label{eq:brst_existence}
 \lim_{\epsilon\searrow0}\,\partial_\lambda  S^{[1]}_\rmod(\eta,g_\epsilon,\lambda g_{\mu,\epsilon})\big|_{\lambda=0}\Psi,
 \quad
 \lim_{\epsilon\searrow0}\,\partial_\lambda  S^{[2]}_\rmod(\eta,g_\epsilon,\lambda g_{\mu,\epsilon})\big|_{\lambda=0}\Psi.
\end{equation}
\end{thm}
Note that by the comment~\ref{rem:brst} below Conjecture~\ref{cnj:S_mod} the existence of the limits~\eqref{eq:brst_existence} implies that $S_\rmod^{[2]}(\eta)$ commutes with the BRST charge and induces an operator in $L(\cD^{\textrm{phys}}_1,\cH^{\textrm{phys}})$. The proofs of Part (A) and (B) of the above theorem are contained in Sections~\ref{sec:S_op_mod_first_order_corrections} and~\ref{sec:S_op_mod_second_order_corrections}, respectively. We give the detailed proofs in the case of the scalar model and discuss how to adapt them to the case of QED. We prove the theorem by considering separately all possible processes which contribute to the scattering matrix at a given order of perturbation theory. The processes in the first and second-order which do not vanish for kinematical reasons are depicted in Figures~\ref{fig:first_order_non_trivial}-\ref{fig:moller}. In the Feynman diagrams presented in this paper, we use straight lines to depict electron and positron propagators and wavy lines to depict photon propagators. The lines representing the incoming particles are drawn on the right and the lines representing the outgoing particles -- on the left. To simplify the notation we always assume that the incoming states contain only particles that actually participate in the interaction process. Thus, in particular when considering the first-order corrections to the scattering matrix we consider incoming states with at most three particles. The generalization for arbitrary incoming states is trivial. For simplicity, we assume that the wave functions of the incoming states are compactly supported in momentum space. We also assume that the profile $\eta$ is of compact support in momentum space. In the proof we frequently use the lemma stated below which says that the adiabatic limit of $S^{[n]}_\rmod(\eta,g_\epsilon)$ depends only on the form of the distribution $S^{[n]}_\rmod(\eta,q_1,\ldots,q_n)$, defined by Eq.~\eqref{eq:S_mod_sym_dist}, in an arbitrarily small neighborhood of the origin. The lemma implies in particular that the adiabatic limit vanishes if there exist a neighborhood $\mathcal{O}$ of zero in $\R^4$ such that $S^{[n]}_\rmod(\eta,g)=0$ for all $g\in\cS(\R^4)$, $\supp\,\F{g}\subset\mathcal{O}$. 
\begin{lem}\label{thm:adiabatic_neighborhood}
Let $\chi\in \cS(\R^4)$ be such that $\chi\equiv 1$ in some neighborhood of the origin and let $\Psi$ be a state which belongs to any of the domains considered in this paper: $\cD_0$ or $\cD_1\supset\cD_2$. Set
\begin{equation}
 \Psi_\epsilon:=\int\mP{q_1}\ldots\mP{q_n}\,\F{g}_\epsilon(-q_1)\ldots\F{g}_\epsilon(-q_n)\,
 \\
 S^{[n]}_\rmod(\eta,q_1,\ldots,q_n)\Psi
\end{equation}
and
\begin{equation}
 \Psi^\chi_\epsilon:=\int\mP{q_1}\ldots\mP{q_n}\,\chi(q_1,\ldots,q_n)\,\F{g}_\epsilon(-q_1)\ldots\F{g}_\epsilon(-q_n)\,
 \\
 S^{[n]}_\rmod(\eta,q_1,\ldots,q_n)\Psi.
\end{equation}
It holds
\begin{equation}\label{eq:thm_neighborhood}
 \|\Psi_\epsilon - \Psi^\chi_\epsilon\| = O(\epsilon^M)
\end{equation}
for any $M\in\N_+$.
\end{lem}
\begin{proof}
Note that the expression
\begin{equation}
 (\Psi|S^{[n]}_\rmod(\eta,q'_1,\ldots,q'_n)^*S^{[n]}_\rmod(\eta,q_1,\ldots,q_n)|\Psi)
\end{equation}
is an element of $\cS'(\R^{8n})$. Eq.~\eqref{eq:thm_neighborhood} follows from the fact that the family of functions
\begin{equation}
 \epsilon^{-M}\,\F{g}_\epsilon(-q_1)\ldots\F{g}_\epsilon(-q_n)~(1-\chi(q_1,\ldots,q_n))
\end{equation}
converges to zero in the topology of $\cS(\R^{4n})$ in the limit $\epsilon\searrow0$.
\end{proof}

\subsection{First-order corrections}\label{sec:S_op_mod_first_order_corrections}

It is easy to see that in both the scalar model and QED the first-order correction to the modified scattering matrix is given by the following expression 
\begin{equation}\label{eq:first_order_general}
 S^{[1]}_\rmod(\eta,g)  =
 \ri\int\rd^4 x\, g(x)\,(\cL(x)-\cL_\ras(\eta,x)),
\end{equation}
where $\cL$ is the interaction vertex of the model and $\cL_\ras=\cL_\rin+\cL_\rout$ with $\cL_\rin$ and $\cL_\rout$ defined in Def.~\ref{def:D_modifiers}. Note that $\underline{\cL}$ (see Def.~\ref{def:sector}) does not contribute at this order. In this section we prove that the adiabatic limit of the above correction vanishes. We study all contributing processes which are depicted in Figures~\ref{fig:first_order_non_trivial} and~\ref{fig:trivial}. The processes from Figure~\ref{fig:trivial} vanish in the adiabatic limit by the conservation of the total energy and momentum. The processes from Figure~\ref{fig:first_order_non_trivial} require more care. We recall that the non-existence of the adiabatic limit of the first-order correction to the standard scattering matrix was shown in Sec.~\ref{sec:IR_problem_qft}.

\subsubsection{Non-trivial processes}

\begin{figure}[h!]
\begin{center}
\begin{tabular}{cc}
\includegraphics{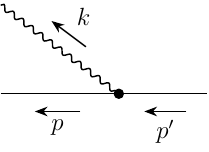}\qquad
&
\qquad
\includegraphics{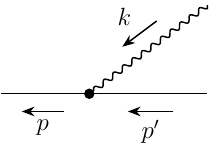}
\\
(A) & (B)
\end{tabular}
\end{center}
\caption{Non-trivial diagrams in first-order}\label{fig:first_order_non_trivial}
\end{figure}

The wave function of the outgoing particles in Figure~\ref{fig:first_order_non_trivial} (A) has the following form
\begin{multline}\label{eq:first_order_A}
 F^{(A)}_\epsilon(p,k):= \int\frac{\rd^4 q}{(2\pi)^3}\, \F{g}(q)\, \bigg(
 \delta(2p\cdot (k-\epsilon q) + (k-\epsilon q)^2)\, f(p+k-\epsilon q)
 \\
 -
 \delta(2p\cdot (k-\epsilon q))\,\F{\eta}(k-\epsilon q)\,f(p) \bigg).
\end{multline}
The momenta which are arguments of the wave functions  are always on-shell. It holds
\begin{multline}
 \epsilon F^{(A)}_\epsilon(p,\epsilon k)= \int\frac{\rd^4 q}{(2\pi)^3}\, \F{g}(q)\, \bigg(
 \delta(2p\cdot (k-q) + \epsilon (k-q)^2)\, f(p+\epsilon k-\epsilon q)
 \\
 -
 \delta(2p\cdot (k-q))\,\eta(\epsilon k-\epsilon q)f(p) \bigg).
\end{multline}
Using the above identity we get
\begin{equation}
 \lim_{\epsilon\searrow0}\int\mHm{p}\mHO{k}\,|F^{(A)}_\epsilon(p,k)|^2 
 =\lim_{\epsilon\searrow0}\int\mHm{p}\mHO{k}\,|\epsilon F^{(A)}_\epsilon(p,\epsilon k)|^2 
 =0.
\end{equation}
The wave function of the outgoing electron in Figure~\ref{fig:first_order_non_trivial} (B) is given by
\begin{multline}\label{eq:first_order_B}
 F^{(B)}_\epsilon(p):= 
 \epsilon \int\frac{\rd^4 q}{(2\pi)^3}\mHO{k'}\, \F{g}(q)\, \bigg(
 \delta(2p\cdot (k'+q) - \epsilon (k'+q)^2)\, f(p-\epsilon k'-\epsilon q,k')
 \\
 -
 \delta(2p\cdot (k'+q))\,\F{\eta}(-\epsilon k'-\epsilon q)\,f(p,k') \bigg).
\end{multline}
Consequently, it holds
\begin{equation}
 \int\mHm{p}\,|F^{(B)}_\epsilon(p)|^2 
 =o(\epsilon^2).
\end{equation}
This proves that the contributions to $S^{[1]}(g)\Psi$ coming from the processes depicted in Figure~\ref{fig:first_order_non_trivial} vanish in the adiabatic limit in the topology of the Fock-Hilbert space. 

In the case of QED the wave functions of the first-order corrections have very similar forms with the integrands in Eqs.~\eqref{eq:first_order_A} and~\eqref{eq:first_order_B} multiplied by some smooth functions of the incoming/outgoing momenta (depending on the spins of the incoming and outgoing electron and the Lorentz index of the photon). The same is true for the first of the limits in~\eqref{eq:brst_existence}. Thus, the above conclusions hold in this case as well.

\subsubsection{Trivial processes}

\begin{figure}[h!]
\begin{center}
\begin{tabular}{cccc}
\includegraphics{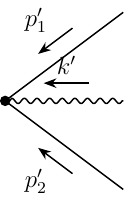}\qquad
&
\qquad
\includegraphics{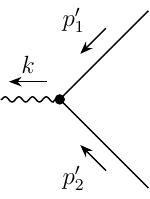}\qquad
&
\qquad
\includegraphics{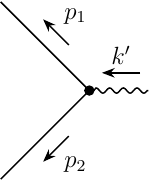}\qquad
&
\qquad
\includegraphics{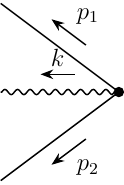}
\\
(A) & (B) & (C) & (D)
\end{tabular}
\end{center}
\caption{Trivial diagrams in first-order}\label{fig:trivial}
\end{figure}

Since the asymptotic vertices do not describe processes depicted in Figure~\ref{fig:trivial} it follows from Eq.~\eqref{eq:first_order_general} that the contributions to $S_\rmod^{[1]}(\eta,g)$ coming from these processes coincide with the corresponding contributions to the standard Bogoliubov S-matrix $S^{[1]}(g)$. The amplitudes have the following form
\begin{gather}\label{eq:first_order_trivial}
 F^{(A)}_\epsilon:=\int\mHm{p'_1}\mHm{p'_2}\mHO{k'}\,\F{g}_\epsilon(-p'_1-p'_2-k')\,f(p'_1,p'_2,k'),
 \\
 F^{(B)}_\epsilon(k):=\int\mHm{p'_1}\mHm{p'_2}\,\F{g}_\epsilon(k-p'_1-p'_2)\,f(p'_1,p'_2),
 \\
 F^{(C)}_\epsilon(p_1,p_2):=\int\mHO{k'}\,\F{g}_\epsilon(p_1+p_2-k')\, f(k'),
 \\
 F^{(D)}_\epsilon(p_1,p_2,k):=\F{g}_\epsilon(p_1+p_2+k).
\end{gather}
We remind the reader that the momenta of the incoming and outgoing particles are always on-shell and by the assumption the support of any of the functions denoted above by $f$ is compact. Using the above facts it is easy to see that in each of the cases the argument of $\F{g}_\epsilon$ is contained in the complement of some open ball centered at the origin in $\R^4$ whose radius does not depend on the momenta of outgoing particles. Since $\F{g}_\epsilon(q)=\frac{1}{\epsilon^4} \F{g}\left(\frac{q}{\epsilon}\right)$, $g\in\cS(\R^4)$, we have
\begin{equation}
 \sup_{\substack{q\in\R^4\\|q|>\lambda}}\, |\F{g}_\epsilon(q)|\leq \const(M,\lambda)\,\epsilon^M 
\end{equation}
for any $M\in\N_+$. Consequently, the $L^2$ norm of all of above wave functions behave like $O(\epsilon^M)$ for any $M\in\N_+$ and vanishes in the adiabatic limit. 

\subsubsection{Bounds}

The technical results presented in this section will be useful in the proof of Part (B) of Thm.~\ref{thm:adiabatic_first_second}, which is presented in Sec.~\ref{sec:S_op_mod_second_order_corrections}. We use the family of seminorms in the Fock space introduced in the definition below.
\begin{dfn}
Consider the scalar model. Let $M\in\N_0$, $\delta\geq 0$ and $\Psi\in \cD_1\supset \cD_2$ be such that
\begin{multline}
 \Psi = \sum_{i,j\in\N_0}\int\mHm{p_1}\ldots\mHm{p_i}\mHO{k_1}\ldots\mHO{k_j}\,
 \\
 h_{ij}(\vec{p}_1,\ldots,\vec{p}_i,\vec{k}_1,\ldots,\vec{k}_j)\,
 b^*(p_1)\ldots b^*(p_i) a^*(k_1)\ldots a^*(k_j)\Omega. 
\end{multline}
By definition
\begin{equation}
 \|\Psi\|_{\delta,M}:=
 \sum_{i,j\in\N_0}\sum_{|\alpha|<M}\sup_{\substack{p_1,\ldots,p_i\in\Hm\\k_1,\ldots,k_i\in\HO}} 
 |k^0_1|^\delta\ldots |k^0_j|^\delta \left|\partial^\alpha_{\vec{p}}
  h_{ij}(\vec{p}_1,\ldots,\vec{p}_i,\vec{k}_1,\ldots,\vec{k}_j) \right|,
\end{equation}
where $\alpha$ is a multi-index. In an analogous way we define the family of seminorms in the Fock space of QED. The corresponding expression for $\|\Psi\|_{\delta,M}$ contains an additional sum over all polarizations of photons and electrons or positrons.
\end{dfn}

\begin{lem}\label{lem:first_order_bounds}
For any $\delta\in(0,1)$ and $\epsilon>0$ it holds
\begin{equation}\label{eq:first_bounds_std}
 \|\cL(g_\epsilon)\Psi\|_{\delta,M} 
 \leq \const \,\epsilon^{\delta-1}\,\|\Psi\|_{\delta,M},
\end{equation}
\begin{equation}\label{eq:first_bounds_as}
 \|\cL_{\ras/\rout/\rin}(\eta,g_\epsilon)\Psi\|_{\delta,M} 
 \leq \const \,\epsilon^{\delta-1}\,\|\Psi\|_{\delta,M}
\end{equation}
and
\begin{equation}\label{eq:first_bounds_diff}
 \|(\cL(g_\epsilon) - \cL_\ras(\eta,g_\epsilon))\Psi\|_{\delta,M} \leq \const \,\epsilon^\delta\,\|\Psi\|_{\delta,M+1}.
\end{equation}
The above bounds are valid both in the scalar model and QED. In addition, in the case of QED it holds 
\begin{equation}\label{eq:first_bound_sector}
 \|\underline{\cL}(\eta,g_\epsilon)\Psi\|_{\delta,M} 
 \leq \const \,\epsilon^{\delta-1}\,\|\Psi\|_{\delta,M}.
\end{equation}
\end{lem}
\noindent The straightforward but tedious proof of the above lemma is omitted.

\subsection{Second-order corrections}\label{sec:S_op_mod_second_order_corrections}

The second order correction to the modified scattering matrix in the scalar model and QED is given by
\begin{multline}
 S_\rmod^{[2]}(\eta,g) = 
 \frac{\ri^2}{2}\int\rd^4 x_1\rd^4 x_2 \, g(x_1)g(x_2) \bigg(\T(\cL(x_1),\cL(x_2))
  -2\cL(x_1) \cL_\rin(x_2) 
  \\[0.5em]
  -2\cL_\rout(x_1) \cL(x_2) 
 +\aT(\cL_\rout(x_1),\cL_\rout(x_2)) 
  +2\cL_\rout(x_1)\cL_\rin(x_2)
  \\[0.5em]
  +\aT(\cL_\rin(x_1),\cL_\rin(x_2)) 
  -2[\underline{\cL}(x_1),\cL(x_2)-\cL_\rin(x_2)-\cL_\rout(x_2)]\bigg),
\end{multline}
where $\T$ and $\aT$ denote the time-ordered and anti-time-ordered products, $\cL$ is the interaction vertex of the model and other vertices were introduced in Def.~\ref{def:D_modifiers} and~\ref{def:sector}. We set $\cL_\ras=\cL_\rin+\cL_\rout$. In this section we suppress the dependence of $\cL_\ras$, $\cL_\rin$, $\cL_\rout$ on the profile $\eta$. We claim that
\begin{multline}\label{eq:second_order_general}
 \lim_{\epsilon\searrow0}\, S_\rmod^{[2]}(\eta,g_\epsilon)\Psi = 
 \lim_{\epsilon\searrow0}\, \frac{\ri^2}{2}\int\rd^4 x_1\rd^4 x_2 \, g_\epsilon(x_1)g_\epsilon(x_2)
 \\\times 
 \big(\T(\cL(x_1),\cL(x_2)) 
 -\cL(x_1)\cL(x_2)
 +\aT(\cL_\ras(x_1),\cL_\ras(x_2)) - \cL_\ras(x_1)\cL_\ras(x_2)
 \\
 -2\aT(\cL_\rin(x_1),\cL_\rout(x_2)) +2\cL_\rin(x_1)\cL_\rout(x_2)
 \big)\Psi
\end{multline}
for $\Psi\in\cD_1$ in the topology of the Fock-Hilbert space. It is easy to see that the adiabatic limit of the last line exists. In fact, for compactly supported profiles $\eta$ the support of the expression in the last line of the above equation is compact. Moreover, this expression is nonzero only in the case of the M{\o}ller/Bhabha scattering. To prove Eq.~\eqref{eq:second_order_general} it is enough to show that the contribution coming from
\begin{multline}
 \cL(x_1)\cL(x_2) -2\cL(x_1) \cL_\rin(x_2) 
 -2\cL_\rout(x_1) \cL(x_2)  
 \\
 +\cL_\ras(x_1)\cL_\ras(x_2) + 2 [\cL_\rout(x_1),\cL_\rin(x_2)]
 \\
 -2[\underline{\cL}(x_1),\cL(x_2)-\cL_\ras(x_2)]
\end{multline}
vanishes in the adiabatic limit of. By Lemma~\ref{lem:first_order_bounds} the adiabatic limit of the second and third line coincides with the adiabatic limit of
\begin{multline}
  \cL_\ras(x_1)\cL_\ras(x_2) -2\cL_\ras(x_1) \cL_\rin(x_2) 
  -2\cL_\rout(x_1) \cL_\ras(x_2)  
  \\
  + \cL_\ras(x_1)\cL_\ras(x_2) + 2 [\cL_\rout(x_1),\cL_\rin(x_2)],
\end{multline}
which is equal to zero. Using again Lemma~\ref{lem:first_order_bounds} we prove that
\begin{equation}
 [\underline{\cL}(x_1),\cL(x_2)-\cL_\ras(x_2)]
\end{equation}
is also zero in the adiabatic limit. Note that if $\Psi\in\cD_1$ does not contain massive particles, then only the first term in the second line of~\eqref{eq:second_order_general} contributes. By the results of Sec.~\ref{sec:S_op_mod_first_order_corrections}, Thm.~\ref{thm:adiabatic_first_second} holds true with $S_\rmod^{[2]}(\eta)=0$ for contributions corresponding to disconnected Feynman diagrams. The second-order  connected Feynman diagrams have zero, two or four external lines. The numbers of the external lines corresponding to electrons, positrons and photons are even. We divide all possible connected processes into the following classes:
\begin{enumerate}[leftmargin=*,label={(\arabic*)}]
 \item vacuum bubble -- no external particles,
 \item vacuum polarization -- two incoming or outgoing photons,
 \item self-energy of electron or positron -- two incoming or outgoing electrons or positrons,
 \item Compton scattering -- two incoming or outgoing photons  and one incoming and one outgoing electron or positron,
 \item pair creation -- two incoming or outgoing photons and two incoming or two outgoing electrons or positrons,
 \item M{\o}ller or Bhabha scattering -- four incoming or outgoing electrons.
\end{enumerate}
In the following sections we study all contributions to the modified scattering operator in the second-order corresponding to connected Feynman diagrams. All the Feynman diagrams of particular type corresponding to contributions which do not vanish trivially by the approximate energy-momentum conservation are depicted in each of the subsections below. We shall frequently use the formula~\eqref{eq:second_order_general}. Note that in the case of the processes (1), (2) and (5) only the first term on the RHS contributes. In the case of the processes (3) and (4) only the first two terms contribute. In the case of the process (6) all terms have to be taken into account for the limit to exist. Moreover, we observe that only processes (2), (4) and (5) contribute to the second limit in~\eqref{eq:brst_existence} with one external photon replaced by a ghost.

\subsubsection{Vacuum bubble}

Let
\begin{equation}\label{eq:distribution_vacuum_bubble}
 t(x):= (\Omega|\T(\cL(x),\cL(0))\Omega).
\end{equation}
The distribution $\F{t}(q)$ is an analytic function in some neighborhood of zero both in the scalar model and QED. It is defined uniquely up to a polynomial in $q$ of degree $4$. Thus, it is possible to normalize this distribution such that $\F{t}(q)=O(|q|^{5})$ at $q=0$. Thus, 
\begin{multline}
 (\Omega|S^{[2]}_\rmod(\eta,g_\epsilon)\Omega)
 =
 -\frac{1}{2}\,(\Omega|\T(\cL(g_\epsilon),\cL(g_\epsilon))\Omega)
 \\
 = 
 -\frac{1}{2}\int \mP{q_1}\mP{q_2}\,\F{g}(q_1)\F{g}(q_2)\,(2\pi)^4\delta(q_1+q_2)\,
 \frac{1}{\epsilon^4}\F{t}(-\epsilon q_2)
\end{multline}
vanishes in the limit $\epsilon\searrow0$. 

\begin{figure}[h!]
\begin{center}
\includegraphics{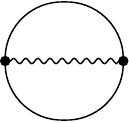}
\end{center}
\caption{Vacuum bubble diagram}
\end{figure}

Concluding, Thm.~\ref{thm:adiabatic_first_second} (B) restricted to the vacuum bubble contribution holds iff $\F{t}(q)=O(|q|^{5})$ at $q=0$, where $t$ is given by~\eqref{eq:distribution_vacuum_bubble}. We have $S^{[2]}_\rmod(\eta) = 0$ for this contribution.

\subsubsection{Vacuum polarization}\label{sec:vacuum_polarization}

Let
\begin{equation}\label{eq:distribution_vacuum_polarization}
 t(x):= \frac{1}{4}\,(\Omega|\T(\psi^2(x),\psi^2(0))\Omega).
\end{equation}
The distribution $\F{t}(q)$ is an analytic function in some neighborhood of zero. It is defined uniquely up to a Lorentz invariant polynomial in $q$ of degree $2$. Set $\F{t}(q) = \ri\Pi(q^2)$. It holds
\begin{equation}
 \Pi(q^2) = \Pi(0) + \Pi'(0)~q^2 + O(|q^2|^2).
\end{equation} 
Let
\begin{equation}
 F^{(A)}_\epsilon(k_1,k_2) 
 :=
 (k_1,k_2|S^{[2]}_\rmod(\eta,g_\epsilon)\Omega)
 =
 -\frac{1}{2}\,(k_1,k_2|\T(\cL(g_\epsilon),\cL(g_\epsilon))\Omega).
\end{equation}
It holds
\begin{multline}
 \epsilon^2 F^{(A)}_\epsilon(\epsilon k_1,\epsilon k_2)
 \\
 =
 -\ri \int\mP{q_1}\mP{q_2}\,\F{g}(q_1)\F{g}(q_2)
 \,(2\pi)^4\delta(k_1+k_2-q_1-q_2)\,\frac{1}{\epsilon^2}\Pi(\epsilon^2(k_1-q_1)^2).
\end{multline}
Consider the limit
\begin{gather}
 \lim_{\epsilon\searrow0}\int\!\mHO{k_1}\mHO{k_2}|F^{(A)}_\epsilon(k_1,k_2)|
 =
 \lim_{\epsilon\searrow0}\int\!\mHO{k_1}\mHO{k_2}
 |\epsilon^2 F^{(A)}_\epsilon(\epsilon k_1,\epsilon k_2)|.
\end{gather}
If $\Pi(0)\neq 0$, then the above limits is infinite. If $\Pi(0)=0$ but $\Pi'(0)\neq 0$, then the limit exists but it depends on $g$.  If $\Pi(0)=0$ and $\Pi'(0)=0$, then the limits exist and vanish. Set
\begin{equation}
 F^{(B)}_\epsilon:=\int \mHO{k_1}\mHO{k_2}\,(\Omega|S^{[2]}_\rmod(\eta,g_\epsilon)|k_1,k_2)\,f(k_1,k_2).
\end{equation}
It is easy to see that under the above conditions the limit
\begin{equation}
  \lim_{\epsilon\searrow0}\,|F^{(B)}_\epsilon|
\end{equation}
exists and vanishes as well. Now let
\begin{multline}
 F^{(C)}_\epsilon(k) 
 :=
 \int\mHO{k'}\,(k|S^{[2]}_\rmod(\eta,g_\epsilon)|k')\, f(k') 
 \\
 =
 -\frac{1}{2}\int\mHO{k'}\,(k|\T(\cL(g_\epsilon),\cL(g_\epsilon))|k')\, f(k').
\end{multline}
We obtain
\begin{multline}
 F^{(C)}_\epsilon(k) 
 \\
 =-\ri \int\mH{0}{k'}\mP{q_1}\mP{q_2}\F{g}_\epsilon(q_1)\F{g}_\epsilon(q_2)
 (2\pi)^4\delta(k-q_1-q_2-k')\Pi((k-q_1)^2)f(k')
 \\
 =-\ri\int\mP{q_1}\mP{q_2}\F{g}(q_1)\F{g}(q_2)\,
 \theta(k^0-\epsilon q_1^0-\epsilon q_2^0)
 \delta((k-\epsilon q_1-\epsilon q_2)^2)\,
 \\
 \times\Pi((k-\epsilon q_1)^2)\,f(k-\epsilon q_1-\epsilon q_2).
\end{multline}
Assuming that $\Pi(0)=0$, $\Pi'(0)=0$ and performing the integral with respect to $q_1^0$ using the Dirac delta we arrive at
\begin{equation}
 |F^{(C)}_\epsilon(k)|\leq \epsilon\, \frac{\const}{(1+|k|)^M}, 
\end{equation}
which implies that the limit
\begin{equation}\label{eq:vacuum_pol_limit}
 \lim_{\epsilon\searrow0}\,\int\mHO{k}\,|F^{(C)}_\epsilon(k)|
\end{equation}
vanishes.

\begin{figure}[h!]
\begin{center}
\begin{tabular}{ccc}
\includegraphics{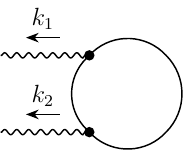}
&
\qquad
\includegraphics{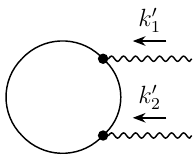}\qquad
&
\includegraphics{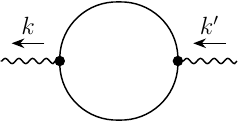}
\\
(A) & (B) & (C)
\end{tabular}
\caption{Vacuum polarization diagrams}
\end{center}
\end{figure}

In the case of QED one has to study the behavior of the distribution
\begin{equation}\label{eq:distribution_vacuum_polarization_qed}
 t_{\mu\nu}(x):= (\Omega|\T(J_\mu(x),J_\nu(0))\Omega).
\end{equation}
Using the Ward identity one shows~\cite{scharf2014} that $\F{t}_{\mu\nu}(q) = \ri (q^2 g_{\mu\nu} - q_\mu q_\nu ) \Pi(q^2)$, where $\Pi(q^2)$ is an analytic function of $q$ in some neighborhood of zero which is defined uniquely up to a constant. We renormalize it in such a way that $\Pi(0)=0$.  With this normalization of $t_{\mu\nu}(x)$ the results obtained in the scalar model generalize immediately to the case of QED.  Concluding, Thm.~\ref{thm:adiabatic_first_second} restricted to the vacuum polarization contributions holds iff $\F{t}(q)=O(|q|^{3})$ at $q=0$, where $t$ is given by Eqs.~\eqref{eq:distribution_vacuum_polarization} and~\eqref{eq:distribution_vacuum_polarization_qed}. It holds $S^{[2]}_\rmod(\eta) = 0$ for these contributions.

\subsubsection{Self-energy of electron/positron}\label{sec:2ed_se}

First, note that by Eq.~\eqref{eq:second_order_general} the adiabatic limit of
\begin{equation}
 \int\mHm{p'}\, (p|S_\rmod^{[2]}(\eta,g_\epsilon)|p')\,f(p')
\end{equation}
coincides with the adiabatic limit of
\begin{equation}
 F_\epsilon(p) :=\int\mHm{p'}\, \big[(p|\T(\cL(g_\epsilon),\cL(g_\epsilon))|p')-(p|\cL(g_\epsilon) \cL(g_\epsilon)|p')\big]\,f(p').
\end{equation}
Let
\begin{equation}
\begin{gathered}\label{eq:distribution_self_energy}
 t(x) := (\Omega|\T(\psi\varphi(x),\psi\varphi(0))\Omega),
 \\ 
 t_0(x) := \frac{1}{2}\,(\Omega|\psi\varphi(x)\,\psi\varphi(0)\Omega) +
 \frac{1}{2}\,(\Omega|\psi\varphi(0)\,\psi\varphi(x)\Omega).
\end{gathered}
\end{equation}
The distribution $\F{t}(q)$ is defined uniquely up to a Lorentz invariant polynomial in $q$ of degree $2$. We set
\begin{equation}
 \Xi(q^2-\mass^2) := -\ri(\F{t}(q) - \F{t}_0(q)).
\end{equation}
By direct computation we get
\begin{equation}
 \Xi(q^2-\mass^2) =  \frac{1}{16 \pi^2} \left(1-\frac{\mass^2}{q^2}\right) \log\left|1-\frac{q^2}{\mass^2}\right|+ c_1 + c_2 \left(q^2-\mass^2\right),
\end{equation}
where $c_1,c_2\in\R$ are some constants depending on the renormalization scheme. One verifies that
\begin{equation}\label{eq:Xi_properties}
 \Xi(-2\epsilon p\cdot q_1 +\epsilon^2 q_1^2)+\Xi(2\epsilon p\cdot q_1 - \epsilon^2 (q_1^2-2q_1\cdot q_2))
 = 2\Xi(0) + O(\epsilon^{2-\delta})
\end{equation}
for any $\delta>0$. We have
\begin{multline}\label{eq:self_energy_correction_scalar}
 F_\epsilon(p) = -\ri \int\mHm{p'}\mP{q_1}\mP{q_2}\,\F{g}_\epsilon(q_1)\F{g}_\epsilon(q_2)\,
 \\
 \times (2\pi)^4\delta(p-q_1-q_2-p')\,\Xi((p-q_1)^2-\mass^2)\,f(p')
 \\
 = -\frac{\ri}{2} \int\mP{q_1}\mP{q_2}\,\F{g}(q_1)\F{g}(q_2)\,\theta(p^0-\epsilon q_1^0-\epsilon q_2^0)\delta((p-\epsilon q_1-\epsilon q_2)^2-\mass^2)\,
 \\
 \times\left[\Xi(-2\epsilon p\cdot q_1 +\epsilon^2 q_1^2)+\Xi(2\epsilon p\cdot q_1 - \epsilon^2 (q_1^2-2q_1\cdot q_2))\right]\,f(p-\epsilon q_1-\epsilon q_2).
\end{multline}
If $\Xi(0)=c_1\neq 0$, then the limit
\begin{equation}\label{eq:self_energy_limit}
 \lim_{\epsilon\searrow0}\,\int\mHm{p}\,|F_\epsilon(p)|
\end{equation}
is infinite. If $c_1=0$, then we perform the integral with respect to $q_1^0$ using the Dirac delta and get
\begin{equation}\label{eq:self_energy_scalar_limit}
 |F_\epsilon(p)|\leq \epsilon^{1-\delta}\, \frac{\const}{(1+|p|)^M}, 
\end{equation}
which implies that the limit~\eqref{eq:self_energy_limit} vanishes. Note that this result is independent of the value of $c_2$. The self-energy contributions with two incoming or two outgoing electrons/positrons vanish trivially because of the energy-momentum conservation. 

\begin{figure}[h]
\begin{center}
\includegraphics{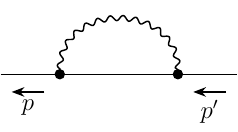}
\caption{Self-energy diagram}\label{fig:self_energy}
\end{center}
\end{figure}

The generalization of the above results to the case of QED is not obvious. Let us consider the correction to the propagator of electron in QED. We have
\begin{multline}
 F_\epsilon(\sigma,p) =\sum_{\sigma'=1,2}\int\mHm{p'}\, \big[(\sigma,p|\T(\cL(g_\epsilon),\cL(g_\epsilon))|\sigma',p')
 \\
 -(\sigma,p|\cL(g_\epsilon) \cL(g_\epsilon)|\sigma',p')\big]\,f(\sigma',p').
\end{multline}
Let
\begin{equation}\label{eq:distribution_self_energy_qed}
\begin{gathered}
 t_{ab}(x) = (\Omega|\T((\slashed{A}\psi)_a(x),(\overline{\psi}\slashed{A})_b(0))\Omega),
 \\
 t_{0,ab}(x) = \frac{1}{2}\,(\Omega|(\slashed{A}\psi)_a(x)\,(\overline{\psi}\slashed{A})_b(0)\Omega) -
 \frac{1}{2}\,(\Omega|(\overline{\psi}\slashed{A})_b(0)(\slashed{A}\psi)_a(x)\Omega)
\end{gathered} 
\end{equation}
and
\begin{equation}
 \Xi_{ab}(q) = \F{t}_{ab}(q) - \F{t}_{0,ab}(q).
\end{equation}
By direct computation (cf. e.g. \cite{scharf2014}) we get
\begin{multline}\label{eq:self_energy_qed}
 \Xi_{ab}(q) = \frac{1}{4\pi^2}\bigg[\left(1-\frac{\mass^2}{q^2}\right) 
 \left(\mass \id_{ab}+ \frac{1}{4}\left(1+\frac{\mass^2}{q^2}\right)\slashed{q}_{ab}\right) 
 \log\left|1-\frac{q^2}{\mass^2}\right|
 \\
 + \frac{\mass^2}{4q^2}\slashed{q}_{ab} - \frac{\mass}{4} \id_{ab}
 + c_1 \id_{ab}+ c_2 \left(\slashed{q}_{ab}-\mass\id_{ab}\right)\bigg],
\end{multline} 
where the constants $c_1,c_2\in\R$ depend on the renormalization scheme. We define
\begin{equation}
 \Xi(q^2-\mass^2)\delta_{\sigma,\sigma'} := \overline{u(\sigma,q)} \Xi(q)u(\sigma',q).
\end{equation}
Assume that $c_1=0$. For $p\in\Hm$ we have
\begin{equation}
 \overline{u(\sigma,p)}\Xi(p-\epsilon q_1) u(\sigma',p-\epsilon q_1-\epsilon q_2) - \Xi((p-\epsilon q_1)^2-\mass^2)\delta_{\sigma,\sigma'} = O(\epsilon^{2-\delta}),
\end{equation}
where $\delta>0$ is arbitrary. Moreover, $\Xi(q^2-\mass^2)$ satisfies the condition~\eqref{eq:Xi_properties}. Consequently, \begin{equation}
 F_\epsilon(\sigma,p) = \int\mHm{p'}\,(2\pi)^4\delta(p-q_1-q_2-p') \, \overline{u(\sigma,p)}\Xi(p-q_1) u(\sigma',p')\,f(p')
\end{equation} 
can be rewritten in the form analogous to~\eqref{eq:self_energy_correction_scalar}. Thus,~\eqref{eq:self_energy_scalar_limit} holds in QED if~$c_1=0$.

Concluding, Thm.~\ref{thm:adiabatic_first_second} restricted to the self-energy contributions holds iff $\F{t}(q)=0$ for $q^2=\mass^2$, where $t$ is given by Eqs.~\eqref{eq:distribution_self_energy} and \eqref{eq:distribution_self_energy_qed}. It holds $S^{[2]}_\rmod(\eta) = 0$ for these contributions.

\subsubsection{Compton scattering}

Let
\begin{equation}\label{eq:F_Compton}
 F^{(AB)}_\epsilon(p,k) := \int\mHm{p'}\mHO{k'}\, (p,k|S_\rmod^{[2]}(\eta,g_\epsilon)|p',k')\,f(p',k').
\end{equation}
Using~\eqref{eq:second_order_general} and neglecting terms vanishing in the adiabatic limit we obtain
\begin{multline}
 F^{(AB)}_\epsilon(p,k)
 \\
 =-\ri \int\mHm{p'}\mH{0}{k'}\frac{\rd^4 q_1}{(2\pi)^4}\frac{\rd^4 q_2}{(2\pi)^4}\, \F{g}_\epsilon(q_1)\F{g}_\epsilon(q_2)
 \,(2\pi)^4\delta(p+k-q_1-q_2-p'-k') 
 \\
 \times \left[\PV \frac{1}{(p'+k'+q_2)^2-\mass^2}+\PV\frac{1}{(p-k'-q_2)^2-\mass^2}\right]  f(p',k').
\end{multline}
If $p,p'\in\Hm$, $k,k'\in\HO$, $p+k-q_1-q_2-p'-k'=0$, $|k|+|k'|\geq \epsilon \mass$, $|q_1|,|q_2|\leq \const$ and the constant in the last bound is sufficiently small, then for arbitrary $\epsilon>0$ it holds
\begin{equation}
 (p'+k'+\epsilon q_2)^2-\mass^2,~ (p-k'-\epsilon q_2)^2-\mass^2 ~ > \const > 0,
\end{equation}
where the constant is independent of $\epsilon$. We set
\begin{equation}
 F^{(AB)}_\epsilon(p,k) = F^{\ssl}_\epsilon(p,k) + F^{\ssg}_\epsilon(p,k),
\end{equation}
where $F^{\ssg}_\epsilon(p,k)$ is given by~\eqref{eq:compton_less} and
\begin{multline}
 F^{\ssg}_\epsilon(p,k)
 \\
 :=-\ri \int\mHm{p'}\mH{0}{k'}\frac{\rd^4 q_1}{(2\pi)^4}\frac{\rd^4 q_2}{(2\pi)^4} \F{g}(q_1) \F{g}(q_2)
 \,(2\pi)^4\delta(p+k-\epsilon q_1-\epsilon q_2-p'-k') 
 \\
 \times \theta(|k|+|k'|-\epsilon \mass) 
 \left(\frac{1}{(p+k-\epsilon q_1)^2-\mass^2}
 +\frac{1}{(p-k'-\epsilon q_2)^2-\mass^2}\right) f(p',k').
\end{multline}
We note that $F^{\ssg}_\epsilon$ converges in $L^2(\Hm\times\HO,\rd\mu_\mass\times\rd\mu_0)$ to the following function
\begin{equation}
 F_0^{(AB)}(p,k)
 =\ri\int\mH{0}{k'}
 \frac{k\cdot k'}{(2p\cdot k)(2p\cdot k')}\,\delta((p+k-k')^2-\mass^2)\,f(p+k-k',k'),
\end{equation}
which belongs to $\cD_1$. Moreover, 
\begin{equation}\label{eq:compton_limit_rest}
 \int\mHm{p}\mHO{k}\,|F^{\ssl}_\epsilon(p,k)|^2 
 =
 \int\mHm{p}\mHO{k}\,|\epsilon F^{\ssl}_\epsilon(p,\epsilon k)|^2
 =
 o(\epsilon^2),
\end{equation}
which implies that $F^{(AB)}_\epsilon$ converges in $L^2(\Hm\times\HO,\rd\mu_\mass\times\rd\mu_0)$ to $F^{(AB)}_0$. In order to prove~\eqref{eq:compton_limit_rest} we first note that  
\begin{multline}\label{eq:compton_less}
 F^{\ssl}_\epsilon(p,\epsilon k)
 \\
 =-\ri
 \int\mH{0}{k'}\frac{\rd^4 q_1}{(2\pi)^4}\frac{\rd^4 q_2}{(2\pi)^3}\,
 \bigg( \theta(\mass-|k|+|k'|)  \,f(p+\epsilon k-\epsilon k'-\epsilon q_1-\epsilon q_2,\epsilon k')
 \\
 \times \F{g}(q_1) \left[
 \PV\frac{\epsilon}{(p+\epsilon k-\epsilon q_1)^2-\mass^2}
 +\PV\frac{\epsilon}{(p-\epsilon k'-\epsilon q_2)^2-\mass^2}
 \right]
 \\
 \times \F{g}(q_2)\, 
 \theta(p^0-\epsilon k^0-\epsilon k'^0-\epsilon q_1^0-\epsilon q_2^0)
 \,\epsilon\delta((p+\epsilon k-\epsilon k'-\epsilon q_1-\epsilon q_2)^2-\mass^2)\bigg).
\end{multline}
Using the above formula we prove that $|F^{\ssl}_\epsilon(p,\epsilon k)|$ is bounded by constant independent of $\epsilon>0$ and
\begin{equation}
 \lim_{\epsilon\searrow0} \,F^{\ssl}_\epsilon(p,\epsilon k)=0
\end{equation}
pointwise. Eq.~\eqref{eq:compton_limit_rest} follows now from the fact that the supports of functions $F^{\ssl}_\epsilon(p,\epsilon k)$ are contained in some compact subset of $\Hm\times\HO$ independent of $\epsilon>0$.

\begin{figure}[h]
\begin{center}
\begin{tabular}{cc}
 \includegraphics{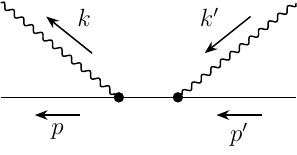}
 &
 \includegraphics{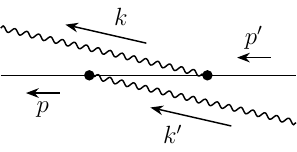}
 \\
 (A) & (B)
 \\
 \includegraphics{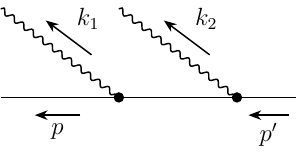}
 &
 \includegraphics{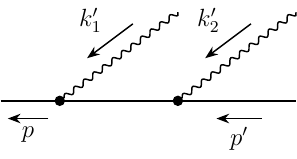}
 \\
 (C) & (D)
\end{tabular}
\caption{Compton scattering diagrams}\label{fig:compton}
\end{center}
\end{figure}

Now let
\begin{equation}\label{eq:compton_amp_2}
 F^{(C)}_\epsilon(p,k_1,k_2) := \int\mHm{p'}\, (p,k_1,k_2|S_\rmod^{[2]}(\eta,g_\epsilon)|p')\,f(p').
\end{equation}
Using~\eqref{eq:second_order_general} and neglecting terms vanishing in the adiabatic limit we get
\begin{multline}
 F^{(C)}_\epsilon(p,k_1,k_2)
 \\
 =-\frac{\ri}{2} \int\mHm{p'}\frac{\rd^4 q_1}{(2\pi)^4}\frac{\rd^4 q_2}{(2\pi)^4} \F{g}_\epsilon(q_1) \F{g}_\epsilon(q_2)
 \,(2\pi)^4\delta(p+k_1+k_2-q_1-q_2-p') 
 \\
 \times\left[\PV\frac{1}{(p+k_1-q_1)^2-\mass^2}+\PV\frac{1}{(p+k_2-q_2)^2-\mass^2}\right] f(p').
\end{multline}
It holds
\begin{multline}
 \epsilon^2 F^{(C)}_\epsilon(p,\epsilon k_1,\epsilon k_2)
 =\frac{\ri}{2}\int \frac{\rd^4 q_1}{(2\pi)^3}\frac{\rd^4 q_2}{(2\pi)^4} 
 \,\theta(p^0+\epsilon k_1^0+\epsilon k_2^0-\epsilon q_1^0-\epsilon q_2^0)
 \\
 \times\F{g}(q_1) \F{g}(q_2) \left[\PV\frac{\epsilon}{(p+\epsilon k_1-\epsilon q_1)^2-\mass^2}+\PV\frac{\epsilon}{(p+\epsilon k_2-\epsilon q_2)^2-\mass^2}\right]
 \\
 \times
 \epsilon\delta((p+\epsilon k_1+\epsilon k_2-\epsilon q_1-\epsilon q_2)^2-\mass^2)
 \,f(p+\epsilon k_1+\epsilon k_2-\epsilon q_1-\epsilon q_2).
\end{multline}
With the use of the above expression we prove that $|\epsilon^2 F^{(C)}_\epsilon(p,\epsilon k_1,\epsilon k_2)|$ is bounded by constant independent of $\epsilon>0$ and
\begin{equation}
 \lim_{\epsilon\searrow0} \,\epsilon^2 F^{(C)}_\epsilon(p,\epsilon k_1,\epsilon k_2)=0
\end{equation}
pointwise. Moreover, the supports of functions $F^{(C)}_\epsilon(p,\epsilon k)$ are contained in some compact subset of $\Hm\times\HO^{\times2}$. Thus, it holds
\begin{multline}
 \lim_{\epsilon\searrow0} \,\int\mHm{p}\mHO{k_1}\mHO{k_2}\,|F^{(C)}_\epsilon(p,k_1,k_2)|^2
 \\
 =
 \lim_{\epsilon\searrow0} \,\int\mHm{p}\mHO{k_1}\mHO{k_2}\,|\epsilon^2 F^{(C)}_\epsilon(p,\epsilon k_1,\epsilon k_2)|^2
 = 0.
\end{multline}
Finally, let 
\begin{equation}
 F^{(D)}_\epsilon(p) := \int\mHm{p'}\mHO{k'_1}\mHO{k'_2}\, (p|S_\rmod^{[2]}(\eta,g_\epsilon)|p',k'_1,k'_2)\,f(p',k'_1,k'_2).
\end{equation}
We have
\begin{multline}
 F^{(D)}_\epsilon(p)
 \\
 =-\frac{\ri}{2} \int\mHm{p'}\mHO{k'_1}\mHO{k'_2}\frac{\rd^4 q_1}{(2\pi)^4}\frac{\rd^4 q_2}{(2\pi)^4} (2\pi)^4\delta(p-q_1-q_2-p'-k'_1-k'_2) 
 \\
 \times\F{g}_\epsilon(q_1) \F{g}_\epsilon(q_2) \left[\PV\frac{1}{(p-k'_1-q_1)^2-\mass^2}+\PV\frac{1}{(p-k'_2-q_2)^2-\mass^2}\right] f(p',k'_1,k'_2)
\end{multline}
up to terms vanishing in the adiabatic limit. We easily prove that $|\epsilon^{-2} F^{(D)}_\epsilon(p)|$ is bounded by constant independent of $\epsilon>0$ and
\begin{equation}
 \lim_{\epsilon\searrow0} \,\epsilon^{-2} F^{(D)}_\epsilon(p)=0
\end{equation}
pointwise. Moreover, the supports of functions $F^{(D)}_\epsilon(p)$ are contained in some compact subset of $\Hm$. It follows that
\begin{equation}
 \int\mHm{p}\,|F^{(D)}_\epsilon(p)|^2
 = o(\epsilon^4).
\end{equation}

The integrands that appear in the expressions for the wave-functions in QED depend on the spins of the incoming/outgoing electron and the Lorentz indices of the photons and differ from the corresponding integrands in the scalar model only by some factors which are bounded (on compact sets) and smooth functions of $q_1,q_2$ and the momenta of the incoming and outgoing particles. Hence, all of the above results generalize easily to the case of QED. Concluding, Thm.~\ref{thm:adiabatic_first_second} restricted to the contributions considered in this section holds true with some nonzero $S^{[2]}_\rmod(\eta)\in L(\cD_1,\cH)$. 

\subsubsection{Pair creation/annihilation}

In this section we consider processes with two incoming or two outgoing electrons. Let
\begin{multline}
 F^{(A)}_\epsilon(p_1,p_2) := \int\mHO{k'_1}\mHO{k'_2}\, (p_1,p_2|S_\rmod^{[2]}(\eta,g_\epsilon)|k'_1,k'_2)\,f(k'_1,k'_2)
 \\
 =\int\mHO{k'_1}\mHO{k'_2}\, (p_1,p_2|S^{[2]}(g_\epsilon)|k'_1,k'_2)\,f(k'_1,k'_2).
\end{multline}
We have
\begin{multline}
 F^{(A)}_\epsilon(p_1,p_2)
 \\
 =
 -\frac{\ri}{2} \int\mHO{k'_1}\mHO{k'_2}\frac{\rd^4 q_1}{(2\pi)^4}\frac{\rd^4 q_2}{(2\pi)^4}\,
 (2\pi)^4\delta(p_1+p_2-q_1-q_2-k'_1-k'_2) 
 \\
 \times\F{g}_\epsilon(q_1) \F{g}_\epsilon(q_2)\,\PV\left[\frac{1}{(p_1-q_1-k'_1)^2-\mass^2}+\frac{1}{(p_1-q_1-k'_2)^2-\mass^2}\right] f(k'_1,k'_2).
\end{multline}
The principle value symbol in the above expression can be omitted as for $\F{g}$ with the support in sufficiently small neighborhood of the origin the denominators of the propagators never vanish. The above function converges in $L^2(\Hm^{\times 2},\rd\mu_\mass^{\times 2})$ to the following continuous function
\begin{multline}
 F_0^{(A)}(p_1,p_2)
 =
 -\frac{\ri}{2} \int\mHO{k'_1}\mHO{k'_2}\,
 \F{g}_\epsilon(q_1) \F{g}_\epsilon(q_2)
 \,(2\pi)^4\delta(p_1+p_2-k'_1-k'_2) 
 \\
 \times\left[\frac{1}{(p_1-k'_1)^2-\mass^2}+\frac{1}{(p_1-k'_2)^2-\mass^2}\right] f(k'_1,k'_2).
\end{multline}
Almost the same reasoning can be used to show convergence of
\begin{equation}
 F^{(B)}_\epsilon(k_1,k_2) := \int\mHO{k'_1}\mHO{k'_2}\, (k_1,k_2|S_\rmod^{[2]}(\eta,g_\epsilon)|p'_1,p'_2)\,f(p'_1,p'_2).
\end{equation}
Diagrams with one incoming and one outgoing photon vanish trivially in the adiabatic limit by the approximate conservation of the total energy and momentum. The same is true for diagrams with no incoming or outgoing particles.
\begin{figure}[h]
\begin{center}
\begin{tabular}{cc}
\includegraphics{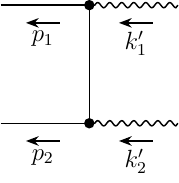}
&
\includegraphics{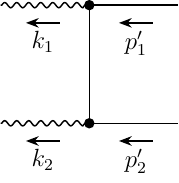}
\\
(A) & (B)
\end{tabular}
\caption{Pair creation/annihilation diagrams}
\end{center}
\end{figure}

In the above reasoning we used only general properties of the amplitudes following from the approximate energy momentum conservation in each of the two vertices. Thus, the above result generalizes immediately to the case of QED. Concluding, Thm.~\ref{thm:adiabatic_first_second} restricted to the contributions considered in this section holds true with some nonzero $S^{[2]}_\rmod(\eta)\in L(\cD_1,\cH)$.

\subsubsection{M{\o}ller/Bhabha scattering}\label{sec:moller}

Now, let us consider the scattering of two electrons/positrons. In QED, the electron-electron scattering is called the M{\o}ller scattering whereas the electron-positron scattering is called the Bhabha scattering. In the case of the scalar model there is no distinction between the two processes.

\begin{figure}[h]
\begin{center}
\begin{tabular}{ccc}
\includegraphics{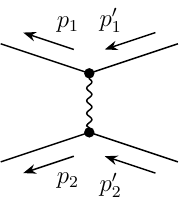}
&
\qquad
\includegraphics{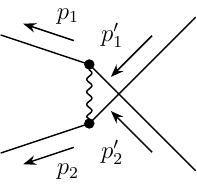}
\qquad
&
\includegraphics{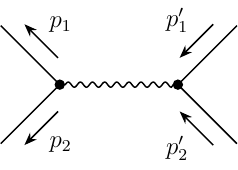}
\\
(A) $t$-channel
&
(B) $u$-channel
&
(C) $s$-channel
\end{tabular}
\caption{M{\o}ller/Bhabha scattering diagrams}\label{fig:moller}
\end{center}
\end{figure}
Let
\begin{equation}
 F_\epsilon(p_1,p_2) := \int\mHm{p'_1}\mHm{p'_2}\, (p_1,p_2|S_\rmod^{[2]}(\eta,g_\epsilon)|p'_1,p'_2)\,f(p'_1,p'_2).
\end{equation}
It is easy to see that the adiabatic limit of the s-channel contribution exist since the denominator of the propagator $(p_1+p_2-q_2)^2$ never vanishes. In what follows, we consider in detail only the t- and u-channel contribution $F^{(AB)}_\epsilon$. By Eq.~\eqref{eq:second_order_general} we have
\begin{equation}
 F^{(AB)}_\epsilon(p_1,p_2) =  F^{(I)}_\epsilon(p_1,p_2)+F^{(II)}_\epsilon(p_1,p_2),
\end{equation}
where
\begin{multline}\label{eq:moller_std}
 F^{(I)}_\epsilon(p_1,p_2) 
 \\
 =-\ri\int \mHm{p'_1}\mHm{p'_2}\frac{\rd^4 k}{(2\pi)^4}\, 
 \F{g}_\epsilon(p_2-k-p'_2)\F{g}_\epsilon(p_1+k-p'_1)\,\PV\frac{1}{k^2}\, f(p'_1,p'_2)
 \\
 =-\ri\int \frac{\rd^4 q_1}{(2\pi)^4}\frac{\rd^4 q_2}{(2\pi)^4}\frac{\rd^4 k}{(2\pi)^2}
 \F{g}(q_2)\F{g}(q_1)\,\PV\frac{1}{k^2}\,
 \theta(p^0_1+k^0-\epsilon q^0_1)\delta((p_1+k-\epsilon q_1)^2-\mass^2)
 \\
 \times \theta(p^0_2-k^0-\epsilon q^0_2)\delta((p_2-k-\epsilon q_2)^2-\mass^2)
 \,f(p_1+k-\epsilon q_1,p_2-k-\epsilon q_2)
\end{multline}
and
\begin{multline}\label{eq:moller_as}
 F^{(II)}_\epsilon(p_1,p_2) = \ri\int \frac{\rd^4 q_1}{(2\pi)^3}\frac{\rd^4 q_2}{(2\pi)^3}\frac{\rd^4 k}{(2\pi)^4}
 \,\F{g}(q_1)\F{g}(q_2)\,\PV\frac{1}{k^2}
 \\
 \delta(2p_1\cdot(k-\epsilon q_1))\,\delta(2p_2\cdot(k+\epsilon q_2))  \,\F{\eta}(k-\epsilon q_1) \F{\eta}(-k-\epsilon q_2)\,f(p_1,p_2).
\end{multline}
Let us choose the reference frame such that
\begin{equation}
 p_1+p_2 = (E,0,0,0),~~~~~~p_1-p_2=(0,0,0,2Q).
\end{equation}
Since $f\in\cD_2$ it holds $f(p_1,p_2)=0$ if $\frac{p_1\cdot p_2}{\mass^2}-1 \geq \const>0$. Thus, for $\F{g}$ with the support in sufficiently small neighborhood of the origin $|Q|\geq \const>0$ in the region where the integrands in the expressions~\eqref{eq:moller_std} and~\eqref{eq:moller_as} are nonzero. In the case of the first of the above expressions it is the consequence of the approximate energy momentum conservation. Using one of the Dirac deltas in the expressions~\eqref{eq:moller_std} and~\eqref{eq:moller_as} we evaluate first the integrals over $k^0$. Next using another one we evaluate the integral over $k^3$. One shows the following pointwise convergence
\begin{multline}\label{eq:moller_limit}
 F^{(AB)}_0(p_1,p_2):= \lim_{\epsilon\searrow0}\,F^{(AB)}_\epsilon(p_1,p_2) 
 =
 \lim_{\epsilon\searrow0}\,(F^{(I)}_\epsilon(p_1,p_2)+F^{(II)}_\epsilon(p_1,p_2))
 \\
 =-\frac{\ri}{8 E}\int \frac{\rd k^1\rd k^2}{(2\pi)^2} \, \bigg\{ \sum_\pm 
 \left[
 \frac{\theta(Q^2-(k^1)^2-(k^2)^2)}{(Q^2-(k^1)^2-(k^2)^2)^{1/2}} 
 \frac{1}{\,K^2_\pm} f(p_1+K_\pm,p_2-K_\pm)
 \right] 
 \\
 -\frac{1}{|Q|} \frac{|\eta(0,k^1,k^2,0)|^2}{(-(k^1)^2-(k^2)^2)} f(p_1,p_2)  \bigg\},
\end{multline}
where
\begin{equation}
 K_\pm:=(0,k^1,k^2,Q\pm(Q^2-(k^1)^2-(k^2)^2)^{1/2}).
\end{equation}
The integrand in the expression~\eqref{eq:moller_limit} is a sum of three terms. The term with $+$ is integrable, whereas the term with $-$ and the term in the last line are not separately integrable but their sum is integrable. Since the supports of the functions $F^{(I)/(II)}_\epsilon$ are contained in a compact subset of $\Hm^{\times 2}$ independent of $\epsilon>0$, by the Lebesgue dominated convergence theorem, the pointwise convergence~\eqref{eq:moller_limit} implies the convergence in the sense of $L^2(\Hm^{\times 2})$. Observe that the function $F_0^{(AB)}$ can be also expressed in explicitly covariant form
\begin{multline}
 F^{(AB)}_0(p_1,p_2)=-\frac{\ri}{4}\int \frac{\rd^4 k}{(2\pi)^2}\frac{1}{k^2}
 \delta((p_1+p_2)\cdot k)
 \\
 \times 
 \left(
 \delta((p_1-p_2)\cdot k-k^2)
 f(p_1+k,p_2-k) 
 -\delta((p_1-p_2)\cdot k)\, |\F{\eta}(k)|^2\,f(p_1,p_2)\right).
\end{multline}
Diagrams with one incoming and three outgoing or three incoming and one outgoing electron/positron vanish trivially in the adiabatic limit by the approximate energy-momentum conservation. The same is true for diagrams with no incoming or outgoing particles.

The above reasoning can be easily adapted to the case of QED because the singular behavior of the integrands that appear in the expression for the wave functions in QED is the same as in the scalar model. Concluding,  Thm.~\ref{thm:adiabatic_first_second} restricted to the contributions considered in this section holds true with some nonzero $S^{[2]}_\rmod(\eta)\in L(\cD_1,\cH)$.

\section{First-order corrections to interacting fields}\label{sec:fields}

In this section we give explicit expressions for the first-order corrections to some of the modified interacting fields in the scalar model and QED. We investigate the LSZ limits of the modified interacting fields, determine their long range tails and draw some conclusions about the physical interpretation of states in the Fock space. We prove the theorem stated below.

\begin{thm}\label{thm:first_order_fields}
Let $\Psi\in\cD_1$, $h\in C^\infty_{\mathrm{c}}(\R^4)$ and let $C\in\{\psi,\varphi\}$ in the case of the scalar model and $C\in\{J^\mu,A_\mu,F_{\mu\nu}\}$ in the case of QED. For every profile $\eta$ (cf. Def.~\ref{dfn:profile}) and sector measure $\hat\varrho$ (cf. Def.~\ref{def:sector}) there exists $C^{[1]}_{\ret/\adv,\rmod}(\eta;h)\in L(\cD_1,\cH)$ such that
\begin{equation}
 \lim_{\epsilon\searrow0}\, \|(C^{[1]}_{\ret/\adv,\rmod}(\eta,g_\epsilon;h)-C^{[1]}_{\ret/\adv,\rmod}(\eta;h))\Psi\|=0.
\end{equation}
\end{thm}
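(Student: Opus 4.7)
The plan is to establish the adiabatic limit by reducing the claim to an analysis of explicit Wick-polynomial integrals obtained from the first-order expansion of the unitary conjugation~\eqref{eq:mod_fields_relation} (and its QED analogue involving the operator $R(\eta,\mathrm{v},g)$). A direct expansion in $e$ yields
\begin{equation*}
 C^{[1]}_{\ret/\adv,\rmod}(\eta,g;h) = C^{[1]}_{\ret/\adv}(g;h) + \bigl[T^{[1]}_{\rin/\rout}(\eta,\mathrm{v},g),\,C^{[0]}(h)\bigr],
\end{equation*}
where $C^{[1]}_{\ret/\adv}(g;h)$ is the standard Bogoliubov first-order correction (a single Wick monomial in $\cL$ smeared with a convolution of $gh$ and the retarded or advanced propagator of $C$) and $T^{[1]}_{\rin/\rout}(\eta,\mathrm{v},g)=-\ri\int\rd^4 x\,g(x)\bigl(\cL_{\rin/\rout}(\eta;x)-\cL_{\mathrm{sector}}(\eta,\mathrm{v};x)\bigr)$, with the subtracted term present only in QED. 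Since both summands are explicit operator-valued distributions, it suffices to prove, for each allowed $C$ and each $\Psi\in\cD_1$, that the sum applied to $\Psi$ converges in the norm of $\cH$ as $\epsilon\searrow 0$.

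I would split into two cases depending on whether $C$ creates or annihilates a massless quantum. For $C\in\{\psi,J^\mu,F_{\mu\nu}\}$ the standard correction $C^{[1]}_{\ret/\adv}(g_\epsilon;h)\Psi$ already converges: in the momentum-space matrix element the potentially dangerous $1/k^2$ factor coming from the massless propagator is rendered harmless either because the source field is massive ($\psi^{[1]}_\ret$ involves only $\normord{\varphi\psi}$, whose support properties plus the compact support of $h$ give an $\epsilon$-uniform $L^1$ bound), or because a derivative in the definition of $C$ kills the pole ($F_{\mu\nu}$), or because $C$ itself involves only Dirac fields ($J^\mu$). The commutator term then admits a similar bound since $[T^{[1]}_{\rin/\rout},C^{[0]}(h)]$ produces integrals of the same analytic type. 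Pointwise convergence of the integrands combined with dominated convergence suffices, giving a limit independent of the profile $g$ used in $g_\epsilon$.

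The essential case is $C\in\{\varphi,A_\mu\}$, where neither summand separately has a limit in $\cH$ (this is manifest from Appendix~\ref{sec:long_rang_tail} for the standard correction, whose long-range tail does not die in the adiabatic limit). Here the cancellation mechanism is the entire point of the Dollard modifier: the leading on-shell emission factor $1/(p\!\cdot\!k)$ produced in the matrix element of $C^{[1]}_{\ret/\adv}(g_\epsilon;h)$ when a photon of momentum $k$ is emitted from an external electron of momentum $p$ is precisely compensated by the factor $\pm\ri\F{\eta}(k)/(p\!\cdot\!k\pm\ri 0)$ contributed by the numerical currents $j^\mu_{\rin/\rout}(\eta,p;x)$ inside $T^{[1]}_{\rin/\rout}$. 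After pairing, the combined matrix element takes the form $\int\!\mP{q}\,\F{g}(-\epsilon q)F(q)$ with $F(q)$ extending continuously across $q=0$ on the support of the wave functions of $\Psi,\Psi'\in\cD_1$; the adiabatic limit then exists, equals $F(0)$, and is independent of $g$ by the distributional-evaluation theorem of Appendix~\ref{sec:value_dist}. In QED the role of the operator $R(\eta,\mathrm{v},g)$ is to shift the reference velocity so that the cancellation persists in sectors of non-zero total charge, and it simultaneously ensures that the limit commutes with $Q_{\mathrm{BRST}}$.

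The hard part will be verifying this cancellation \emph{uniformly} on the support of the wave functions in $\cD_1$ in the joint soft-and-collinear region where a photon momentum tends to zero while remaining parallel to an external massive momentum. Concretely, one must show that the residual factor coming from the subtracted currents is milder than the free photon measure, so that the limiting object lies in $\cH$ and not merely in $\cD_1^*$. This estimate is closely parallel to, but strictly simpler than, the one used for the second-order scattering matrix in Theorem~\ref{thm:adiabatic_first_second}, and it requires the same kind of renormalization conditions on the vacuum-boson and self-energy two-point functions.
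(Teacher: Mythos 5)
Your reduction of the claim to the first-order expansion of the unitary conjugation~\eqref{eq:mod_fields_relation} is exactly the paper's starting point (cf.\ Eq.~\eqref{eq:modified_scalar}), but your case analysis of where the adiabatic limit is delicate is inverted, and the step you rely on in the first case would fail. For $C=\psi$ (and likewise $J^\mu$ in QED) the standard Bogoliubov correction does \emph{not} converge by itself: the internal line is the \emph{massive} retarded propagator evaluated at an external on-shell momentum $p$ plus an emitted photon momentum $k$, so its denominator is $(p+k)^2-\mass^2=2p\cdot k$, which vanishes as $k\to0$. The resulting one-photon kernel behaves like $\F{h}(p)/(2p\cdot k)$, and $\int_{|k|<1}\mHO{k}\,|p\cdot k|^{-2}$ diverges logarithmically, so $\|\psi^{[1]}_{\ret}(g_\epsilon;h)\Psi\|$ blows up like $|\log\epsilon|^{1/2}$. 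The danger is not the $1/k^2$ photon pole you cite, and it is not ``rendered harmless because the source field is massive'' --- it is precisely the soft-emission singularity off an external massive leg. The theorem holds only because the Dollard commutator contributes $-\F{h}(p)\F{\eta}(k)/(2p\cdot k)$ and $\F{\eta}(0)=1$, so the combined numerator $\F{h}(k+p)-\F{h}(p)\F{\eta}(k)$ vanishes at $k=0$ and the limiting kernel is bounded near $k=0$; this is the computation the paper carries out explicitly. Conversely, for $C\in\{\varphi,A_\mu,F_{\mu\nu}\}$ the standard first-order correction involves only the massive current $\normord{J}$ (no photon emission at this order), its only singularity is the integrable $p_1=p_2$ singularity of the massless propagator between two external massive legs, and \emph{both} summands converge separately --- there is no cancellation to verify there; the paper handles these terms via the $t^{-3}$ decay of $(\Psi|\normord{J(t,\vec x)}\Psi')$ and of the asymptotic currents. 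Your appeal to the long-range tail of Appendix~\ref{sec:long_rang_tail} does not bear on norm convergence of the smeared operator for compactly supported $h$.

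Two further points. First, the ``joint soft-and-collinear region'' you flag as the hard part does not exist at first order: for $p$ on the mass shell and $k$ lightlike one has $p\cdot k\geq(E_p-|\vec p|)|k|>0$ for $k\neq0$, so there is no collinear enhancement from massive emitters, and the only singular point is $k=0$. Second, no renormalization conditions on the vacuum-polarization or self-energy two-point functions enter Theorem~\ref{thm:first_order_fields}: the first-order time-ordered product is the Wick product and carries no normalization freedom; those conditions are needed only for the second-order statement of Theorem~\ref{thm:adiabatic_first_second}. With the case split corrected and the cancellation verified where it is actually needed, the rest of your outline (dominated convergence plus Theorem~\ref{thm:dist_value} for independence of $g$) does match the paper's argument.
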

 
In the proof of the above theorem we use the following formula for the modified interacting fields with the adiabatic cutoff
\begin{multline}
 C_{\ret/\adv,\rmod}(\eta,g;x) =\, \normord{C(x)} 
 \\
 + \ri e \int\rd^4 y\, g(y)\,\left( T(C(x),\cL(y)) -\frac{1}{2}[\normord{C(x)},\normord{\cL(y)}]_+ \pm \frac{1}{2}[\normord{C(x)},\normord{\cL(y)}]\right)
 \\
 +\ri e \int\rd^4 y\, g(y) \,[\normord{C(x)},\mp\cL_{\rin/\rout}(\eta;y)+\underline{\cL}(\eta;y)]
 +O(e^2),
\end{multline}
where $[\cdot,\cdot]_+$ is the anti-commutator, $\cL(x)$ is the standard interaction vertex in the scalar model or QED, $\cL_{\rout/\rin}(\eta;y)$ are the asymptotic vertices introduced in Def.~\ref{def:D_modifiers} and $\underline{\cL}(\eta;y)$ is the sector vertex introduced in Def.~\ref{def:sector}. If the polynomial $C$ coincides with one of the basic fields of the scalar model or QED, then the time-ordered product $T(C(x),\cL(y))$ does not require renormalization and it holds
\begin{multline}
 C_{\ret/\adv,\rmod}(\eta,g;x) =\, \normord{C(x)} + \ri e \int\rd^4 y\, g(y) \theta(\pm x^0\mp y^0)\, [\normord{C(x)},\normord{\cL(y)}]
 \\
 +\ri e \int\rd^4 y\, g(y) \left[\normord{C(x)},\mp\cL_{\rin/\rout}(\eta;y)+\underline{\cL}(\eta;y)\right]
 +O(e^2).
\end{multline}
In what follows we restrict attention to the retarded fields.

\subsection{Scalar model}\label{sec:fields_mod_first_order_corrections_scalar}

We recall that $\cL=J\varphi = \frac{1}{2}\psi^2\varphi$. Consider first the field describing the electron. The interacting field with adiabatic cutoff is given by
\begin{multline}
 \psi_{\ret,\rmod}(\eta,g;h) = \psi(h) + e \int \rd^4 x\rd^4 y \,h(x)\, D_\mass^\ret(x-y)\,g(y)\, \psi(y) \varphi(y) 
 \\
 + 
 \ri \frac{e}{2\mass} \int \rd^4 y\mHm{p}\, g(y)
 \left( b^*(p) j_\rin(\eta,p;y) \F{h}(p) -\mathrm{h.c.} \right)\varphi(y)
 + O(e^2).
\end{multline}
The standard first-order correction can be rewritten in the following form
\begin{multline}
 e \int\rd^4 x\rd^4 y\, h(x)\, D_\mass^\ret(x-y)\, g(y)\,\psi(y) \varphi(y) 
 \\
 = -e\int \rd \mu_0(k) \mHm{p}\frac{\rd^4 q}{(2\pi)^4}\,  \frac{\F{g}(q) \F{h}(k+p-q)~a^*(k) b^*(p)}{(p+k-q)^2-\mass^2-\ri \zerop \sgn (k^0+p^0-q^0)}
+\ldots,
\end{multline}
where on the RHS of the above equation we wrote explicitly only the term proportional to $a^*(k) b^*(p)$. The terms proportional to $a(k) b^*(p)$, $a^*(k) b(p)$ and $a(k) b(p)$ have a very similar form. Now let us consider the part of the first-order correction coming from the incoming Dollard modifiers
\begin{multline}
 \ri \frac{e}{2\mass} \int \rd^4 y\mHm{p}\, g(y)
 \left( b^*(p) j_\rin(\eta,p;y) \F{h}(p) -\mathrm{h.c.} \right)\varphi(y)
 \\
 =e\int \rd \mu_0(k) \mHm{p}\frac{\rd^4 q}{(2\pi)^4} \frac{\F{g}(q)\F{\eta}(k-q)\F{h}(p)~a^*(k) b^*(p)}{2p\cdot (k-q) - \ri \zerop}  + \ldots.
\end{multline}
Consequently, in the adiabatic limit we obtain
\begin{multline}
 \psi_{\ret,\rmod}(\eta;h)\Psi = \lim_{\epsilon\searrow0}\, \psi_{\ret,\rmod}(\eta,g_\epsilon;h)\Psi 
 \\
 = \psi(h)\Psi
 - e \int \mHm{p}\mHO{k}
 \frac{\F{h}(k+p)-\F{h}(p)\F{\eta}(k)}{2p\cdot k}a^*(k)b^*(p)\Psi + \ldots + O(e^2),
\end{multline}
where $\Psi\in\cD_1$. Thus, the first order correction $\psi^{[1]}_{\ret,\rmod}(\eta;h)$ is well defined as an element of $L(\cD_1)$. 

Now let us consider the modified retarded massless field. The field with adiabatic cutoff is given by
\begin{multline}
 \varphi_{\ret,\rmod}(\eta,g;x)
 =
 \varphi(x)
 +e\int \rd^4 y \,D_0^\ret(x-y) \,g(y)\normord{J(y)}
 \\
 -e \int \rd^4 y\, D_0(x-y) \,g(y)\,J_\rin(\eta;y) 
 + e\, \underline{\varphi}(\eta,g;x) 
 +O(e^2),
\end{multline}
where 
\begin{equation}
 \underline{\varphi}(\eta,g;x) := \int \rd^4 y\, D_0(x-y) \,g(y)\,\underline{J}(\eta;x).
\end{equation}
It is easy to see that the adiabatic limit of a function $\underline{\varphi}(\eta,g_\epsilon;x)$ exists and is a continuous function which we denote by $\underline{\varphi}(\eta;x)$. For $\Psi,\Psi'\in\cD_1$ it holds
\begin{multline}\label{eq:field_varphi_adiabaitc}
 (\Psi|\varphi_{\ret,\rmod}(\eta;x)\Psi')
 =
 \lim_{\epsilon\searrow0}\,(\Psi|\varphi_{\ret,\rmod}(\eta,g_\epsilon;x)\Psi')
 =
 (\Psi|\varphi(x)\Psi')
 \\
 +
 e\int \rd^4 y \, D_0^\ret(x-y) \, (\Psi|\normord{J(y)}\Psi')
 -
 e\int \rd^4 y \, D_0^\ret(x-y) \, (\Psi|J_\rin(\eta;y)\Psi') 
 \\
 +
 e \int \rd^4 y\,  D_0^\adv(x-y) \, (\Psi|J_\rin(\eta;y)\Psi')
 +
 e\,\underline{\varphi}(\eta;x)\,(\Psi|\Psi')
 +
 O(e^2),
\end{multline}
where the terms on the RHS of the above equation are well-defined continuous functions. To see this, it is enough to use the bounds 
\begin{equation}
 t^3\,|(\Psi|\normord{J(t,\vec{x})}\Psi')|,
 ~
 t^3\,|(\Psi|J_\rin(\eta;t,\vec{x})\Psi')|
 \leq \const,
\end{equation}
which follow from the estimates proved in Sec.~\ref{sec:asymp_current_behavior}, and note that for $f\in C^\infty(\R^4)$ such that $t^3\,|f(t,\vec x)|\leq \const$ the expression
\begin{equation}
 \int\rd^4 y\,D^{\ret/\adv}_0(x-y)f(y) 
 = \frac{1}{4\pi}\int\rd^3\vec y\,\frac{1}{|\vec x-\vec y|} 
 f(x^0\mp|\vec x-\vec y|,\vec y).
\end{equation}
is a continuous function of $x$. Using~\eqref{eq:field_varphi_adiabaitc} one can define $\varphi^{[1]}_{\ret,\rmod}(\eta;h)$ as a form in $\cD_1$. Moreover, it holds
\begin{multline}\label{eq:varphi_mod_correction}
 \varphi_{\ret,\rmod}(\eta;h)\Psi 
 = \lim_{\epsilon\searrow0}\, \varphi_{\ret,\rmod}(\eta,g_\epsilon;h)\Psi  
 \\
 = \varphi(h)\Psi 
 -\frac{e}{2} \int\mHm{p_1}\mHm{p_2}
 \bigg(\frac{\F{h}(p_1+p_2)}{2p_1\cdot p_2+2\mass^2}b^*(p_1)b^*(p_2) 
 \\
 -\frac{\F{h}(p_1-p_2)}{p_1\cdot p_2-\mass^2}b^*(p_1)b(p_2)
 + \frac{\F{h}(-p_1-p_2)}{2p_1\cdot p_2+2\mass^2}b(p_1)b(p_2) \bigg)\Psi
 \\
 -\frac{e}{2} \int\mHm{p}\frac{\rd^4 k}{(2\pi)^3}\,\sgn(k^0)\delta(k^2)\, \frac{\F{h}(-k)\F{\eta}(k)}{p\cdot k} b^*(p)b(p)\Psi
 +
 e\,\underline{\varphi}(\eta;x) \Psi
 +
 O(e^2),
\end{multline}
where $\Psi\in\cD_1$. Note that for every Schwartz functions $f$ and $h$ the function
\begin{equation}
 F(p_1):=\int\mHm{p_2}\,\frac{\F{h}(p_1-p_2)}{p_1\cdot p_2-\mass^2} f(p_2)
\end{equation}
is also a Schwartz function. Thus, the first-order correction $\varphi^{[1]}_{\ret,\rmod}(\eta;h)$ is well defined as an element of $L(\cD_1)$. 

Let us determine the long-range tail of $\varphi_{\ret,\rmod}(\eta;x)$. The long range tail of a field is defined in Appendix~\ref{sec:long_rang_tail}. One shows that the long-range tail of the free massless field vanishes, the long range tails of the second and third term on the RHS of Eq.~\eqref{eq:field_varphi_adiabaitc} cancel each other and the long-range tail of the fourth term on the RHS of Eq.~\eqref{eq:field_varphi_adiabaitc} vanishes. Thus, the long-range tail of $\varphi_{\ret,\rmod}(\eta;x)$ coincides with the long-range tail of $\underline{\varphi}(\eta,x)$ and is equal to
\begin{equation}
 \lim_{R\to\infty}\,R\,(\Psi|\varphi_{\ret,\rmod}(\eta,x+Rn)\Psi') = 
 e \int_{\Hm} \rd\hat\varrho(p) \,\frac{1}{((n\cdot \frac{p}{m})^2-n^2)^{1/2}}
 \,(\Psi|\Psi')
 +O(e^2),
\end{equation}
where $n$ is a unit spatial four-vector and $\Psi\in\cD_0$. One shows that the corrections of order $e^2$ vanish. In particular, in the case $\hat\varrho=0$ the long-range tail of $\varphi_{\ret,\rmod}(\eta,x)$ is trivial. 

Let us study the LSZ limits of the modified retarded fields. The outgoing and incoming LSZ limits are defined in Appendix~\ref{sec:LSZ} and are used to construct the operators that create or annihilate a single particle when acting on a state in the asymptotic Hilbert spaces. The past LSZ limits of the retarded fields and the future LSZ limits of the advanced fields have simple forms. For example, in the case of purely massive models these limits coincide with the standard creation and annihilation operators defined in the Fock space. This is not true in the scalar model. The past LSZ limit of the massive modified retarded field $\psi_{\ret,\rmod}(\eta;h)$ exists only as a form on $\cD_1$. Thus, it cannot be used in the construction of asymptotic states. The massive interacting field in the scalar model has a non-standard asymptotic behavior. In order to define the asymptotic massive field one would have to modify the LSZ limit in an appropriate way which is outside the scope of this paper. Using~\eqref{eq:field_varphi_adiabaitc} and the estimates from Sec.~\ref{sec:asymp_current_behavior} we determine the past LSZ limits of the massless retarded field
\begin{align}
 \lim_{t\to-\infty}\,(-\ri) \int\rd^3\vec x~ f_t(\vec x)\,
 \overset{\leftrightarrow}{\partial_t}\,(\Psi|\varphi_{\ret,\rmod}(\eta;t,\vec{x})\Psi') 
 &= (\Psi|a^*(\eta,f)\Psi')+ O(e^2),
 \\
 \lim_{t\to-\infty}\,(-\ri) \int\rd^3\vec x~ \overline{f_t(\vec x)}\,
 \overset{\leftrightarrow}{\partial_t}\,(\Psi|\varphi_{\ret,\rmod}(\eta;t,\vec{x})\Psi') 
 &= (\Psi|a(\eta,f)\Psi')+ O(e^2), 
\end{align}
where
\begin{equation}\label{eq:asymp_a_scalar}
 a(\eta,k) := a(k) - e\,\F{\eta}(k) \int_{\Hm} (\rd\rho-\rd\hat\varrho)(p)\, \frac{\mass}{p\cdot k}.
\end{equation}
The above operators $a^*(\eta,k)$ and $a(\eta,k)$ are the creation and annihilation operators of the physical incoming photons. Note that in the case $\hat\varrho=0$ the ground state $\Omega$ in the Fock space is annihilated by $a(\eta,k)$ and, thus, does not contain any incoming photons. In contrast, all states with at least one electron contain infinite number of photons. Eq.~\ref{eq:asymp_a_scalar} implies that the cloud of photons is correlated with the momentum of the electron and depends on the profile $\eta$. In particular, the momenta of photons in the cloud belong to the support of the Fourier transform of $\eta$.

\subsection{QED}\label{sec:fields_mod_first_order_corrections_qed}

We recall that $\cL=A_\mu J^\mu=\overline{\psi}\slashed{A}\psi$. Using the method presented in the case of the scalar model we show that the adiabatic limit of the interacting vector potential is given by
\begin{multline}
 (\Psi|A_{\ret/\adv,\rmod}^{\mu}(\eta;x)\Psi') = (\Psi|A^\mu(x)\Psi')
 - e\int \rd^4 y \, D_0^{\ret/\adv}(x-y) ~(\Psi|\normord{J^\mu(y)}\Psi')
 \\
 +e \int \rd^4 y\, D_0(x-y)~(\Psi|\left(\pm J^\mu_{\rin/\rout}(\eta;y) - \underline{J}^\mu(\eta;y)\right)\Psi') + O(e^2).
\end{multline}
The above expression defines the first order correction $A_{\ret/\adv,\rmod}^{[1]\mu}(\eta;x)$ as a form on $\cD_1$. One can also define $A_{\ret/\adv,\rmod}^{[1]\mu}(\eta;x)$ as an operator in $L(\cD_1)$ by the expression analogous to the expression~\eqref{eq:varphi_mod_correction} used to define the first order correction to the interacting massless field in the scalar model. The interacting potential is obviously not BRST invariant. However, one easily verifies that the interacting electromagnetic field $F_{\ret/\adv,\rmod}^{\mu\nu}(\eta;x)$ is BRST invariant. Hence, it induces an operator in the physical Hilbert space. 

Now, let us consider the modified retarded spinor current. We assume that the time-ordered products are normalized such that $\F{\underline{t}}_{2}^{[2]}(q) = O(|q|^3)$, where $\F{\underline{t}}_{2}^{[2]}$ is defined by Eq.~\ref{eq:normalization_st}. For $\Psi\in\cD_1$ we obtain
\begin{multline}
 J^\mu_{\ret,\rmod}(\eta;h_\mu)\Psi = 
 \lim_{\epsilon\searrow0}
 J^\mu_{\ret,\rmod}(\eta,g_\epsilon;h_\mu)\Psi
 \\
 =\,\normord{J^{\mu}(h_\mu)}\Psi
 -e \sum_{\sigma,\sigma'=1,2}\int\mHO{k}\mHm{p}\mHm{p'} 
 \\
 \bigg[
 (\F{h}_\mu(-p+k+p')-\F{h}_\mu(-p+p')\F{\eta}(k))\, \overline{u}(\sigma,p)\gamma^\mu u(\sigma',p')
 \left(\frac{p'^\nu}{p'\cdot k} - \frac{p^\nu}{p\cdot k}\right)
 \\
 +\F{h}_\mu(-p+k+p')\,\overline{u}(\sigma,p)
 \left(\frac{\gamma^\mu \slashed{k}\gamma^\nu}{2p'\cdot k}
 -\frac{\gamma^\nu \slashed{k}\gamma^\mu}{2p\cdot k}\right)
 u(\sigma',p')\bigg]
 \\[1mm]
 \times \, a_\nu^*(k)b^*(\sigma',p') b(\sigma,p) \Psi
 +\ldots + O(e^2),
\end{multline}
where we wrote explicitly only the term proportional to $a_\nu^*(k)b^*(\sigma',p') b(\sigma,p)$. All terms of order $O(e)$ are proportional to $a_\nu^\#(k)b^\#(\sigma',p') b^\#(\sigma,p)$ or $a_\nu^\#(k)d^\#(\sigma',p') d^\#(\sigma,p)$ and have a form similar to the form of the term displayed above. The above expression defines $J_{\ret,\rmod}^{[1]\mu}(\eta;h_\mu)$ as an operator in $L(\cD_1)$. One easily shows that it commutes with the BRST charge. Hence, it induces an operator in the physical Hilbert space.

Using estimates from Appendix 1 to Section XI.3 in~\cite{reedsimon3} we determine the past asymptote of the interacting retarded current 
\begin{equation}
 \lim_{\lambda\to\infty}\,(\Psi|J_{\ret,\rmod}(\eta;-\lambda v)\Psi) = \lim_{\lambda\to\infty}\,(\Psi|\normord{J(-\lambda v)}\Psi) = \frac{\mass^2}{2(2\pi)^3} \,v^\mu\, (\Psi|\rho(\mass v)\Psi'),
\end{equation}
where $\Psi,\Psi'\in \cD_0$, $v$ is a four-velocity, $\rho(p)$ is defined by Eq.~\eqref{eq:def_charge_dist_momentum} and the last equality follows from the results of Sec.~\ref{sec:asymp_current_behavior}. The future asymptote of the advanced interacting current has the same form. We conclude that the radiative corrections do not change the asymptotes of the current.

The long-range tail of the interacting electromagnetic field has the following form 
\begin{equation}
 \lim_{R\to\infty}\,R^2\,(\Psi|F^{\mu\nu}_{\ret/\adv,\rmod}(\eta;x+Rn)\Psi) = -e \int_{\Hm} (\Psi|\rd\hat\varrho(p) \Psi') \,f^{\mu\nu}(p/\mass;n),
\end{equation}
where $f^{\mu\nu}$ is the Coulomb field defined by Eq.~\eqref{eq:Coulomb_field} and the measure $\hat\varrho = \varrho_0 + Q\varrho_1$ (cf. Def.~\ref{def:sector}). If $\varrho_0=0$ and $\varrho_1=\varrho_{m\mathrm{v}}$ is the Dirac measure at $m\mathrm{v}$, then
\begin{equation}
 \lim_{R\to\infty}\,R^2\,(\Psi|F^{\mu\nu}_{\ret/\adv,\rmod}(\eta;x+Rn)\Psi) = -e\, (\Psi|Q \Psi') \,f^{\mu\nu}(\mathrm{v};n).
\end{equation}
We see that the asymptotic flux of the electric field is nontrivial only in sectors with non-zero electric charge $Q$. The asymptotic flux of the electric field integrated over the sphere $S^2$ of spatial directions gives $-eQ$ in compliance with the Gauss law. The past/future LSZ limits $a_\mu(\eta;k)$ and $a^*_\mu(\eta;k)$ of the interacting retarded/advanced vector potential are given by
\begin{equation}\label{eq:asymp_a_qed}
 a_\mu(\eta,k) = a_\mu(k) +e\,\F{\eta}(k) \int_{\Hm}(\rd\rho-\rd\hat\varrho)(p)\frac{p_\mu}{p\cdot k}.
\end{equation}
Let $\varepsilon^\mu(s,k)$, $s=1,2$, be the polarization vectors of photons introduced in Sec.~\ref{sec:qed}. The operators
\begin{equation}\label{eq:asymp_a_qed_phys}
 a^\#(\eta;s,k):=\varepsilon^\mu(s,k)\,[a_\mu^\#(\eta,k)],
\end{equation}
defined in the physical Hilbert space $\cH^{\textrm{phys}}$, are the creation and annihilation operators of the physical incoming photons. As in the case of the scalar model the incoming LSZ field are obtained from the free field by a coherent transformation. If $\varrho_0=0$, then the ground state $\Omega$ does not contain any incoming photons. However, states with electrons or positrons are never annihilated by $a(\eta;s,k)$. In fact, these states contain infinite number of incoming photons.

Finally, let us consider the first-order corrections to the modified retarded spinor field. For simplicity we assume that $\hat\varrho = Q\varrho_{m\mathrm{v}}$. We find that for $\Psi,\Psi'\in\cD_1$ it holds
\begin{multline}
 (\Psi|\psi_{\ret,\rmod}(\eta;f)\Psi') = \lim_{\epsilon\searrow0} (\Psi|\psi_{\ret,\rmod}(\eta,g_\epsilon;f)\Psi')  
 =(\Psi| \psi(f)\Psi')
 \\
 -e\sum_{\sigma=1,2}\int \mHm{p}\mHO{k} 
 \bigg[
 \frac{\F{f}(k+p)\, (\slashed{p}+\slashed{k}+m)\gamma^\mu 
 -2\F{f}(p)\,p^\mu\F{\eta}(k)}{2p\cdot k}u(\sigma,p)
 \\
 -\frac{\F{f}(p) \, \mathrm{v}^\mu\F{\eta}(k)\,u(\sigma,p)}{\mathrm{v}\cdot k}\bigg](\Psi|a_\mu^*(k)b^*(\sigma,p)\Psi') 
 +\ldots + O(e^2),
\end{multline}
where we wrote explicitly only the term proportional to $a_\mu^*(k)b^*(\sigma,p)$. Because the term depending on the four-velocity $\mathrm{v}$ behaves like $O(|k|^{-1})$ the correction $\psi^{[1]}_{\ret,\rmod}(\eta;f)$ cannot be interpreted as an operator. It makes sense as a form on $\cD_1$. However, it is not BRST invariant, $[Q_{\mathrm{BRST}},\psi^{[1]}_{\ret,\rmod}(\eta;f)]\neq 0$. Moreover, we do not expect that higher order corrections to $\psi_{\ret,\rmod}(\eta;f)$ exist even as forms. As specified in Conjecture~\ref{cnj:S_mod}, our proposal for the modified interacting fields works only for observable fields. In order to define the interacting Dirac field one has to introduce further modifications in the definition of $\psi_{\ret,\rmod}(\eta,g;x)$ such that the resulting expression is formally gauge invariant in a sense explained in Sec.~\ref{sec:qed}. Because the interacting Dirac field interpolates between sectors of different electric charge it cannot be a point-local field. In fact, its localization region cannot be bounded as otherwise the Gauss law would be violated. It is plausible that one can construct a modified retarded spinor field localized in a semi-infinite string using ideas of~\cite{cardoso2018string,mandelstam1968feynman}. However, in order to define a charged field localized in some spatial cone one has to replace the current $j_\rin^\mu$ in Def.~\ref{def:sector} with some current supported in such a cone. We leave this issue for future investigation.

\section{Energy-momentum operators}\label{sec:energy_momentum}

In this section we study the covariance of the modified scattering matrix and the modified interacting fields under space-time translations. In the case of quantum-mechanical models the modified scattering matrix constructed with the use of the Dollard procedure typically commutes with the free dynamics. This is usually not the case in field theory models~\cite{morchio2016dynamics}. Assuming that Conjecture~\ref{cnj:S_mod} holds true we show that the modified scattering matrix and the modified interacting fields in the scalar model and QED are covariant under the action of some unitary representation of the translation group which is not equivalent to the standard Fock representation. We give the explicit expressions for the energy-momentum operators in the scalar model and QED and investigate their joint spectrum. 

\subsection{Scalar model}\label{sec:energy_momentum_scalar}

Let us indicate how different objects transform under spacetime translations. The standard Bogoliubov scattering matrix, given by~\eqref{eq:bogoliubov_S_op}, and the standard interacting fields, given by~\eqref{eq:bogoliubov_fields}, have the following properties
\begin{equation}
\begin{aligned}
 U(a) S(g) U(a)^{-1} 
 =&\,
 S(g_a),
 \\
 U(a) C_{\ret/\adv}(g;h) U(a)^{-1} 
 =&\,C_{\ret/\adv}(g_a;h_a),
\end{aligned} 
\end{equation}
where $U(a)$, $a\in\R^4$, is the standard unitary representation of the group of translation in the Fock space,
$g$ is the adiabatic cutoff, $g_a(x):=g(x-a)$, $h\in\cS(\R^4)$ and $h_a(x):=h(x-a)$. In case of the Dollard modifiers, given by~\eqref{eq:def_modifiers}, the modified scattering matrix with adiabatic cutoff, given by~\eqref{eq:mod_S_op}, and the modified interacting fields, given by~\eqref{eq:mod_fields}, we have
\begin{equation}
\begin{aligned}
 U(a) S^\ras_{\rout/\rin}(\eta,g) U(a)^{-1} 
 =&\,
 S^\ras_{\rout/\rin}(\eta_a,g_a),
 \\
 U(a) S_\rmod(\eta,g) U(a)^{-1} 
 =&\,
 S_\rmod(\eta_a,g_a),
 \\
 U(a)C_{\ret/\adv,\rmod}(\eta,g;h)U(a)^{-1} 
 =&\, C_{\ret/\adv,\rmod}(\eta_a,g_a;h_a),
\end{aligned} 
\end{equation}
where $\eta$ and $\eta_a(x):=\eta(x-a)$ are profiles, that is real-valued Schwartz functions with integral over spacetime normalized to one (cf. Def.~\ref{dfn:profile}). From now on let us assume that the adiabatic limit of the modified scattering matrix and modified interacting fields exists in the sense of Conjecture~\ref{cnj:S_mod}. The fact that $U(a)\Psi\in\cD_2$ if $\Psi\in\cD_2$ implies that for any $a\in\R^4$ and $\Psi,\Psi'\in\cD_2$ the limit
\begin{multline}\label{eq:S_mod_translation1}
 \lim_{\epsilon\searrow0}\, (\Psi|S_\rmod(\eta_a,(g_{\epsilon a})_\epsilon)\Psi')
 =
 \lim_{\epsilon\searrow0}\, (\Psi|U(a) S_\rmod(\eta,g_\epsilon) U(a)^{-1}\Psi')
 \\
 =(\Psi|U(a) S_\rmod(\eta) U(a)^{-1}\Psi')
\end{multline}
exists, where 
\begin{equation}
 (g_a)_\epsilon(x) = g_a(\epsilon x)= g(\epsilon x-a),
 \quad
 (g_{\epsilon a})_\epsilon(x) = g(\epsilon (x-a)) = g_\epsilon(x-a).
\end{equation}
It holds
\begin{multline}\label{eq:S_mod_translation2}
 \lim_{\epsilon\searrow0}\,(\Psi|S^{[n]}_\rmod(\eta_a,(g_{\epsilon a})_\epsilon)\Psi') 
 =
 \lim_{\epsilon\searrow0}\,\int\mP{q_1}\ldots\mP{q_n}\,
 \F{g}_\epsilon(-q_1)\ldots\F{g}_\epsilon(-q_n)\,
 \\
 \times
 \re^{-\ri (q_1+\ldots+q_n)\cdot a}\, 
 S^{[n]}_\rmod(\eta_a;q_1,\ldots,q_n)
 \\
 = 
 \lim_{\epsilon\searrow0}\,\int\mP{q_1}\ldots\mP{q_n}\,
 \F{g}_\epsilon(-q_1)\ldots\F{g}_\epsilon(-q_n)\,
 S^{[n]}_\rmod(\eta_a;q_1,\ldots,q_n)
 \\
 =
 \lim_{\epsilon\searrow0}\,(\Psi|S^{[n]}_\rmod(\eta_a,g_\epsilon)\Psi')
 =
 (\Psi|S^{[n]}_\rmod(\eta_a)\Psi')
\end{multline}
where $S^{[n]}_\rmod(\eta,q_1,\ldots,q_n)\in\cS'(\R^{4n})$ is given by~\eqref{eq:S_mod_sym_dist}. The second of the above identities is a consequence of Thm.~\ref{thm:dist_value} stated in Appendix~\ref{sec:value_dist}. Combining Eqs.~\eqref{eq:S_mod_translation1} and~\eqref{eq:S_mod_translation2} we get the following equality
\begin{equation}\label{eq:scalar_covariance_S_op}
 U(a) S_\rmod(\eta) U(a)^{-1}=S_\rmod(\eta_a).
\end{equation}
In an analogous way one proves that
\begin{equation}\label{eq:scalar_covariance_fields}
 U(a)C_{\ret/\adv,\rmod}(\eta;h)U(a)^{-1} 
 =\, C_{\ret/\adv,\rmod}(\eta_a;h_a).
\end{equation}
Using Def.~\ref{def:intertwiners} and Eqs.~\eqref{eq:phi_finite_trans} we note that $V_\rout(\eta,\eta_a)=V_\rin(\eta,\eta_a)=:V(\eta,\eta_a)$. Next, by Eqs. \eqref{eq:intertwiners_properties} and~\eqref{eq:S_op_mod_V} we conclude that
\begin{equation}\label{eq:covariance_s_op_scalar_eta}
\begin{aligned}
 U(a)S_\rmod(\eta)U(a)^{-1} 
 =&\,
 V(\eta,\eta_a)^{-1}S_\rmod(\eta)V(\eta,\eta_a),
 \\
 U(a)C_{\ret/\adv,\rmod}(\eta;h)U(a)^{-1}
 =&\,
 V(\eta,\eta_a)^{-1} C_{\ret/\adv,\rmod}(\eta;h_a)V(\eta,\eta_a).
\end{aligned}
\end{equation}

\begin{dfn}\label{def:translation_scalar}
Let $\eta$ be a profile (cf. Def.~\ref{dfn:profile}) and $\hat\varrho$ be a sector measure (cf. Def.~\ref{def:sector}). Set
\begin{equation}\label{eq:scalar_translation_mod}
 \R^4 \ni a \mapsto  U_\rmod(\eta;a):=V(\eta,\eta_a) U(a) \in B(\cH).
\end{equation}
\end{dfn}
By Thm.~\ref{thm:mod_translation} the map~\eqref{eq:scalar_translation_mod} is a unitary representation of the translation group. The following theorem is a trivial consequence of Eqs.~\eqref{eq:scalar_covariance_S_op}, \eqref{eq:scalar_covariance_fields} and \eqref{eq:covariance_s_op_scalar_eta}. It states that the modified scattering operator and the modified interacting fields are covariant with respect to the representation of the translation group defined above.
\begin{thm}
Assume that Conjecture~\ref{cnj:S_mod} holds true in the case of the scalar model. For arbitrary $a\in \R^4$
\begin{equation}
\begin{aligned}
 U_\rmod(\eta;a) S_\rmod(\eta)U_\rmod(\eta;a)^{-1} 
 =&\, S_\rmod(\eta),
 \\
 U_\rmod(\eta;a) C_{\ret/\adv,\rmod}(\eta;h)U_\rmod(\eta;a)^{-1}
 =&\, C_{\ret/\adv,\rmod}(\eta;h_a).
\end{aligned} 
\end{equation}
\end{thm}
\begin{thm}\label{thm:mod_translation}
(A) For any profile $\eta$ and a sector measure $\hat\varrho$ the map~\eqref{eq:scalar_translation_mod} is a strongly continuous unitary representation of the group of spacetime translations which is not unitarily equivalent to the standard representation $U(a)$. 
\\
(B) The generators $P_\rmod(\eta)$ of $U_\rmod(\eta,a)=\exp(\ri P_\rmod(\eta)\cdot a)$ are explicitly given by
\begin{equation}\label{eq:energy_momentum_scalar}
P_\rmod^\mu(\eta) =\int \mHO{k}\, k^\mu a^*(\eta,k) a(\eta,k) + \int \mHm{p}\, p^\mu b^*(p) b(p), 
\end{equation}
where $a^\#(\eta;k)$ are defined by~\eqref{eq:asymp_a_scalar} and $b^\#(p)$ are the standard electron annihilation and creation operators.
\\
(C) The joint spectrum of $P_\rmod^\mu(\eta)$ coincides with the closed forward lightcone. If $\hat\varrho=0$, then $P_\rmod^\mu(\eta)\Omega=0$ and $P_\rmod^\mu(\eta)|k) = k^\mu |k)$. Otherwise, there is no vacuum and massless one-particle states in the joint spectrum of $P_\rmod^\mu(\eta)$. Regardless of the choice of $\hat\varrho$ there are no massive one-particle states in the joint spectrum of $P_\rmod^\mu(\eta)$.
\end{thm}
\begin{proof}
We first note that by Eqs.~\eqref{eq:dfn_intertwiner} and~\eqref{eq:phi_finite_trans} it holds 
\begin{multline}
 V(\eta,\eta_a)
 =
 \exp\left(e\int_{\Hm\times H_0} (\rd\rho-\rd\varrho)(p)\,\mHO{k}
 \left[\frac{\F{\eta}(k)(1-\re^{\ri k\cdot a})}{2p\cdot k} a^*(k) - \mathrm{h.c.}\right]\right)
 \\
 \exp\left(\-\ri e^2 \int_{\Hm\times\Hm\times H_0} 
 (\rd\rho-\rd\varrho)(p_1)\,(\rd\rho-\rd\varrho)(p_2)\mHO{k}\,
 \frac{|\F{\eta}(k)|^2\sin(k\cdot a)}{4(p_1\cdot k)(p_2\cdot k)}\right).
\end{multline}
The above operator is isometric and invertible, hence unitary. Thus, the operators $U_\rmod(\eta,a)$ are unitary. One easily verifies that  
\begin{equation}
 U(a_1) V(\eta,\eta_{a_2})  U(a_1)^{-1} = V(\eta_{a_1},\eta_{a_1+a_2}).
\end{equation}
Using the above identity and Eq.~\eqref{eq:intertwiners_properties} we show that $V(\eta,\eta_a)$ satisfies the following cocyle condition
\begin{equation}
 V(\eta,\eta_{(a_1+a_2)}) = V(\eta,\eta_{a_1}) U(a_1) V(\eta,\eta_{a_2})  U(a_1)^{-1}.
\end{equation}
This proves that $U_\rmod(\eta,a)$ is a unitary representation of the group of translations. It is easy to check that this representation is strongly continuous. At least formally, its generators $P_\rmod^\mu(\eta)$ are given by~\eqref{eq:energy_momentum_scalar}. In order to show that $P_\rmod^\mu(\eta)$, $\mu=0,1,2,3$, is a family of commuting self-adjoint operators and find its joint spectrum we make the following decomposition of the Fock-Hilbert space
\begin{equation}
 \cH=\oplus_{n=1}^\infty \cH_n,
 \quad
 \cH_n \simeq L_s^2(\Hm^{\times n},\rd\mu_\mass^{\times n})\otimes \Gamma_s(\mathfrak{h}_0),
\end{equation}
where $L_s^2(\Hm^{\times n},\rd\mu_\mass^{\times n})$ is the subspace of $L^2(\Hm^{\times n},\rd\mu_\mass^{\times n})$ consisting of functions symmetric under permutations of their arguments and $\Gamma_s(\mathfrak{h}_0)$ is the Fock space of photons in the scalar model. It turns out that $P_\rmod^\mu(\eta)$ is a decomposable operator in the sense of the definition given in Sec.~XIII.16 of~\cite{reedsimon4}. We have
\begin{equation}\label{eq:energy_momentum_decomposition}
 P_\rmod^\mu(\eta) = \oplus_{n=0}^\infty K_\rmod^{(n)\mu}(\eta),
\end{equation}
where
\begin{equation}
 K_\rmod^{(0)\mu}(\eta) \equiv K^{\mu} = \int \mHO{k}\, k^\mu a^*(k) a(k)
\end{equation}
and
\begin{equation}
 (K_\rmod^{(n)\mu}(\eta) \Psi)(p_1,\ldots,p_n) 
 = 
 \left(\sum_{j=1}^n p^\mu_j  + K^{(n)\mu}_\rmod(\eta,p_1,\ldots,p_n) \right) \Psi(p_1,\ldots,p_n).
\end{equation}
For $n\in\N_+$ and $p_1,\ldots,p_n\in\Hm$ the fiber operators
\begin{equation}
 K^{(n)\mu}_\rmod(\eta,p_1,\ldots,p_n)
 = \int \mHO{k}\, k^\mu \left(a^*(k) - \sum_{j=1}^n\frac{e\,\overline{\eta(k)}}{2p_j\cdot k} \right) \left(a(k)-\sum_{j=1}^n\frac{e\,\eta(k)}{2p_j\cdot k}\right)
\end{equation}
are essentially self-adjoint on $\textrm{Dom}\,K^0$ by the Nelson commutator theorem applied with the comparison operator $\id + K^0$. Using the method of Proposition 3.13 of~\cite{derezinski2003van} we show that for any $n\in\N_+$, $p_1,\ldots,p_n\in\Hm$ and $\mu=0,1,2,3$ the spectrum of $K^{(n)\mu}_\rmod(\eta,p_1,\ldots,p_n)$ is absolutely continuous. It is well-known that the joint spectrum of operators $K^\mu$, $\mu=0,1,2,3$, coincides with $\overline{V^+}$. The bound  
\begin{equation}
 g_{\mu\nu}\, (\Psi|K^{(n)\mu}_\rmod(\eta,p_1,\ldots,p_n) K^{(n)\nu}_\rmod(\eta,p_1,\ldots,p_n)\Psi) \geq 0
\end{equation}
implies that for any $n\in\N_+$ and $p_1,\ldots,p_n\in\Hm$ the joint spectrum of operators $K^{(n)\mu}_\rmod(\eta,p_1,\ldots,p_n)$, $\mu=0,1,2,3$, is contained in $\overline{V}_+$. Actually, using the method of the proof of Proposition 3.10 of~\cite{derezinski2003van} one can show that the joint spectrum coincides with $\overline{V}_+$. Thm. XIII.85 of~\cite{reedsimon4} implies the operators $P_\rmod^\mu(\eta)$ are self-adjoint on the domain specified in Sec.~XIII.16 of that reference. Using the decomposition~\eqref{eq:energy_momentum_decomposition} and the properties of the fiber operators it is straightforward to prove the claims about the joint spectrum of $P_\rmod^\mu(\eta)$.
\end{proof}

\subsection{QED}\label{sec:energy_momentum_qed}

\begin{dfn}\label{def:translation_qed}
Let $\eta$ be a profile (cf. Def.~\ref{dfn:profile}) and $\hat\varrho$ be a sector measure (cf. Def.~\ref{def:sector}). Set
\begin{equation}\label{eq:qed_translation_mod}
 \R^4 \ni a \mapsto  U_\rmod(\eta;a):=V(\eta,\eta_a) U(a) \in B(\cH),
\end{equation}
where $\eta_a(x) = \eta(x-a)$, $V(\eta,\eta_a)$ is the intertwining operator given by~\eqref{eq:dfn_intertwiner} and $U(a)$ is the standard representation of the translation group in the Fock space~$\cH$. The map~\eqref{eq:qed_translation_mod} induces a unique map
\begin{equation}\label{eq:qed_translation_mod_phys}
 \R^4 \ni a \mapsto  [U_\rmod(\eta;a)]\in B(\cH^{\mathrm{phys}}).
\end{equation}
\end{dfn}
By the theorem stated below the physical modified scattering matrix and interacting fields are covariant with respect to the representation of the spacetime translations defined above. In Thm.~\ref{thm:mod_translation_qed} we state some properties of this representation. The proofs of these theorems are similar to the proofs of the analogous theorems stated in the previous section and are omitted.
\begin{thm}
Assume that Conjecture~\ref{cnj:S_mod} holds true in the case of QED. For arbitrary $a\in \R^4$
\begin{equation}
\begin{aligned}
 [U_\rmod(\eta;a)] [S_\rmod(\eta)][U_\rmod(\eta;a)]^{-1} 
 =&\, [S_\rmod(\eta)]
 \\
 [U_\rmod(\eta;a)] [C_{\ret/\adv,\rmod}(\eta;h)][U_\rmod(\eta;a)]^{-1}
 =&\, [C_{\ret/\adv,\rmod}(\eta;h_a)].
\end{aligned} 
\end{equation}
\end{thm}
\begin{thm}\label{thm:mod_translation_qed}
(A) For any profile $\eta$ and a sector measure $\hat\varrho$ the map~\eqref{eq:qed_translation_mod_phys} is a strongly continuous unitary representation of the group of spacetime translations which is not unitarily equivalent to the standard representation $[U(a)]$. 
\\
(B) The generators of $[U_\rmod(\eta;a)]$ are given by
\begin{multline}
 [P_\rmod^\mu(\eta)]
 =
 \sum_{s=1,2} \int \mHO{k}\, k^\mu a^*(\eta;s,k) a(\eta;s,k) 
 \\
 +
 \sum_{\sigma=1,2}\int \mHm{p}\, 
 p^\mu (b^*(\sigma,p) b(\sigma,p) +d^*(\sigma,p) d(\sigma,p)), 
\end{multline}
where $a^\#(\eta;s,k)$ are defined by~\eqref{eq:asymp_a_qed_phys} and $b^\#(\sigma,p)$, $d^\#(\sigma,p)$ are defined in Sec.~\ref{sec:qed}.
\\
(C) The joint spectrum of $[P_\rmod^\mu(\eta)]$ coincides with the closed forward lightcone. If $\varrho_0=0$, then $[P_\rmod^\mu(\eta)]\Omega=0$ and $[P_\rmod^\mu(\eta)]|k,s) = k^\mu |k,s)$, where $|k,s)=a^*(k,s)\Omega$. Otherwise, there is no vacuum and massless one-particle states in the joint spectrum of $[P_\rmod^\mu(\eta)]$. Regardless of the choice of $\hat\varrho$ there are no massive one-particle states in the joint spectrum of $[P_\rmod^\mu(\eta)]$.
\end{thm}

\section{Physical interpretation of construction}\label{sec:physical_interpretation}

In this section we give the physical interpretation of the modified scattering matrix and the modified interacting fields. We also present a construction of the inclusive cross section that uses our scattering matrix.

\subsection{Space of asymptotic states}\label{sec:state_space}

The reformulation of the construction of the modifies S-matrix and interacting fields presented in this section makes transparent the physical interpretation of the asymptotic states. A similar reformulation of the modified scattering theory in the context of quantum mechanics was outlined in Sec.~\ref{sec:Dollard}.

Let $\mathcal{P}$ be the set of profiles $\eta$ (cf. Def.~\ref{dfn:profile}) and $\mathcal{Q}$ be the set of sector measures $\varrho$ (cf. Def.~\ref{def:varrho}). Moreover, in the case of the scalar model let $\cH$ be the standard Fock-Hilbert space. In the QED case let $\cH$ be equal to $1_q(Q)\cH^{\mathrm{phys}}$ for some $q\in\mathrm{sp}(Q)=\Z$, where $1_q(Q)$ is the projection onto the eigenspace of the charge operator $Q$ with eigenvalue $q$ and $\cH^{\mathrm{phys}}$ is the physical Fock space 
with a positive-definite inner product defined in Sec.~\ref{sec:qed}. In what follows we identify BRST invariant operators defined in the standard Fock space containing unphysical degrees of freedom  with operators they induced in $\cH^{\mathrm{phys}}$. 

For any $\varrho\in\mathcal{Q}$ we define the asymptotic Hilbert spaces $\hat\cH_{\rout/\rin}(\varrho)$ which, by definition, consist of the following vectors
\begin{equation}
 \hat\Psi\in \hat{\cH}_{\rin/\rout}(\varrho)
 \quad\textrm{iff}\quad
 \hat\Psi:\mathcal{P}\to \cH
 \quad\textrm{and}\quad
 \hat\Psi(\eta)=V_{\rout/\rin}(\eta,\eta')\hat\Psi(\eta'),
\end{equation}
where $V_{\rout/\rin}(\eta,\eta',\varrho)\equiv V_{\rout/\rin}(\eta,\eta')$ is the intertwining operator defined by Eq.~\eqref{eq:dfn_intertwiner} and in the case of QED $\cH=1_q(Q)\cH^{\mathrm{phys}}$ with $q=\varrho(\Hm)\in\Z$. The scalar product in $\hat{\cH}_{\rin/\rout}(\varrho)$ is given by
\begin{equation}
 (\hat\Psi|\hat\Psi'):=(\hat\Psi(\eta)|\hat\Psi'(\eta))
\end{equation}
and is independent of the choice of $\eta\in\mathcal{P}$ by the unitarity of $V_{\rout/\rin}(\eta,\eta')$. Let us define the vector $\hat\Omega\in\hat\cH_{\rout/\rin}(\varrho)$ by
\begin{equation}
 \hat\Omega\,:\mathcal{P}\ni\eta\mapsto \Omega\in\cH,
\end{equation}
where $\Omega$ is the ground state in the Fock space $\cH$. We also define a dense domain  $\hat\cD_{\rout/\rin}(\varrho)\subset \hat\cH_{\rout/\rin}(\varrho)$ consisting of vectors $\hat\Psi$ such that $\hat\Psi(\eta)\in\cD_2$ for any $\eta\in\mathcal{P}$. The domain $\cD_2\subset\cH$ was defined in Sec.~\ref{sec:domains}. The unbounded operators introduced below are meant to be defined on the domain $\hat\cD_{\rout/\rin}(\varrho)$.

Now we introduce various operators acting in the asymptotic Hilbert spaces. The scattering matrix and the interacting fields are uniquely defined by
\begin{equation}
\begin{gathered}
 \hat S(\varrho):\,\hat{\cH}_{\rin}(\varrho)\to\hat{\cH}_{\rout}(\varrho),
 \quad
 (\hat S(\varrho)\hat\Psi)(\eta):=S_\rmod(\eta)\hat\Psi(\eta),
 \\
 \hat C_{\ret/\adv}(h):\,\hat{\cH}_{\rin/\rout}(\varrho)\to\hat{\cH}_{\rin/\rout}(\varrho),
 \quad
 (\hat C_{\ret/\adv}(h)\hat\Psi)(\eta):=C_{\rmod,\ret/\adv}(\eta;h)\hat\Psi(\eta),
\end{gathered} 
\end{equation}
where $S_\rmod(\eta)$ and $C_{\rmod,\ret/\adv}(\eta;h)$ are given by Eqs.~\eqref{eq:dfn_S_op_mod_scalar}. In similar way we define the representation of the Poincar{\'e} group acting in the spaces $\hat\cH_{\rout/\rin}$,
\begin{equation}
 (\hat U(a)\hat\Psi)(\eta):=U_\rmod(\eta;a)\hat\Psi(\eta),
 \quad 
 \hat U(a)=\exp(\ri \hat P(\varrho)\cdot a),
\end{equation}
where $U_\rmod(\eta;a)$ was introduced in Def.~\ref{def:translation_scalar} and~\ref{def:translation_qed} in the case of the scalar model and QED, respectively. 

As follows from results of Sec.~\ref{sec:fields}, the long-range tail of the massless fields in completely determined by $\varrho$. Moreover, the photon creators and annihilators can be obtained with the use LSZ limits. They are given by
\begin{equation}
 \hat a^\#(f):\,\hat\cH_{\rout/\rin}(\varrho)\to \hat\cH_{\rout/\rin}(\varrho),
 \quad
 (\hat a^\#(f)\hat\Psi)(\eta) := a^\#(\eta;f)\hat\Psi(\eta),
\end{equation}
where $f\in\cS(\R^4)$ and $a(\eta;f)=\int\mHO{k}\,\overline{f(k)}\,a(\eta;k)$ with $a(\eta;k)$ defined by Eqs.~\eqref{eq:asymp_a_scalar} and~\eqref{eq:asymp_a_qed_phys}. In the case of QED the photon creators and annihilators depend on the polarization index $s=1,2$ which was omitted in the above definition.

According to our findings in the Sec.~\ref{sec:fields} and~\ref{sec:energy_momentum} there are no asymptotic states with just one electron (and no photons). Nevertheless, we can define the following operators
\begin{equation}
 \hat b^\#(\eta,\varrho_1;f):\,\hat\cH_{\rout/\rin}(\varrho)\to \hat\cH_{\rout/\rin}(\varrho\mp \varrho_1),
 \quad
 (\hat b^\#(\eta,\varrho_1;f)\hat\Psi)(\eta) := b^\#(p)\hat\Psi(\eta),
\end{equation}
which we call the electron creators and annihilators. In the above formula $f\in L^2(\Hm,\rd\mu_\mass)$, $\eta$ is a profile, $\varrho_1$ is a sector measure and $b^\#(f)$ stand for the standard electron annihilator or creator in the Fock space $\cH$. The operators $\hat b^\#(\eta,\varrho_1;f)$, $\eta\in\mathcal{P}$, are defined uniquely because any $\hat\Psi\in\hat\cH_{\rout/\rin}(\varrho)$ is determined uniquely by $\hat\Psi(\eta')\in\cH$ with arbitrary $\eta'\in\mathcal{P}$. In the case of QED we define in a similar way the positron creators and annihilators $\hat d^\#(\eta,\varrho_1;f)$ and assume that $\varrho_1(\Hm)=1$ for electrons and $\varrho_1(\Hm)=-1$ for positrons (there is no such constraint in the scalar model). The electron and positron annihilators in QED depend on the spin index $\sigma=1,2$ which was omitted. In the case of the scalar model the operators $\hat b^\#(\eta;f):=\hat b^\#(\eta,\varrho_1=0;f)$ act within one sector. Otherwise, they interpolate between sectors with different long-range tails of the massless scalar field. In QED the operators $\hat b^\#(\eta,\varrho_1;f)$ always interpolate between different sectors.

Summing up, we defined the following objects
\begin{equation}
 \hat{\cH}_{\rin/\rout}(\varrho),\quad 
 \hat S,\quad
 \hat C_{\ret/\adv}(h),\quad
 \hat U(a)=\re^{\ri\hat P\cdot a},\quad
 \hat a^\#(f),\quad
 \hat b^\#(\eta,\varrho_1;f).
\end{equation}
All of the operators depend on the sector $\hat{\cH}_{\rin/\rout}(\varrho)$ in which they act. The operators $\hat a^\#(f)$ satisfy the standard commutation relation. Similarly, the operators $\hat b^\#(\eta,\varrho_1;f)$ with fixed $\eta\in\mathcal{P}$ and $\varrho_1\in\mathcal{Q}$ satisfy the standard (anti-)commutation relation. However, as follows from Eqs.~\eqref{eq:asymp_a_scalar} and~\eqref{eq:asymp_a_qed_phys}, the operators $\hat a^\#(f)$ and $\hat b^\#(\eta,\varrho_1;f)$ do not commute. Indeed, we have
\begin{equation}
 [\hat a(k),\hat b(\eta;p)] = \frac{e\,\F{\eta}(k)}{2p\cdot k} \, \hat a(k),
 \quad
 [\hat a(k),\hat b^*(\eta;p)] = -\frac{e\,\F{\eta}(k)}{2p\cdot k} \, \hat a(k)
\end{equation}
in the case of the scalar model with $\varrho_1=0$ and
\begin{equation}
 [\hat a(s,k),\hat b^\#(\eta,\varrho_1;p)] = \pm e\,\F{\eta}(k)\, \hat a(s,k) 
 \int_{\Hm}(\rd\varrho_p(p')-\rd\varrho_1(p'))\,
 \frac{\varepsilon(s,k)\cdot p'}{p'\cdot k}
\end{equation}
in the case of QED, where $\varrho_p$ is the Dirac measure at $p\in\Hm$. Note that asymptotic creation and annihilation operators fulfilling similar commutation relation appear also in the model of QED investigated by Morchio and Strocchi~
\cite{morchio2016infrared}. We recall that the spectrum of the energy-momentum operators $\hat P^\mu$ is analysed in Sec.~\ref{sec:energy_momentum}. One shows that the operators $\hat a^\#(k)$ have energy momentum transfer on the light-cone,
\begin{equation}
 \hat U(a)\,a^*(k)\,\hat U(a)^{-1} = \re^{\ri k\cdot a}\,a^*(k),
 \quad
 \hat U(a)\,a(k)\,\hat U(a)^{-1} = \re^{-\ri k\cdot a}\,a(k).
\end{equation}
The energy-momentum transfer of the operators $\hat b^\#(\eta,p)$ is not sharp. The operators $a^\#(k)$ add or subtract one photon from a state on which they act. The operators $\hat b^\#(\eta,\varrho_1,p)$ add or subtract one electron together with an irremovable cloud of photons accompanying it. The vector $\hat\Omega\in\hat\cH_{\rout/\rin}(\varrho=0)$ is the translationally-invariant vacuum state. The vectors $\hat a^*(k)\hat\Omega\in \hat\cH_{\rout/\rin}(\varrho=0)$ describes a state with one photon of momentum $k$. The vector $\hat b^*(\eta,\varrho_1,p)\hat\Omega\in\hat\cH_{\rout/\rin}(\varrho_1)$ describes a state with one massive electron of momentum $p$ and a coherent cloud of photons depending on $p$, $\eta$, $\varrho_1$. In particular, the state $\hat b^*(\eta,\varrho_1,p)\hat\Omega$ have the energy-momentum content on and above the mass hyperboloid and is an improper eigenvector of the photon annihilation operator. For example, in the case of the scalar model with $\varrho_1=0$ and $\hat\Omega\in\cH_{\rout/\rin}(\varrho=0)$ it holds
\begin{equation}
 \hat a(k)\hat b^*(\eta,p)\hat\Omega = -\frac{e\,\eta(k)}{p\cdot k}\hat b^*(\eta,p)\hat\Omega.
\end{equation}

\subsection{Cross sections}\label{sec:cross_section}

In this section we recall the textbook construction of the standard differential cross sections which is applicable to models without long-range interactions. We follow~\cite{fredenhagenquantenfeldtheorie,peskin1995introduction} and for simplicity consider an interacting QFT containing only scalar particles of mass $\mass$. The definition of the cross section involves several idealization regarding the incoming state. First, since by clustering scattering events that involve more than two incoming particles are typically suppressed one usually restricts attention to states with two incoming particles. Second, one presumes that the target particle and the particle in the incident beam are uncorrelated and have some fixed sharp and distinct momenta $\underline{p}_1$ and $\underline{p}_2$, respectively. Third, one supposes that the distribution of the impact parameters of the incoming particle is uniform in the plane perpendicular to $\underline{p}_1$ and $\underline{p}_2$ (in the region of the scattering). Let $e_\mu$, $\mu=0,1,2,3$, be an orthonormal basis such that $\underline{p}_1\cdot e_1=\underline{p}_1\cdot e_2=0$, $\underline{p}_2\cdot e_1=\underline{p}_2\cdot e_2=0$. In order to construct the incoming state $\omega_{\underline{p}_1,\underline{p}_2}(\cdot)$ that satisfies the above assumptions consider the following vector in the Fock space
\begin{equation}
 \Psi_{\lambda,b,\underline{p}_1,\underline{p}_2} = b^*(f_{\lambda,0,\underline{p}_1})b^*(f_{\lambda,b,\underline{p}_2})\Omega,
\end{equation}
where $\lambda\in(0,1)$, $b\in\R^4$ is the impact parameter, $\Omega$ is the vacuum state, $b^*(f)$, $f\in L^2(\Hm,\rd\mu_\mass)$, is the creation operator and 
\begin{equation}\label{eq:cross_f}
 f_{\lambda,b,\underline{p}}(p) = (2E_\mass(\vec p))^{1/2} \lambda^{-3/2} h((\vec p-\vec{\underline{p}})/\lambda) \re^{\ri p\cdot b}
\end{equation}
for some $h\in C^\infty_{\mathrm{c}}(\R^3)$ such that 
\begin{equation}
 \int\frac{\rd^3 \vec p}{(2\pi)^3}\,|h(\vec p)|^2 = 1.
\end{equation}
Note that $\|\Psi_{\lambda,b,\underline{p}_1,\underline{p}_2}\|=1$. Assume that $K$ is a translationally-invariant observable, $K=U_\fr(a)KU_\fr(-a)$, such that
\begin{equation}\label{eq:K_form}
 (p'_1,p'_2|K|p''_1,p''_2) = (2\pi)^4 \delta(p'_1+p'_2-p''_1-p''_2)\,\mathcal{K}(p'_1,p'_2,p''_1,p''_2)
\end{equation}
with $\mathcal{K}$ continuous in some neighbourhood of $(\underline{p}_1,\underline{p}_2,\underline{p}_1,\underline{p}_2)\in\Hm^{\times 4}$. The incoming state is defined by the following limit
\begin{equation}\label{eq:cross_section_incoming_state}
 \omega_{\underline{p}_1,\underline{p}_2}(K)
 :=
 \lim_{\lambda\searrow0}\lim_{r\to\infty}\,
 \omega_{r,\lambda,\underline{p}_1,\underline{p}_2}(K)
 =\frac{\mathcal{K}(\underline{p}_1,\underline{p}_2,\underline{p}_1,\underline{p}_2)}{4E_\mass(\vec{\underline{p}}_1)E_\mass(\vec{\underline{p}}_2)\left|\frac{\vec{\underline{p}}_1}{E_\mass(\vec{\underline{p}}_1)} - \frac{\vec{\underline{p}}_2}{E_\mass(\vec{\underline{p}}_2)}\right|},
\end{equation}
where
\begin{equation}
 \omega_{r,\lambda,\underline{p}_1,\underline{p}_2}(K) 
 :=
 \int \rd\vartheta_r(b)\, (\Psi_{\lambda,b,\underline{p}_1,\underline{p}_2}|K\Psi_{\lambda,b,\underline{p}_1,\underline{p}_2})
\end{equation}
and the measure
\begin{equation}\label{eq:b_measure}
 \rd\vartheta_r(b) := \rd^4b\,\delta(b\cdot e_0)\delta(b\cdot e_3)\,\theta(r^2+b^2)
\end{equation}
describes the uniform distribution of impact parameters $b$ of the incident particle in the disk of radius $r$ in the plane perpendicular to $\underline{p}_1,\underline{p}_2$. In order to show the last equality in Eq.~\eqref{eq:cross_section_incoming_state} one uses the following identity
\begin{multline}
 \lim_{r\to\infty}\,
 \omega_{r,\lambda,\underline{p}_1,\underline{p}_2}(K) 
 =
 \int\mHm{p'_1}\mHm{p'_2}\mHm{p''_1}\mHm{p''_2}\,
 (2\pi)^4\delta(p'_1+p'_2-p''_1-p''_2)\,
 \\\times 
 (2\pi)\delta((p'_2-p''_2)\cdot e_1)\,
 (2\pi)\delta((p'_2-p''_2)\cdot e_2)\,
 |f_{\lambda,0,\underline{p}_1}(p'_1)|^2|f_{\lambda,0,\underline{p}_2}(p'_2)|^2 \,\mathcal{K}(p'_1,p'_2,p''_1,p''_2).
\end{multline}
Observe that the incoming state $\omega_{\underline{p}_1,\underline{p}_2}$ is not well-defined for $K=\id$ which violates the assumption about $K$ stated above. In particular, $\omega_{\underline{p}_1,\underline{p}_2}(\cdot)$ is not a state in the standard algebraic QFT sense~\cite{haag2012local} and perhaps it would be more appropriate to call it a weight.

The standard differential cross section is defined by the following formula
\begin{equation}
 \rd\sigma_{\underline{p}_1,\underline{p}_2}(p_1,\ldots,p_n) =  \omega_{\underline{p}_1,\underline{p}_2}(S^* \rd K(p_1,\ldots,p_n) S),
\end{equation}
where
\begin{equation}\label{eq:observable_dK}
 \rd K(p_1,\ldots,p_n)=\mHm{p_1}\ldots\mHm{p_n}~b^*(p_1)\ldots b^*(p_n)|\Omega)(\Omega|b(p_1)\ldots b(p_n)
\end{equation}
and $S$ is the standard scattering matrix which is well defined because of the assumption that the model under consideration contains only massive particles. Note that
\begin{equation}\label{eq:inv_matrix_elem}
 (p_1,\ldots,p_n|S|\underline{p}_1,\underline{p}_2)
 =
 (2\pi)^4 \delta(p_1+\ldots+p_n-\underline{p}_1-\underline{p}_2)\,
 \mathcal{S}(p_1,\ldots,p_n;\underline{p}_1,\underline{p}_2),
\end{equation}
where $\mathcal{S}$ is the invariant matrix element. It is typically continuous for non-exceptional momenta (i.e. for non-coinciding momenta and outside the thresholds). Using the above notation we arrive at the textbook formula for the differential cross section
\begin{multline}\label{eq:cross_section_std}
 \rd\sigma_{\underline{p}_1,\underline{p}_2}(p_1,\ldots,p_n)  = \mHm{p_1}\ldots\mHm{p_n}
 \\
 (2\pi)^4 \delta(p_1+\ldots+p_n-\underline{p}_1-\underline{p}_2)\,
 \frac{|\mathcal{S}(p_1,\ldots,p_n;\underline{p}_1,\underline{p}_2)|^2}{4E_\mass(\vec{\underline{p}}_1)E_\mass(\vec{\underline{p}}_2)\left|\frac{\vec{\underline{p}}_1}{E_\mass(\vec{\underline{p}}_1)} - \frac{\vec{\underline{p}}_2}{E_\mass(\vec{\underline{p}}_2)}\right|}.
\end{multline}
For fixed $\underline{p}_1,\underline{p}_2\in\Hm$ the differential cross section $\rd\sigma_{\underline{p}_1,\underline{p}_2}$ is a measure on a subset of $\Hm^{\times n}$. Unfortunately, we are not aware of any reference with a rigorous construction of the differential cross section in perturbation theory even in purely massive models (the main difficulty is to prove that the observable $S^*\rd K(p_1,\ldots,p_n) S$ is in the domain of definition of the incoming state $\omega_{\underline{p}_1,\underline{p}_2}$).

\subsection{Inclusive cross sections}\label{sec:inclusive_cross_section}

In this section we show how to obtain the inclusive differential cross section using the scattering matrix defined in the paper. Our discussion is formal. Nevertheless, we believe it demonstrates the compatibility of our construction with the standard procedure used in practical computations.

Let us first recall the standard method~\cite{yennie1961infrared,weinberg1965infrared,weinberg1995quantum,peskin1995introduction} of computing the inclusive cross sections in QED. For concreteness, we assume that the theory is regularized with the use of a positive photon mass $\epsilon$. The scattering matrix in such a theory is denoted by $S_\epsilon$ and is well-defined as long as $\epsilon>0$. The inclusive cross section is defined in the following way
\begin{equation}\label{eq:inclusive_std}
 \rd\sigma^{\mathrm{incl}}_{\underline{p}_1,\underline{p}_2}(E;p_1,\ldots,p_n):=\lim_{\epsilon\searrow 0}
 \lim_{\substack{r\to\infty\\\lambda\searrow0}}\,
 \omega_{r,\lambda,\underline{p}_1,\underline{p}_2}(S^*_{\epsilon}\rd K_E(p_1,\ldots,p_n)S_{\epsilon}),
\end{equation}
where the state $\omega_{r,\lambda,\underline{p}_1,\underline{p}_2}$ was introduced in Sec.~\ref{sec:cross_section}, $E\in(0,\mass)$ is a fixed threshold ($\mass$ is the mass of the electron) and
\begin{equation}\label{eq:observable_dK_E}
 \rd K_E(p_1,\ldots,p_n):=\mHm{p_1}\ldots\mHm{p_n}~
 b^*(p_1)\ldots b^*(p_n)1_{[0,E]}(P^0_\fr)b(p_1)\ldots b(p_n)
\end{equation} 
is the observable that corresponds to detecting $n$ massive particles with sharp momenta $p_1,\ldots,p_n$ by a detector with the energy sensitivity $E$ (one could easily consider more general observables describing a detection of massless particles with energies larger than the threshold). In the above formula, $P^0_\fr$ is the standard energy operator in the Fock space, $1_{[0,E]}$ is the characteristic function of the interval $[0,E]$ and $b^\#(p)$ are the standard creation and annihilation operators of massive particles. Note that the observable~\eqref{eq:observable_dK_E} is translationally invariant and in the case $E=0$ the observables~\eqref{eq:observable_dK_E} and~\eqref{eq:observable_dK} coincide. It is easy to see that the above formula for the inclusive cross section is equivalent to the formula~\eqref{eq:inclusive_overview}. 
According to the formula~\eqref{eq:inclusive_std} in order to determine the inclusive cross section in QED one first computes the inclusive cross section in QED with massive photons and subsequently takes the massless limit. It is crucial that the limit $\epsilon\searrow 0$ is taken after the limits $r\to\infty$, $\lambda\searrow0$. The scattering matrix $S_\epsilon$ diverges in the limit $\epsilon\searrow 0$ and the existence of the limit $\epsilon\searrow 0$ in~\eqref{eq:inclusive_std} relies on cancellations of IR divergences. It is well-known~\cite{yennie1961infrared,weinberg1965infrared} that such cancellation can occur only if the momenta of all incoming and outgoing particles are sharp. The latter condition is satisfied only in the limit $r\to\infty$, $\lambda\searrow0$. Consequently, the formula~\eqref{eq:inclusive_std} does not seam to have any physical interpretation in QED with no IR cutoff. 

Now we discuss the construction of the inclusive cross section in our framework. As we argued in Sec.~\ref{sec:physical_interpretation} the improper electron creation and annihilation operators $b^\#(\eta,\varrho_1;p)$ add or subtract an electron with the four-momentum $p$ together with a cloud of photons that is correlated with the momentum $p$ and depends on the profile $\eta$ and the sector measure $\varrho_1$. Note that if we choose $\varrho_1=\varrho_{\underline{p}}$ to be a Dirac measure at some reference four-momentum $\underline{p}$, then the cloud disappears if the electron momentum $p$ coincides with the reference momentum $\underline{p}$. Thus, in order to construct the state describing the incoming particles with no radiation we are led to consider the following family of normalized vectors 
\begin{equation}
 \hat \Psi_{\lambda,b,\underline{p}_1,\underline{p}_2} = \hat b^*(\eta,\varrho_{\underline{p}_1};f_{\lambda,0,\underline{p}_1})\hat b^*(\eta,\varrho_{\underline{p}_2};f_{\lambda,b,\underline{p}_2})\hat \Omega
 \in \hat\cH_{\rin}(\varrho_{\underline{p}_1}+\varrho_{\underline{p}_2}),
\end{equation}
where $\hat\Omega\in\cH_\rin(0)$ and $f_{\lambda,b,\underline{p}}$ is given by Eq.~\eqref{eq:cross_f}. The above vector describes two electrons (one could also consider an incoming state with one electron and one photon or an incoming state with two photons). In line with the definitions given in Sec.~\ref{sec:state_space} we have
\begin{equation}
 \hat \Psi_{\lambda,b,\underline{p}_1,\underline{p}_2}(\eta) = b^*(f_{\lambda,0,\underline{p}_1})\hat b^*(f_{\lambda,b,\underline{p}_2})\hat \Omega
 \in \cH,
\end{equation}
where $\cH$ is the standard Fock space. The incoming state is defined in the following way
\begin{equation}\label{eq:inclusive_in}
 \omega_{\underline{p}_1,\underline{p}_2}(\hat K) 
 =
 \lim_{\substack{r\to\infty\\\lambda\searrow0}}
 \int \rd\vartheta_r(b)\, (\hat\Psi_{\lambda,b,\underline{p}_1,\underline{p}_2}|\hat K\hat\Psi_{\lambda,b,\underline{p}_1,\underline{p}_2})
 =
 \frac{\hat{\mathcal{K}}(\underline{p}_1,\underline{p}_2,\underline{p}_1,\underline{p}_2)}{4E_\mass(\vec{\underline{p}}_1)E_\mass(\vec{\underline{p}}_2)\left|\frac{\vec{\underline{p}}_1}{E_\mass(\vec{\underline{p}}_1)} - \frac{\vec{\underline{p}}_2}{E_\mass(\vec{\underline{p}}_2)}\right|},
\end{equation}
where the distribution of the impact parameters $\rd\vartheta_r(b)$ is given by Eq.~\eqref{eq:b_measure}. The above limit exists for sufficiently regular translationally invariant observables~$\hat K$ such that
\begin{multline}
 (2\pi)\delta((p'_2-\underline{p}_2)\cdot e_1)\,
 (2\pi)\delta((p'_2-\underline{p}_2)\cdot e_2)~
 (p'_1,p'_2|\hat K|\underline{p}_1,\underline{p}_2) 
 \\
 =
 (2\pi)\delta((p'_2-\underline{p}_2)\cdot e_1)\,
 (2\pi)\delta((p'_2-\underline{p}_2)\cdot e_2)~
 (2\pi)^4 \delta(p'_1+p'_2-\underline{p}_1-\underline{p}_2)
 ~\hat{\mathcal{K}}(\underline{p}_1,\underline{p}_2,\underline{p}_1,\underline{p}_2).
\end{multline}

Now we define the translationally invariant observable that describes detecting of hard particles with sharp momenta $p_1,\ldots,p_n$ by a detector with the sensitivity $E>0$. First, we note that the improper electron annihilator or creator $b^\#(\eta,\varrho_1;p)$ with fixed four-momentum $p$ can be interpreted as a form on $\hat\cD_{\rout/\rin}(\varrho\mp\varrho_1)^*\times \hat\cD_{\rout/\rin}(\varrho)$. It is easy to see that formally $\hat b^\#(p):=b^\#(\eta,\varrho_p;p)$ creates or annihilates just an electron (without changing the photon content of the state). Observe that for $\hat\Psi\in\hat\cH_{\rout/\rin}(\varrho)$ and $\Psi'\in \hat\cH_{\rout/\rin}(\varrho\mp\varrho_p)$ it holds
\begin{equation}
 (\hat\Psi|\hat b^*(p)\hat\Psi)
 =(\Hat\Psi'(\eta)|b^*(p)\hat\Psi'(\eta)),
 \quad (\hat\Psi|\hat b(p)\hat\Psi)
 =(\Hat\Psi'(\eta)|b(p)\hat\Psi'(\eta)),
\end{equation}
where $b^\#(p)$ are the standard annihilation and creation operators. One verifies that the forms $\hat b^\#(p)$ have sharp energy-momentum transfer on the mass hyperboloid
\begin{equation}
\begin{aligned}
 \hat U(\rho+\rho_p;a)\hat b^*(p) \hat U(\rho;a)^{-1} 
 =\re^{\ri p\cdot a}\, \hat b^*(p),
 \quad 
 \hat U(\rho-\rho_p;a)\hat b(p) \hat U(\rho;a)^{-1} =\re^{-\ri p\cdot a}\,  \hat b(p), 
\end{aligned} 
\end{equation}
where $\hat U(a)\equiv \hat U(\rho;a)$ is the translation operators. In similar manner we define
\begin{equation}\label{eq:observable_dK_E_hat}
 \rd \hat K_E(p_1,\ldots,p_n)
 :=
 \mHm{p_1}\ldots\mHm{p_n}
 \,\hat b^*(p_1) \ldots \hat b^*(p_n) 
 1_{[0,E]}(\hat P^0) \hat b(p_1) \ldots \hat b(p_n),
\end{equation}
where $1_{[0,E]}$ is the characteristic function of the interval $[0,E]$ and $E>0$. We interpret the above expression as a form $\hat\cD_{\rout/\rin}(\varrho)^*\times \hat\cD_{\rout/\rin}(\varrho)$ which takes values in measures on $\Hm^{\times n}$. Observe that it holds
\begin{equation}
\begin{aligned}
 \hat U(a)\,\rd \hat K_E(p_1,\ldots,p_n)\, \hat U(a)^{-1}
 &=\rd \hat K_E(p_1,\ldots,p_n),
 \\
 \hat U(a)\,\hat S^* \rd \hat K_E(p_1,\ldots,p_n) \hat S\, \hat U(a)^{-1}
 &=\hat S^*\rd \hat K_E(p_1,\ldots,p_n) \hat S.
\end{aligned} 
\end{equation}

The inclusive cross section in our framework is defined by the following formula
\begin{equation}\label{eq:inc}
 \rd\sigma_{\underline{p}_1,\underline{p}_2}(p_1,\ldots,p_n) =  \omega_{\underline{p}_1,\underline{p}_2}(\hat S^* \rd \hat K_E(p_1,\ldots,p_n)\hat S).
\end{equation}
Recall that $(\hat S\hat\Psi)(\eta)=S_\rmod(\eta)\hat\Psi(\eta)$ and $S_\rmod(\eta)$ is defined by the adiabatic limit of $S_\rmod(\eta,g_\epsilon) = R(\eta,g_\epsilon)S_\rout^\ras(\eta,g_\epsilon)S(g)S_\rin^\ras(\eta,g_\epsilon)R(\eta,g_\epsilon)^{-1}$, where $S_{\rout/\rin}^\ras(\eta,g)$ and $R(\eta,g)$ are the Dollard modifiers and the sector operator introduced in Sec.~\ref{sec:def_modified} and $S(g)$ is the standard Bogoliubov S-matrix. By the following two lemmas the inclusive cross section~\eqref{eq:inc} formally coincides with the inclusive cross section obtained with the use of the standard scattering matrix.
\begin{lem}
Let $K$ be a translationally invariant observable of the form~\eqref{eq:K_form} with a kernel $\mathcal{K}$ which is continuous in some neighbourhood of $\underline{p}_1,\underline{p}_2,\underline{p}_1,\underline{p}_2$. The limits
\begin{equation}
\begin{gathered}
 \lim_{\lambda\searrow0}\lim_{r\to\infty}\int \rd\vartheta_r(b)\,
 (\hat \Psi_{\lambda,b,\underline{p}_1,\underline{p}_2}(\eta)|K\hat \Psi_{\lambda,b,\underline{p}_1,\underline{p}_2}(\eta)),
 \\
 \lim_{\lambda\searrow0}\lim_{r\to\infty}\int \rd\vartheta_r(b)\,
 (\hat \Psi_{\lambda,b,\underline{p}_1,\underline{p}_2}(\eta)| S_\rin^\ras(\eta,g_\epsilon)R(\eta,g_\epsilon)^{-1} K R(\eta,g_\epsilon)S_\rin^\ras(\eta,g_\epsilon)^{-1} \hat \Psi_{\lambda,b,\underline{p}_1,\underline{p}_2}(\eta))
\end{gathered}
\end{equation}
exist and coincide.
\end{lem}
\begin{lem}
For every threshold $E\in(0,\mass)$, profile $\eta$ and $\hat\Psi,\hat\Psi'\in\hat \cD_{\rout/\rin}(\varrho)$ it holds 
\begin{multline}
 (\hat\Psi|\rd \hat K_E(p_1,\ldots,p_n)\hat\Psi')
 \\
 =
 \lim_{\epsilon\searrow 0}\,
 (\hat\Psi(\eta)|
 R(\eta,g_\epsilon) S_\rout^\ras(\eta,g_\epsilon)
 \rd K_E(p_1,\ldots,p_n)
 S_\rout^\ras(\eta,g_\epsilon)^{-1}
 R(\eta,g_\epsilon)^{-1}
 \hat\Psi'(\eta)),
\end{multline}
where $\rd K_E$ and $\rd\hat K_E$ are defined by Eqs.~\eqref{eq:observable_dK_E} and~\eqref{eq:observable_dK_E_hat}.
\end{lem}
To prove the above lemmas one uses the properties of the Dollard modifiers and the sector operator stated in Sec.~\ref{sec:def_modified}.

\section{Summary and outlook}\label{sec:summary}

We investigated the infrared structure of models of perturbative QFT with long range interactions. We proposed a rigorous perturbative method of constructing the Hilbert space of asymptotic states, the scattering matrix and the retarded and advanced interacting fields in quantum electrodynamics and a model of interacting scalar fields with long-range interactions. We considered a large class of physically-relevant superselection sectors. Our method uses the modified scattering theory~\cite{dollard1964asymptotic} originally developed in quantum mechanics and is inspired by the works of Bogoliubov~\cite{bogoliubov1959introduction}, Kulish and Faddeev~\cite{kulish1970asymptotic} and Morchio and Strocchi~\cite{morchio2016dynamics,morchio2016infrared}. One of the nice features of our construction is a clear separation between the ultraviolet and infrared problem. We do not introduce any ultraviolet regularization and solve the ultraviolet problem using the Epstein-Glaser renormalization technique. Our modified scattering matrix and the modified interacting fields are defined with the use of the adiabatic limit. The existence of this limit is well motivated on physical grounds and was proved in the paper for corrections of low orders. We investigated the physical properties of our construction and the relation to the approach used in practical computations. We proved the translational covariance of the construction and determined the spectrum of the energy-momentum operators. It is likely that in the case of the scalar model the construction is also covariant with respect to Lorentz transformations. In the case of QED the Lorentz covariance is well-known to be broken in sectors with nonzero total electric charge~\cite{frohlich1979infrared,frohlich1979charged}. 

We expect that our method can be generalized to other models of perturbative quantum field theory with interaction vertices containing two massive and one massless field. Examples of such models are the scalar and pseudo-scalar Yukawa theories with a massive Dirac field coupled to a massless scalar or pseudo-scalar field, respectively. In fact, in the case of the later model it is plausible that the adiabatic limit of the standard Bogoliubov scattering matrix and interacting fields exists and no modifications are necessary. Another interesting open problem is the construction of charged interacting fields in QED, such as the interacting Dirac field, which are inevitably nonlocal. It would be also interesting to investigate whether one can express elements of our modified scattering matrix in terms of the Green functions with the use of an appropriately generalized LSZ formula.

\section*{Acknowledgements}

The financial support of the National Science Center, Poland, under the grant UMO-2017/25/N/ST2/01012 is gratefully acknowledged. I wish to thank Wojciech Dybalski and Andrzej Herdegen for helpful discussions and useful comments. I am also grateful to Michael Dütsch, Klaus Fredenhagen, José Gracia-Bondía and Stefan Hollands for helpful remarks and to Markus Fröb for hints on literature.

\appendix

\section{Value of a distribution at a point}\label{sec:value_dist}

\begin{dfn}\label{def:lojasiewicz}
A~distribution $t\in\cS'(\R^N)$ has a~value $t(0)\in\C$ at the origin iff the limit
\begin{equation}
 t(0):=\lim_{\epsilon\searrow0}\int \frac{\rd^N q}{(2\pi)^N}\, t(q) g_\epsilon(q)
\end{equation}
exists for any $g\in\cS(\R^N)$ such that $\int \frac{\rd^N q}{(2\pi)^N} g(q)=1$, $g_\epsilon(q)=\epsilon^{-N} g(q/\epsilon)$. 
\end{dfn}
The above definition of the value of a distribution at a point is usually attributed to {\L}ojasiewicz~\cite{lojasiewicz1957valeur}. The value $t(0)$ is also called the adiabatic limit of the distribution $t$ at $0$. 

\begin{thm}\label{thm:dist_value}
Let $h$ be a smooth function of at most polynomial growth. If a distribution $t\in\cS'(\R^N)$ has a~value $t(0)\in\C$ at the origin, then the distribution $h(q)t(q)$ has a value $h(0)t(0)$ at the origin.
\end{thm}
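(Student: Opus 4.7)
The plan is a three-step reduction: rescale the test function, split off the leading constant of $h$, and combine the remaining vanishing term via the Banach--Steinhaus theorem.

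First, for $g\in\cS(\R^N)$ with $\int\frac{\rd^N q}{(2\pi)^N}g(q)=1$, the substitution $q=\epsilon u$ gives
\begin{equation*}
 \int\frac{\rd^N q}{(2\pi)^N}\,h(q)t(q)g_\epsilon(q)
 =
 \int\frac{\rd^N u}{(2\pi)^N}\,h(\epsilon u)\,t(\epsilon u)\,g(u).
\end{equation*}
Writing $T_\epsilon(u):=t(\epsilon u)\in\cS'(\R^N)$, Definition~\ref{def:lojasiewicz}, after extension by linearity, asserts that $\int\frac{\rd^N u}{(2\pi)^N}\,t(\epsilon u)\psi(u)\to t(0)\int\frac{\rd^N u}{(2\pi)^N}\,\psi(u)$ for every $\psi\in\cS(\R^N)$, i.e.\ $T_\epsilon\to t(0)$ weakly in $\cS'(\R^N)$.

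Second, I would decompose $h(\epsilon u)=h(0)+r_\epsilon(u)$ with $r_\epsilon(u):=h(\epsilon u)-h(0)$. The contribution of the constant piece is $h(0)\int\frac{\rd^N u}{(2\pi)^N}\,t(\epsilon u)g(u)\to h(0)t(0)$ by the hypothesis, so it suffices to establish that $\int\frac{\rd^N u}{(2\pi)^N}\,t(\epsilon u)\,r_\epsilon(u)\,g(u)\to 0$ as $\epsilon\searrow0$.

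Third, I would verify that $r_\epsilon g\to 0$ in the Schwartz topology. From Leibniz,
\begin{equation*}
 \partial^\beta(r_\epsilon g)=r_\epsilon\,\partial^\beta g
 +\sum_{0\neq\gamma\leq\beta}\binom{\beta}{\gamma}\epsilon^{|\gamma|}(\partial^\gamma h)(\epsilon\cdot)\,\partial^{\beta-\gamma}g,
\end{equation*}
and each $\gamma\neq 0$ term carries an explicit factor $\epsilon^{|\gamma|}$ that is controlled uniformly on $\R^N$ because the polynomial growth of $\partial^\gamma h$ is dominated by the Schwartz decay of $\partial^{\beta-\gamma}g$. For the leading term $r_\epsilon\,\partial^\beta g$ one splits the supremum over $|u|\leq\epsilon^{-1/2}$ and $|u|>\epsilon^{-1/2}$: on the small region the mean-value theorem gives $|r_\epsilon(u)|\leq\epsilon|u|\sup_{|y|\leq\epsilon^{1/2}}|\nabla h(y)|$, while on the complementary region the bound $|r_\epsilon(u)|\leq C(1+|u|)^M+|h(0)|$ is overwhelmed by any fixed power of $(1+|u|)^{-K}$ coming from the Schwartz factor $u^\alpha\partial^\beta g(u)$. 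Together these estimates yield $\sup_u|u^\alpha\partial^\beta(r_\epsilon g)(u)|\to 0$ for every pair $(\alpha,\beta)$.

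Finally, since $\cS(\R^N)$ is a Fr{\'e}chet space and $T_\epsilon$ converges pointwise on $\cS(\R^N)$, the Banach--Steinhaus theorem provides a continuous seminorm $p$ on $\cS(\R^N)$ and a constant $C$ with $|\langle T_\epsilon,\phi\rangle|\leq C\,p(\phi)$ uniformly in $\epsilon\in(0,1]$. Applying this with $\phi=r_\epsilon g$ and using $p(r_\epsilon g)\to 0$ concludes the argument. The principal subtlety is this last passage: one cannot simply interchange the limit $\epsilon\searrow 0$ with the pairing, since both the distribution and the test function are varying simultaneously, and equicontinuity from Banach--Steinhaus is what rescues the interchange. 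The Schwartz-topology estimate for $r_\epsilon g$ is routine but must be executed with care, using both the smoothness of $h$ near $0$ and the polynomial growth at infinity.
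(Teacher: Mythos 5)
Your proof is correct, but it takes a genuinely different route from the one in the paper. The paper's argument is "hard": it fixes the order $M$ of the continuity estimate $|\int t\,f|\leq\const\sum_{|\alpha|+|\beta|\leq M}\sup|q^\alpha\partial^\beta f|$ for the given $t$, Taylor-expands $h$ at the origin to order $K=M+N+1$, shows by explicit seminorm estimates that the Taylor remainder $h_K$ makes $h_K t$ have value zero at the origin, and disposes of the monomial terms via $q^\alpha g_\epsilon=\epsilon^{|\alpha|}g^\alpha_\epsilon$. You instead split off only the zeroth-order term $h(0)$ and absorb the entire remainder $r_\epsilon(u)=h(\epsilon u)-h(0)$ into the test function, compensating for the lack of high-order vanishing by the equicontinuity of the rescaled family $T_\epsilon=t(\epsilon\,\cdot)$ supplied by Banach--Steinhaus; your identification of the simultaneous variation of distribution and test function as the crux, and your two-region estimate showing $r_\epsilon g\to0$ in $\cS(\R^N)$, are both sound. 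What each approach buys: the paper's is elementary and self-contained (it uses nothing beyond the defining seminorm bound of a tempered distribution and yields an explicit rate tied to the order of $t$), while yours is softer and avoids tuning the Taylor order to the distribution, at the price of invoking the uniform boundedness principle on the Fr\'echet space $\cS(\R^N)$. Two small points you gloss over, neither fatal: pointwise boundedness of $\{T_\epsilon\}_{\epsilon\in(0,1]}$ requires, besides convergence as $\epsilon\searrow0$, the (easy) continuity of $\epsilon\mapsto\langle T_\epsilon,\psi\rangle$ on compact subsets of $(0,1]$; and, as in the paper's own proof, the hypothesis must be read as saying that \emph{all derivatives} of $h$ have polynomial growth, since your Leibniz terms involve $\partial^\gamma h$.
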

\begin{proof}
If $t\in\cS'(\R^N)$, then there exist $M\in\N_0$ such that
\begin{equation}
 \left|\int \frac{\rd^N q}{(2\pi)^N}\, t(q) f(q) \right| \leq
 \const \sum_{\substack{\alpha,\beta\\|\alpha|+|\beta|\leq M}}
 \sup_{q\in\R^N}\, \left| q^\alpha \partial^\beta f(q) \right|.
\end{equation}
For $K\in\N_0$ set
\begin{equation}
 h_K(q) = h(q) - \sum_{\substack{\alpha\\|\alpha|\leq K}} \frac{q^\alpha}{\alpha!} (\partial^\alpha h)(0).
\end{equation}
It holds
\begin{multline}
 \sup_{q\in\R^N}\,\left|q^\alpha\partial^\beta (h_K(q) g_\epsilon(q))\right|
 =
 \epsilon^{|\alpha|-|\beta|-N} \sup_{q\in\R^N}\,\left|q^\alpha\partial^\beta (h_K(\epsilon q) g(q))\right|
 \leq \const\,\epsilon^{|\alpha|-|\beta|-N+K},
\end{multline}
where the constant is independent of $\epsilon$. Let $K=M+N+1$. The distribution $h_K(q)t(q)$ has the value zero at the origin. Thus,
\begin{equation}
 \lim_{\epsilon\searrow0}\int \frac{\rd^N q}{(2\pi)^N}\, h(q)t(q) g_\epsilon(q)
 =
 \sum_{\substack{\alpha\\|\alpha|\leq K}}
 \frac{1}{\alpha!} (\partial^\alpha h)(0)~
 \lim_{\epsilon\searrow0}\int \frac{\rd^N q}{(2\pi)^N}\, t(q) q^\alpha g_\epsilon(q)
\end{equation}
It holds
\begin{equation}\label{eq:thm_dist_value}
 \int \frac{\rd^N q}{(2\pi)^N}\, t(q) q^\alpha g_\epsilon(q) =
 \epsilon^{|\alpha|} \int \frac{\rd^N q}{(2\pi)^N}\, t(q) g^\alpha_\epsilon(q), 
\end{equation}
where $g^\alpha(q):=q^\alpha g(q)$ and $g^\alpha_\epsilon(q):=\epsilon^{-N} g^\alpha(q/\epsilon)$. Since by the assumption the distribution~$t$ has a value at zero, the limit $\epsilon\searrow0$ of the expression~\eqref{eq:thm_dist_value} vanishes if $|\alpha|\geq1$ and is equal to $t(0)$ if $\alpha=0$. This finishes the proof.
\end{proof}

\section{BCH formula and Magnus expansion}\label{sec:BCH_Magnus}

Let $A$ and $B$ be operators such that $[A,[A,B]]=[B,[A,B]]=0$. The following identity is known as the BCH formula
\begin{equation}\label{eq:BCH}
 \exp\left( A + B\right) = \exp(A) \exp(B) \exp\left(-\frac{1}{2}[A,B]\right).
\end{equation}
Let $t\mapsto A(t)$ a be one-parameter family of operators such that $[A(t),[A(t'),A(t'')]]=0$. The following identity is a special case of the Magnus expansion~\cite{blanes2009magnus}
\begin{multline}\label{eq:magnus}
 \aTexp\left(-\ri \int \rd t \, A(t)\right)
 \\
 =\exp\left(-\ri \int \rd t \, A(t)\right)
 \exp\left( \frac{1}{2}\int \rd t_1\int \rd t_2 \, \theta(t_1-t_2)  [A(t_1),A(t_2)] \right),
\end{multline}
where $\aTexp$ is the anti-time-ordered exponential.

\section{Long-range tail}\label{sec:long_rang_tail}

\begin{dfn}\cite{buchholz1986gauss}\label{dfn:long_range_tail}
Let $h\in C^\infty_{\mathrm{c}}(\R^4)$ be supported in the spacelike complement of the origin in the Minkowski space. For $R>0$ set $h_R(x):=h(x/R)$. Consider a field $B\in\cS'(\R^4,L(\cD))$ and assume that for some $n\in\N_+$ and all $\Psi,\Psi'\in\cD$ the limit
\begin{equation}
 \lim_{R\to\infty}\,R^n\, (\Psi'|B(h_R)\Psi) =: (\Psi'|B^n_{\mathrm{lr}}(h)\Psi) 
\end{equation}
exists and defines the field $B^n_{\mathrm{lr}}\in\cS'(\R^4,L(\cD))$. The field $B^n_{\mathrm{lr}}$ is homogeneous of degree $-n$ and is called the long-range tail of $B$. The long-range tail is a classical observable, i.e., it commutes with all fields that can be localized in any bounded region of the Minkowski space.
\end{dfn}

\section{Incoming and outgoing LSZ fields}\label{sec:LSZ}

\begin{dfn}
Let $f\in\cS(\R^4)$ be such that $\F{f}$ is supported outside the origin. Set
\begin{equation}
 f_{m,t}(\vec x):=\int\frac{\rd^3 \vec p}{(2\pi)^3\, 2E_m(\vec p)}\,
 \exp(-\ri E_m(\vec p)  t + \ri \vec p\cdot \vec x)
 \,\F{f}(E_m(\vec{p}),\vec p),
\end{equation}
where $E_m(\vec p):=\sqrt{|\vec p|^2+m^2}$. The LSZ limits of the field $B(x)$ are by definition the following limits
\begin{gather}
 \lim_{t\to\pm\infty}\,(-\ri) \int\rd^3\vec x~ f_{m,t}(\vec x)\,
 \overset{\leftrightarrow}{\partial_t}\,(\Psi|B(t,\vec{x})\Psi')
 =:(\Psi|B_m^{(+)}(f)\Psi'), 
 \\
 \lim_{t\to\pm\infty}\,\ri \int\rd^3\vec x~ \overline{f_{m,t}(\vec x)}\,
 \overset{\leftrightarrow}{\partial_t}\,(\Psi|B(t,\vec{x})\Psi')
 =:(\Psi|B_m^{(-)}(f)\Psi'),  
\end{gather}
where $\Psi,\Psi'\in\cD_0$ are arbitrary. If the above limits exist, they can be used to determine the components $B_m^{(\pm)}(f)$ of the field $B(x)$ which have the energy-momentum transfer contained in the forward/backward mass hyperboloid $\pm\Hm$. The operators $B_m^{(\pm)}(f)$ are responsible for the creation or annihilation of particles of mass $m$. 
\end{dfn}
For example, let $\psi(x)$ be the free scalar field of mass $\mass$ (cf. Equation~\eqref{eq:phi_def}). The limits
\begin{equation}
 \lim_{t\to\pm\infty}\,(-\ri) \int\rd^3\vec x~ f_{m,t}(\vec x)\,
 \overset{\leftrightarrow}{\partial_t}\,(\Psi|\psi(t,\vec{x})\Psi') 
\end{equation}
exist for all $\Psi,\Psi'\in\cD_0$ and are equal to $(\Psi|b^*(f)\Psi')$ if $m=\mass$ and vanish otherwise. The free field built with the use of the creation and annihilation operators obtained by taking the LSZ limits is denoted by $\psi_{\textrm{LSZ},\rout/\rin}(x)$ and called the outgoing or incoming LSZ field. In the case of the above example we have $\psi_{\textrm{LSZ},\rout/\rin}(x)=\psi(x)$.

\section{Propagators}\label{sec:propagators}

Let $\phi$ be the free real scalar field, $A_\mu$ be the free vector field in the Feynman gauge and $\psi$ be the free Dirac spinor field. The fields have mass $m\geq0$.
\begin{itemize}
 \item The commutator functions
 \begin{gather}
 [\phi(x),\phi(y)]=-\ri D_m(x-y),
 \quad
 [A_\mu(x),A_\nu(y)]=\ri g_{\mu\nu} D_m(x-y),
 \\
 [\psi_a(x),\overline{\psi}_b(y)] = -\ri S_{m,ab}(x-y):=
 -\ri(\ri \slashed{\partial}_x+m)_{ab} D_m(x-y),
 \end{gather}
\begin{equation}
 D_m(x) 
 := 
 \frac{\ri}{(2\pi)^3} \int \rd^4 k \, \delta(k^2-m^2) \sgn(k^0) \exp(-\ri k\cdot x),
\end{equation}
\begin{equation}
 D_0(x) = \frac{1}{2\pi} \sgn(x^0) \delta(x^2).
\end{equation}
\item The Wightman two point functions
 \begin{gather}
 (\Omega|\phi(x)\phi(y)\Omega)=-\ri D^{(+)}_m(x-y),
 \quad
 (\Omega|A_\mu(x)A_\nu(y)\Omega)=\ri g_{\mu\nu} D^{(+)}_m(x-y),
 \\
 (\Omega|\psi_a(x)\overline{\psi}_b(y)\Omega) = -\ri S^{(+)}_{m,ab}(x-y):=
 -\ri(\ri \slashed{\partial}_x+m)_{ab} D^{(+)}_m(x-y).
 \end{gather}
\begin{equation}
 D_m^{(+)}(x) 
 :=
 \frac{\ri}{(2\pi)^3} \int \rd^4 k \, \delta(k^2-m^2) \theta(k^0) \exp(-\ri k\cdot x) 
\end{equation}
\item The Feynman propagators
 \begin{gather}
 (\Omega|T(\phi(x),\phi(y))\Omega)=-\ri D^{F}_m(x-y),
 \\
 (\Omega|T(A_\mu(x),A_\nu(y))\Omega)=\ri g_{\mu\nu} D^{F}_m(x-y),
 \\
 (\Omega|\T(\psi_a(x),\overline{\psi}_b(y))\Omega) = -\ri S^{F}_{ab}(x-y):=-\ri(\ri \slashed{\partial}_x+m)_{ab} D^{F}_m(x-y),
 \end{gather}
\begin{equation}
 D^F(x):=D^{(+)}(x) + D^\adv(x) =
 \int \frac{\rd^4 k}{(2\pi)^4} \, \frac{\exp(-\ri k\cdot x)}{m^2-k^2-\ri \zerop}.
\end{equation}
\item The retarded and advanced Green functions
\begin{equation}
 D_m^{\ret/\adv}(x)
 := 
\int\frac{\rd^4 k}{(2\pi)^4}\, \frac{\exp(-\ri k\cdot x)}{m^2-k^2\mp\ri \zerop \sgn k^0},
 \quad
 D^{\ret/\adv}_0(x) =  \frac{1}{2\pi} \theta(\pm x^0) \delta(x^2).
\end{equation}
\item The Dirac propagators
\begin{equation}
 D_m^D(x) 
 := \frac{1}{2}\left( D_m^\ret(x) + D_m^\adv(x) \right),
 \quad
 D_0^D(x) = \frac{1}{4\pi} \delta(x^2).
\end{equation}
\end{itemize}

\end{document}